\newtheorem{corollary}{Corollary}
\newtheorem{lemma}{Lemma}
\newtheorem{remark}{Remark}
\newtheorem{theorem}{Theorem}
\newtheorem{prop}{Proposition}
\newtheorem{fact}{Fact}
\newcommand{\R}{\mathbb{R}}
\newcommand{\OPT}{\textsc{OPT}}
\newcommand{\N}{\mathbb{N}}
\newcommand{\remove}[1]{}
\newcommand{\nullvec}{\mathbf{0}}
\newcommand{\calC}[0]{{\cal C}}
\newcommand{\onevec}{\mathbf{1}}
\newcommand{\evec}{\mathbf{e}}
\newcommand{\dvec}{\mathbf{d}}
\newcommand{\uvec}{\mathbf{u}}
\newcommand{\vvec}{\mathbf{v}}
\newcommand{\wvec}{\mathbf{w}}
\newcommand{\cvec}{\mathbf{c}}
\newcommand{\todo}[1]{\marginpar{\tiny \flushleft{#1}}}
\title{Approximation Algorithms for Clustering via Weighted  Impurity Measures}
\author{Ferdinando Cicalese\\  University of Verona, Italy\\
{\tt ferdinano.cicalese@univr.it} \and 
Eduardo Laber \\PUC-Rio, Brazil\\ {\tt laber@inf.puc-rio.br}}
\date{}
\begin{document}
	\maketitle

\begin{abstract}
An impurity measures  $I: \mathbb{R}^k \mapsto  \mathbb{R}^+$ is a function that assigns a $k$-dimensional vector $\vvec$ to a non-negative value $I(\vvec)$ so that the more homogeneous $\vvec$, with respect to the values of its coordinates, the larger its impurity. Well known examples of impurity measures are the Entropy and the Gini impurities. We study the problem of clustering based on impurity measures:
given a collection of $n$ many $k$-dimensional vectors $V \subset \N^k$ with non-negative integer coordinates and an impurity measure $I$, the goal is to find a  partition ${\cal P}$ of $V$ into $L$  groups  $V_1,\ldots,V_L$   so as to minimize the  sum of the  impurities of the groups in ${\cal P}$, i.e.,  $I({\cal P})= \sum_{m=1}^{L} I\bigg(\sum_{ \vvec \in V_m} \vvec \bigg).$

Impurity minimization has been widely  used as quality assessment measure in probability distribution clustering as well as in categorical clustering where it is not possible to rely on geometric properties of the data set. However, in contrast to the case of metric based clustering, the current  knowledge of impurity measure based clustering in terms of  approximation and inapproximability results is very limited. 

Our research contributes to fill this gap.  We first present a simple linear time algorithm that  simultaneously achieves $3$-approximation for the Gini impurity measure and $O(\log \sum_{\vvec \in V} \|\vvec\|_1)$-approximation for the Entropy impurity measure. Then, for the Entropy impurity measure---where we also show that finding the optimal clustering is strongly NP-hard---we are able to design a polynomial time  
$O\big(\log^2(\min\{k,L\})\big)$-approximation algorithm.  Our algorithm relies on a nontrivial characterization  of a class of clusterings that necessarily includes a partition achieving $O\big(\log^2(\min\{k,L\})\big)$--approximation of the impurity of the optimal partition. Remarkably, this is the first polynomial time algorithm with approximation guarantee independent of the number of points/vector and not relying on any restriction on the components of the vectors
for producing clusterings with minimum entropy.
\end{abstract}

\pagebreak
\section{Introduction}
Data clustering is a fundamental tool in the analysis of large datasets 
to reduce the computational resources required to handle them. 
For a recent comprehensive description of different clustering methods and their applications we refer to \cite{HMMR15}.
In general, clustering is the problem of partitioning a set of multidimensional points so that, in the output partition, 
similar points are grouped together and 
dissimilar points are separated. 
When data are numerical, the quality of a clustering is naturally based on the pairwise distance 
in some metric space where the 
data points lie. In many applications, however, data is categorical, i.e, data points are described by binary attributes or qualitative features, e.g., 
ethnicity, music preferences, place of residence, hair color, etc.  
In categorical clustering, rather than employing distance functions to measure the pairwise distance of data points, 
many clustering algorithms rely on 
so called impurity measures, that estimate the "purity" of a set of data points. 
A partition is then evaluated by considering the total impurity of 
the sets in which it splits the data set.

The design of clustering methods based on   impurity measures is the central theme of this paper. 
More formally, an impurity measures  $I: \vvec \in \mathbb{R}^k \mapsto I(\vvec) \in \mathbb{R}^+$ is 
a function that assigns a vector $\vvec$ to a non-negative value $I(\vvec)$ so that 
the more homogeneous $\vvec$, with respect to the values of its coordinates, the larger its impurity.
Well known examples of impurity measures\footnote{In 
the literature, also the form without the scaling factor $\|\vvec\|_1$ is found and the function used here
is also referred to as scaled/weighted impurity. In the section Preliminaries we will give a more 
precise definition of a wider class of impurity measures.} 
are the Entropy impurity \cite{conf/cikm/BarbaraLC02,conf/icml/LiMO04} and the Gini impurity \cite{Breiman84}:

$$ I_{Ent}(\vvec) = \|\vvec \|_1  \sum_{i=1}^k \frac{v_i}{\| \vvec\|_1} \log \frac{\| \vvec\|_1}{ v_i} \quad \mbox{ and } \quad 
 I_{Gini}(\vvec) =  \|\vvec \|_1 \sum_{i=1}^k \frac{v_i}{\| \vvec\|_1} \left ( 1- \frac{v_i}{\| \vvec\|_1}  \right ) $$

\noindent
{\bf Problem Description.}
We are given a collection of $n$ many $k$-dimensional vectors $V \subset \N^k$ with non-negative
integer coordinates and we are also given an impurity measure $I$. The goal is to
find a  partition ${\cal P}$ of $V$ into $L$ disjoint groups of 	 vectors $V_1,\ldots,V_L$   so as to minimize
the  sum of the  impurities of the groups in ${\cal P}$, i.e.,  $$I({\cal P})= \sum_{m=1}^{L} I\bigg(\sum_{ \vvec \in V_m} \vvec \bigg).$$ 
We refer to this problem as  the {\sc Partition with Minimum
Weighted Impurity Problem} ({\sc PMWIP}). 

The complexity of clustering in metric spaces, e.g., in the case of k-median, k-center and k-means,
is well understood from the perspective of approximation algorithms in the sense that  the gap between the ratios
achieved by the best known  algorithms and the largest known inapproximability 
factors, assuming $P\ne NP$, are somehow tight (see \cite{journals/corr/AwasthiCKS15} and references therein).
In contrast, despite its wide use in applications, the understanding of clusterings based on impurity measures  is much more limited  as we detail further.
As an example, for information theoretic clustering \cite{dmk-difcatc-03}, which is closely
related to clustering based on entropy impurity, no hardness result
beyond the NP-Completeness proved in \cite{journals/eccc/AckermannBS11} is available
 and the best  approximation 
ratio known, when no assumptions on the 
the data input are made and all probability distributions have the same weight,  is $O(\log n)$, where $n$ is the number of points to be clustered \cite{cm-fmsitc-08}. 

In this paper we present algorithms and complexity results that contribute to the understanding of  clustering
based on impurity measures. In particular, for our formulation of the information theoretic clustering  problem we manage to obtain an approximation factor that depends on the logarithmic of the number of clusters rather than on the number $n$ of data points.

%In particular, we show that for information theoretic clustering 
%based on minimum entropy impurity partitioning we can provide approximation algorithms 
%with better guarantees than those available and we can also obtain stronger hardness results.

\medskip

\noindent
{\bf Our Results and Techniques.}
First we present a simple linear time algorithm
that simultaneously guarantees a $3$-approximation for the $I_{Gini}$
and an $O(\log \sum_{\vvec \in V} \|\vvec\|_1 )$ approximation for $I_{Ent}$.
In addition, for the relevant case where all vectors in $V$ have the same $\ell_1$ norm the
algorithm provides an $O( \log n + \log k)$ approximation.
Then, we present a second algorithm that provides 
 an $O( \log^2 (\min \{k,L\} ))$-approximation for 
$I_{Ent}$ in polytime. We also prove that the problem considered here is strongly NP-Hard for $I_{Ent}$.

When $k>L$, both algorithms employ an extension of the technique  introduced in \cite{conf/icml/LMP18} that allows  to reduce the dimensionality of the
vectors in $V$ from $k$ to $L$  with a controllable additive loss in the approximation ratio. In \cite{conf/icml/LMP18}, where the case $L=2$ is studied, after the reduction step, an optimal clustering algorithm is used. However,  for arbitrary  $L$, the same strategy cannot be applied since
the problem is NP-Complete. Thus, it is crucial to devise novel procedures to handle the case where $k \le L$.

The procedure employed by the first algorithm is quite simple: it assigns vectors to groups according to the 
dominant coordinate, that is,  one with the largest value. The procedure of the second algorithm is
more involved, it relies on
the combination of the following results: (i) the existence of an optimal algorithm for $k=2$ \cite{journals/tit/KurkoskiY14};(ii) the existence of a mapping $\chi:\R^k \mapsto \R^2$ 
such that 
for a set of vectors $B$ which is pure, i.e., a set of vectors with the same dominant component,  
$I_{Ent}(\sum_{\vvec \in B} \vvec) = O(\log k) I_{Ent}(\sum_{\vvec \in B} \chi(\vvec))$  and (iii) 
a structural theorem that states that there exists a partition whose impurity is at an $O(\log ^2 k)$ factor from the optimal one 
and such that at most one of its groups is mixed, i.e., it is not pure, namely contains vectors with different dominant coordinate. 
The items (i) and (ii) would be sufficient to obtain a $\log k$ approximation had
we had $ I_{Ent}( \sum_{\vvec \in S} \vvec ) = O(\log k) \cdot I_{Ent}( \sum_{\vvec \in S} \chi(\vvec)) $ for all sets $S \subseteq V$.
However, this property does not hold for arbitrary $S$ but it does for sets $S$ that are
not mixed. This is the reason why item (iii) is important - it allows to only consider 
partitions in which at most one group is mixed. The search for a partition 
of this type with low impurity can be  achieved in pseudo-polynomial time via
Dynamic Programming. To obtain a polynomial time algorithm we then employ a filtering technique similar
to that employed for obtaining a FPTAS for the subset sum problem.

\medskip

\noindent
{\bf Related Work.}
Partition optimization based on impurity measures as defined in the {\sc PMWIP} is also employed in the construction of 
decision trees/random forests to asses the quality of nominal attributes in the attribute selection step (see, e.g., 
\cite{Breiman84,BPKN:92,Chou91,journals/datamine/CoppersmithHH99,conf/icml/LMP18,journals/tit/KurkoskiY14}, and 
references quoted therein). Kurkoski and Yagi  \cite{journals/tit/KurkoskiY14} showed that for the entropy impurity measure the problem
can be solved in polynomial time   when $k=2$. The correctness of this algorithm relies
on a theorem,  proved in \cite{Breiman84},  which is  generalized for $k > 2$  and 
 $L$ groups in \cite{Chou91,BPKN:92,journals/datamine/CoppersmithHH99}.
Basically, these theorems  state that there exists an optimal solution that can be separated by hyperplanes
in $\R^k$.  These results imply the existence of $O(n^k)$ optimal algorithm when $L=2$. 
Recently, one of authors proved that the problem is $NP$-Complete for $I_{Ent}$, even when $L=2$, and
presented constant approximation algorithms for a class of impurity measures that 
include Entropy and Gini for $L=2$ \cite{conf/icml/LMP18}.

Kurkoski and Yagi  \cite{journals/tit/KurkoskiY14} also observe that the ({\sc PMWIP}) with the Entropy impurity measure 
corresponds to the problem of designing a quantizer for the output $Y$
of a discrete memoryless channel  in order to maximize the 
mutual information between the channel's input $X$ and the quantizer's output $Z$. This 
problem (also motivated by the construction of polar codes) has recently attracted large interest in the information theory community
 \cite{journals/tit/TalV13,journals/tit/KurkoskiY14,journals/corr/KartowskyT17a,journals/tit/PeregT17,conf/NazerOP17}. 
The correspondence to {\sc PMWIP} is obtained by taking the channel inputs $X$ as the components of the vectors in $V$, 
the channel's output $Y$ as the points/vectors in 
$V$ and the quantizer's outputs $Z$ as the clusters. The impurity of the clustering coincides with the conditional entropy
$H(X \mid Z).$
The focus in the Information Theory community is proving bounds, as a function of $|X|$, $|Y|$ and $|Z|$,
 on the mutual information degradation due to  quantization/clustering rather than designing approximation algorithms.

\remove{
 In \cite{journals/corr/KartowskyT17a}, a simple greedy procedure is analyzed in terms of additive approximation 
of the degradation of  mutual information $I(X;Y) - I(X;Z).$ 
Therefore, while in terms of optimal value the goal of ({\sc PMWIP}) and maximum mutual information quantization coincide, 
the approximation 
criterion of the above papers is different and does not allow for a direct comparison. 
}

Another problem which is strictly related to {\sc PMWIP} with the entropy impurity  measure is the 
problem of clustering probability distributions  based on the 
 Kullback-Leibler (KL) divergence.  
In particular, the $MTC_{KL}$ problem of \cite{cm-fmsitc-08} asking for the clustering of a set of $n$ probability distributions of dimension $k$
into $L$ clusters minimizing the total Kullback-Leibler (KL) divergence of the points from the centroids of the clusters, 
corresponds to the particular case of {\sc PMWIP} where each vector has the same $\ell_1$ norm.
In \cite{cm-fmsitc-08} an $O(\log n)$ approximation for $MTC_{KL}$ is given. 
Under the additional assumption that  every element of every probability distribution is larger than a constant,
Ackermann et. al. \cite{Ackermann:2008:CMN:1347082.1347170,conf/soda/AckermannB09,journals/talg/AckermannBS10} presents an $(1+\epsilon)$-approximation algorithm for $MTC_{KL}$
that runs in $O(nkL + k 2^{O(L / \epsilon)} \log^{L+2} (n))$ time.  By using similar assumptions on the components of the input probability distributions, 
Jegelka et. al.  \cite{journals/corr/abs-0812-0389} show that Lloyds 
$K$-means algorithm---which is however also exponential time in the worst case 
\cite{Vattani2011}---obtains an $O( \log L)$ approximation for
$MTC_{KL}$. 
%with an initialization inspired  on that proposed in \cite{ArtVas07}, 

It has to be noted that although the optimal solutions of  $MTC_{KL}$  and {\sc PMWIP} are the same, 
the problems differ with regard to the  approximation guarantee pursued. Let  $\OPT(V)$ denote the minimum possible impurity 
of a partition of the input set of vectors $V$ and $I({\cal P}(V))$ the impurity of partition ${\cal P}$ of $V$.  
The approximation goal in our study of {\sc PMWIP} is to find a partition ${\cal P}$ that minimizes the ratio  $I({\cal P}(V)) / \OPT(V)$
while in the case of the above papers on $MTC_{KL}$ the goal is to find ${\cal P}$ that  minimizes 
$(I({\cal P})- H) / (\OPT(V)-H)$, where $H$ is
the sum of the entropies of the vectors in $V$. Among the algorithms mentioned for
$MTC_{KL}$, the one that allows a more direct  comparison with ours is the one proposed
in \cite{cm-fmsitc-08} since it runs in polytime and does not rely on assumptions
over the input data. An $\alpha$-approximation for the $MCT_{KL}$ problem implies $\alpha$-approximation 
for the special case of $PMWIP$ with vectors of the same $\ell_1$ norm, so the approximation measure used in \cite{cm-fmsitc-08} 
is more general. 
However, our results apply to a more general problem and nonetheless we are able to provide approximation guarantee 
depending on the logarithm of the number of clusters while the guarantee in \cite{cm-fmsitc-08} 
depends on the logarithm of the number of input vectors.

\remove{
The algorithm from \cite{}
has the adva
Thus, an $\alpha$-approximation for
$MTC_{KL}$ implies a $\alpha$-approximation for the special case of  {\sc PMWIP} with uniform vectors but the converse
is not necessarily true. 
}

\section{Preliminaries}
We start defining some notations employed throughout the paper.
An instance of \textsc{PMWIP} is a  triple  $(V,L,I)$, where 
 $V$ is a collection of non-null vectors in $\R^k$ with non-negative
integer coordinates, $L$ is an integer larger than $1$  and $I$ is a scaled impurity measure.

We assume that for each coordinate  $i=1, \dots, k$ there exists at 
least one vector $\vvec \in V$ whose $i$th coordinate is non-zero, i.e., 
the vector $\sum_{\vvec \in V} \vvec$ has no zero coordinates---for otherwise
we could consider  an instance of \textsc{PMWIP} with the vectors lying in  some dimension $k' < k$. 
For a set of vectors $S$,
the impurity $I(S)$ of $S$ is given by
$I( \sum_{\vvec \in S} \vvec)$. The impurity of a partition ${\cal P}=(V^{(1)}, \ldots,V^{(L)})$ of the set $V$ is then
$I({\cal P}) =\sum_{i=1}^L I(V^{(i)})$.  We use $\OPT(V, I, L)$ to denote the minimum possible impurity for
an $L$-partition of $V$ and, whenever the context is clear, we simply talk about instance $V$ 
(instead of $(V,I,L)$) and of the impurity of an optimal solution as $\OPT(V)$ (instead of $\OPT(V,I,L)$).
We say that a partition $(V^{(1)},\ldots,V^{(L)})$ 
is optimal for input $(V,L,I)$ iff $\sum_{i=1}^L I(V^{(i)})=\OPT(V,I,L)$.

For an algorithm ${\cal A}$ and an instance $(V, I,L)$, we denote by ${\cal A}(V, I,L)$ and $I({\cal A}(V, I,L))$ the 
partition output by ${\cal A}$ on instance $(V,I,L)$ and its impurity, respectively. 
Whenever it is clear from the context, we omit to specify the instance and write $I({\cal A})$ for 
$I({\cal A}(V,I,L)).$

%For an impurity measure $I:\R^k \mapsto R$ we use

We use bold face font to denote vectors, e.g., $\uvec, \vvec, \ldots$. 
For a vector $\uvec$ we use $u_i$ to denote its $i$th component.
Given two vectors $\uvec=(u_1,\ldots,u_k)$ and
$\vvec=(v_1,\ldots,v_k)$ we use $\uvec \cdot \vvec$ to denote their inner product and 
$\uvec \circ \vvec=(u_1 v_1,\ldots,u_k v_k)$ to denote their
component-wise (Hadamard) product. 
We use  $\nullvec$ and $\onevec$ to denote  the vectors in $\R^k$ with all coordinates equal to 0 and 1,
respectively.
 We use $[m]$ to denote the set of the first $m$ positive integers.
For $i=1, \dots, k$ we denote by $\evec_i$ the vector in $\R^k$ with the $i$th coordinate equal to $1$ and 
all other coordinates equal to $0$.

%%###########################################################
%###########################################################
%###########################################################
%###########################################################
The following properties will be useful in our analysis.

\begin{prop}
Let $p \in [0,0.5)$. Then, $p\log(1/ p) \ge (1-p)\log[1/(1-p)]$ 
\label{prop:basic-inequality}
\end{prop}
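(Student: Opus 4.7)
The plan is to rearrange the inequality into the statement that a single auxiliary function is non-negative on $[0,1/2)$, and then prove this by strict concavity together with boundary values. Concretely, I would define
$$h(p) = p\log(1/p) - (1-p)\log\!\bigl[1/(1-p)\bigr] = -p\log p + (1-p)\log(1-p),$$
and note that the base of the logarithm is irrelevant (the inequality is scale-invariant), so I may work with the natural logarithm. With the usual convention $0\log 0 = 0$, the function $h$ extends continuously to $[0,1/2]$ and satisfies $h(0)=0$ (both terms vanish) and $h(1/2) = -\tfrac12\ln\tfrac12 + \tfrac12\ln\tfrac12 = 0$.

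Next I would differentiate. A direct computation gives
$$h'(p) = -\ln p - \ln(1-p) - 2 = -\ln\!\bigl[p(1-p)\bigr] - 2,$$
and since $p(1-p)$ is strictly increasing on $(0,1/2)$, the function $-\ln[p(1-p)]$ is strictly decreasing there. Hence $h'$ is strictly decreasing on $(0,1/2)$, so $h$ is strictly concave on $(0,1/2)$. A concave function on a closed interval that vanishes at both endpoints is non-negative throughout the interval (for any $p\in(0,1/2)$, write $p = (1-\lambda)\cdot 0 + \lambda\cdot \tfrac12$ with $\lambda\in(0,1)$ and apply concavity to obtain $h(p)\ge (1-\lambda)h(0) + \lambda h(1/2) = 0$). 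This yields $h(p)\ge 0$ for all $p\in[0,1/2)$, which is exactly the desired inequality.

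There is no real obstacle here --- the statement is a one-variable calculus fact. The only small care points are the boundary behavior at $p=0$, which is handled by the standard continuity convention $0\log 0 = 0$, and noting that the derivative computation is independent of the logarithm base up to an overall positive multiplicative factor, so the argument applies uniformly to whichever log is fixed in the paper's impurity definitions.
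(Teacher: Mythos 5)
Your proof is correct and follows essentially the same route as the paper: both define the difference function, check that it vanishes at $p=0$ and $p=1/2$, and conclude non-negativity on $[0,1/2]$ from concavity (you establish concavity via the strictly decreasing derivative, the paper via the sign of the second derivative). Your write-up just spells out the derivative computation and the endpoint convention $0\log 0=0$ that the paper leaves implicit.
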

\begin{proof}
It is enough to show that $g(p)=p\log(1/ p) - (1-p)\log[1/(1-p)] \geq 0$ in the interval $[0,1/2].$
For this, simply observe that $g(0)=g(1/2) = 0$ and that $g(p)$ is concave in the interval 
$[0,1/2]$ (the second derivative is negative).
%
%We can consider the minimum of the function  $g(p)=p\log(1/ p) - (1-p)\log[1/(1-p)]$ 
%in the interval $[0,0.5)$. Clearly, the minimum is in one of the extremes of [0,0.5] or 
%at the point  $p^*$ such that  $g'(p^*)=0$.
%Since 
%$$g'(p)= \frac{ln(1/p) + \ln(1-p) -2}{ \ln(2)},$$ 
%it follows that $p^*$ is such that $p^*(1-p^*)= 1/(e^2)$.
%By simple inspection we verify that 
% $f(0),f(0.5)$ and $f(p^{*})$ are at least 0 so that the result holds.
\end{proof}

%\begin{prop}
%Let $A>0$ ad  $x < y < A/2$. Then,
%$ x \log (A /x) \le 2 y \log (A /y) $
%\end{equation} 

\begin{prop}
Let $A >0$. The function $f(x)= x \log (A /x)$ is increasing in the interval 
$(0,A/e]$ and decreasing in the interval $(A/e,A]$ so that
its maximum value in the interval $[0,A]$ is  $(A \log e)/e.$
\label{prop:increasing-new}
\end{prop}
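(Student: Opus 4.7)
The plan is to establish the monotonicity claims via a standard first-derivative analysis and then read off both the location and the value of the maximum from that analysis together with the values at the endpoints of $[0,A]$.

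First I would rewrite $f(x) = x\log(A/x) = x(\log A - \log x)$ and differentiate, keeping the base of the logarithm unspecified by using $\log x = (\ln x)/(\ln b)$. A direct computation yields
\[
f'(x) = \frac{1}{\ln b}\bigl(\ln(A/x) - 1\bigr),
\]
which vanishes precisely when $\ln(A/x) = 1$, i.e., at $x = A/e$. Since $\ln$ is strictly increasing, $\ln(A/x) > 1$ for $x \in (0, A/e)$ and $\ln(A/x) < 1$ for $x \in (A/e, A]$. Hence $f'(x) > 0$ on $(0, A/e)$ and $f'(x) < 0$ on $(A/e, A)$, which gives the two monotonicity claims.

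Next I would evaluate $f$ at the critical point and at the endpoints. At the critical point,
\[
f(A/e) \;=\; \frac{A}{e}\,\log\!\Bigl(\frac{A}{A/e}\Bigr) \;=\; \frac{A}{e}\log e \;=\; \frac{A\log e}{e}.
\]
At the right endpoint, $f(A) = A\log 1 = 0$, and at the left endpoint we use $\lim_{x\to 0^+} x\log(A/x) = 0$ so that $f$ extends continuously to $0$ there. Combining this with the monotonicity established above shows that $f$ attains its maximum on $[0, A]$ at $x = A/e$, with value $(A\log e)/e$, as claimed.

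I do not anticipate any real obstacle: the only subtleties are (i) being careful about the base of $\log$ when differentiating, which is handled uniformly by the change-of-base formula above, and (ii) the behavior at $x = 0$, which requires a one-sided limit rather than a direct evaluation.
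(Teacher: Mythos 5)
Your proof is correct and follows essentially the same route as the paper: compute $f'(x)$, observe it is positive on $(0,A/e)$ and negative on $(A/e,A)$, and conclude the maximum is at $x=A/e$ with value $(A\log e)/e$. The extra care you take with the base of the logarithm and the limit at $x=0$ is fine but not a different method.
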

\begin{proof}
The result follows because   $f'(x)= (\ln(A) - \ln x - 1)/ \ln 2 $, the derivative of $f(x)$ is positive
in the interval $(0,A/e)$ and negative in the interval  $[A/e,A)$.
\end{proof}

\subsection{Frequency weighted impurity measures with subsystem property}
\label{sec:impmeasures}
The impurity measures we will focus on, namely Gini and Entropy, are special cases  of a larger class of impurity measures, which we denote by ${\cal C}$, that satisfy the following definition
	\begin{align}
		I(\uvec) = \|\uvec\|_1 \sum_{i=1}^{dim(\uvec)} f\bigg(\frac{u_i}{\|\uvec \|_1} \bigg),
	\tag{P0}
	\end{align}
	where $dim(\uvec)$ is the dimension of vector $\uvec$ and $f : \R \rightarrow \R$ is a function satisfying the following conditions:

\begin{enumerate}
\item $f(0)=f(1)=0$ \hfill (P1)\vspace{-5pt}
\item $f$  is strictly concave in the interval [0,1] \hfill(P2)\vspace{-5pt}
\item For all $0 < p \le q \le 1,$ it holds that  
$f(p) \le \frac{p}{q} \cdot f(q) + q\cdot  f\left(\frac{p}{q}\right)$ \hfill{(P3)}
%	\end{align}
\end{enumerate}

Impurity measures satisfying the  conditions (P0)-(P2) are called \emph{frequency-weighted 
impurity measures based on concave functions}~\cite{journals/datamine/CoppersmithHH99}.
A fundamental properties of such impurities measures is that they are {\em superadditive} as shown 
in \cite{journals/datamine/CoppersmithHH99}. We record this property in the following lemma.

%\lnote{Devemos dar um nome para nossa classe?}

\begin{lemma}[Lemma 1 in \cite{journals/datamine/CoppersmithHH99}]
If $I$ satisfies (P0)-(P2) then for
every  vectors  $\uvec_L$ and $\uvec_R$ in $\R^k_+$,  we have
 $I(\uvec_L+ \uvec_R) \ge I(\uvec_L)+ I(\uvec_R)$.
\label{lem:imputity-supperadditivy}
\end{lemma}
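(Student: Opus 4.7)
The plan is to reduce the superadditivity of $I$ to the concavity of $f$ applied coordinate by coordinate. Write $a=\|\uvec_L\|_1$ and $b=\|\uvec_R\|_1$, so $\|\uvec_L+\uvec_R\|_1=a+b$. By definition (P0), the inequality to be shown unfolds as
\[
(a+b)\sum_{i=1}^k f\!\left(\frac{u_{L,i}+u_{R,i}}{a+b}\right) \;\ge\; a\sum_{i=1}^k f\!\left(\frac{u_{L,i}}{a}\right) + b\sum_{i=1}^k f\!\left(\frac{u_{R,i}}{b}\right).
\]
It will therefore suffice to prove the coordinate-wise inequality
\[
(a+b)\, f\!\left(\frac{u_{L,i}+u_{R,i}}{a+b}\right) \;\ge\; a\, f\!\left(\frac{u_{L,i}}{a}\right) + b\, f\!\left(\frac{u_{R,i}}{b}\right)
\]
for each $i\in[k]$, and then sum over~$i$.

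The first step is to rewrite the argument on the left-hand side as a convex combination. Setting $p = u_{L,i}/a$, $q = u_{R,i}/b$, and $\lambda = a/(a+b)\in[0,1]$, one has
\[
\frac{u_{L,i}+u_{R,i}}{a+b} \;=\; \lambda p + (1-\lambda) q,
\]
so that the coordinate-wise inequality becomes
\[
f\!\bigl(\lambda p + (1-\lambda) q\bigr) \;\ge\; \lambda f(p) + (1-\lambda) f(q).
\]
Since $p,q\in[0,1]$ (because $u_{L,i}\le a$ and $u_{R,i}\le b$), this is exactly the (strict) concavity of $f$ on $[0,1]$ guaranteed by property (P2).

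The second step is to dispose of the boundary cases. If $a=0$, then $\uvec_L=\nullvec$, the definition (P0) together with (P1) gives $I(\uvec_L)=0$, and the inequality collapses to $I(\uvec_R)\ge I(\uvec_R)$; symmetrically for $b=0$. Likewise, if any single coordinate $u_{L,i}$ or $u_{R,i}$ vanishes, property (P1) ($f(0)=0$) makes the corresponding term drop out so the above manipulation is still valid. Summing the coordinate-wise inequalities over $i$ then yields exactly
\[
I(\uvec_L+\uvec_R)\;\ge\;I(\uvec_L)+I(\uvec_R),
\]
concluding the proof. The only mild subtlety is noticing that one must group the $(a+b)$ prefactor with the argument to expose a convex combination with coefficients $\lambda$ and $1-\lambda$; once that observation is made, the argument is a one-line appeal to (P2), and properties (P3) is not needed here.
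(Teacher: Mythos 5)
Your proof is correct. The paper itself does not prove this lemma---it is imported verbatim as Lemma~1 of \cite{journals/datamine/CoppersmithHH99}---and your argument (dividing by $\|\uvec_L\|_1+\|\uvec_R\|_1$ to expose the convex combination $\lambda p+(1-\lambda)q$ coordinate by coordinate and invoking concavity (P2), with the zero-norm cases handled by the convention $I(\nullvec)=0$) is exactly the standard proof of that result, so there is nothing substantively different to compare.
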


%Although property (P3) is not particularly intuitive it can be shown  that if
%a simple  constraint ($xf''(x)$ is non-increasing in the interval [0,1]) is imposed  on the second derivative $f''$ of $f$  then 
%(P3) is also satisfied. This result can be found in the appendix.
%Another interesting observation is that $f(x)=x (\ln^2(x)-2\ln(x))$ is an example of a function that
% satisfies both (P1) and (P2) but it does not satisfy (P3).

%As recalled in the introduction among the most used impurity measure, a special role is played by 
The Entropy and the Gini impurity measure satisfy the definition (P0) by means of the functions
$f_{Entr}(x) = -x \log x$ and $f_{Gini}(x) = x(1-x)$. In fact, for a vector $\uvec \in \R^k$ the Entropy   
impurity $I_{Ent}(\uvec)$ and the Gini impurity $I_{Gini}(\uvec)$ are defined by 
\begin{equation} \label{entropy-gini}
I_{Ent}(\uvec) = \|\uvec\|_1 \sum_{i=1}^k f_{Entr}\left(\frac{u_i}{\|\uvec\|_1}\right)
\quad \mbox{ and } \quad
I_{Gini}(\uvec) = \|\uvec\|_1 \sum_{i=1}^k f_{Gini}\left(\frac{u_i}{\|\uvec\|_1}\right).
\end{equation}
It is also easy to see that 
$I_{Ent}(\uvec) = \|\uvec\|_1 H\left(\frac{u_1}{\|\uvec\|_1}, \frac{u_2}{\|\uvec\|_1}, \dots, \frac{u_k}{\|\uvec\|_1}\right)$ 
where $H(\cdot)$ denotes the Shannon entropy function. 

The following fact states that both the Gini and Entropy impurity measures belong to the class $\calC$. For the sake of 
self-containment we have deferred a simple proof of this fact to the appendix.

\begin{fact} \label{fact:gini-entropy}
Both $f_{Entr}$ and $f_{Gini}$ satisfy properties (P1)-(P3), and, in particular, we have that $f_{Entr}$ satisfies (P3)  with equality. 
Therefore both the Gini impurity measure $I_{Gini}$ and 
%defined by $I_{Gini}(\uvec) = \|\uvec\|_1 \sum_{i=1}^k \frac{u_i}{\|\uvec\|_1} (1- \frac{u_i}{\|\uvec\|_1})$  and
 the Entropy impurity measure $I_{Ent}$ 
%defined by $I_{Ent} =  \|\uvec\|_1 \sum_{i=1}^k \frac{u_i}{\|\uvec\|_1}  \log (\frac{\|\uvec\|_1}{u_i})$ 
belong to $\calC$.

\end{fact}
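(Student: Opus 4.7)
The plan is to check each of the three properties separately for each of the two functions, with (P1) and (P2) being immediate and (P3) being the only one requiring any work. For (P1) I would simply plug in: $f_{Entr}(0)=-0\log 0=0$ (using the standard convention $0\log 0 = 0$, justified by $\lim_{x\to 0^+} x\log x = 0$), $f_{Entr}(1)=-\log 1=0$, $f_{Gini}(0)=0\cdot 1=0$, $f_{Gini}(1)=1\cdot 0=0$. For (P2) I would compute second derivatives: $f_{Entr}''(x)=-1/(x\ln 2)<0$ on $(0,1]$ and $f_{Gini}''(x)=-2<0$ on $[0,1]$, both giving strict concavity.

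The core of the proof is verifying (P3), and for $f_{Entr}$ I would show equality by direct algebraic manipulation. Expanding the right-hand side for $0<p\le q\le 1$ gives
\[
\tfrac{p}{q}\,f_{Entr}(q) + q\,f_{Entr}(p/q) = \tfrac{p}{q}(-q\log q) + q\left(-\tfrac{p}{q}\log(p/q)\right) = -p\log q - p(\log p - \log q) = -p\log p,
\]
which equals $f_{Entr}(p)$, establishing (P3) with equality.

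For $f_{Gini}$, the main step is an explicit comparison and factorization of the difference. Expanding the right-hand side yields
\[
\tfrac{p}{q}\,q(1-q) + q\cdot\tfrac{p}{q}\left(1-\tfrac{p}{q}\right) = p(1-q) + p - \tfrac{p^2}{q} = 2p - pq - \tfrac{p^2}{q},
\]
and subtracting $f_{Gini}(p)=p-p^2$ gives, after collecting terms, $p(1-q) - \tfrac{p^2(1-q)}{q} = \tfrac{p(1-q)(q-p)}{q}$. Since $0<p\le q\le 1$ each factor is nonnegative, so the difference is $\ge 0$, proving (P3) for $f_{Gini}$.

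No step here is a serious obstacle; the only point requiring a bit of care is the factorization in the Gini case, where it is essential to group the four terms as $(1-q)$ times $(q-p)/q$ to see the nonnegativity from the hypotheses. Once (P1)--(P3) are verified for both $f_{Entr}$ and $f_{Gini}$, membership of $I_{Ent}$ and $I_{Gini}$ in $\calC$ follows directly from the definition (P0) together with the formulas in equation~(\ref{entropy-gini}).
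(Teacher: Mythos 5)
Your proof is correct, and while the entropy part coincides with the paper's (the same direct computation showing $\frac{p}{q}f_{Entr}(q)+qf_{Entr}(p/q)=-p\log p$, i.e.\ (P3) with equality), your verification of (P3) for $f_{Gini}$ takes a different route. The paper cancels the common factor $p$, rewrites the inequality as $p \ge q + \frac{p}{q} - 1$ for $q\in[p,1]$, and then argues that the right-hand side is convex in $q$, so its maximum over $[p,1]$ is attained at an endpoint, where the inequality holds with equality. You instead compute the difference between the two sides explicitly and factor it as $\frac{p(1-q)(q-p)}{q}\ge 0$. Both arguments are short and sound; your factorization is more elementary and has the small bonus of exhibiting the exact slack (and hence that equality for Gini occurs precisely at $q=p$ or $q=1$), whereas the paper's convexity argument avoids the algebraic bookkeeping. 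You also verify (P1)--(P2) by direct computation of values and second derivatives, where the paper simply notes (P1) and cites the literature for (P2); this is harmless extra detail and the derivative computations are correct.
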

% \begin{proof}
%	The measure $I_{Gini}$ is obtained using the function $f_{Gini}(x) = x (1-x)$, and $I_{Ent}$ is obtained using the function $f_{Entr}(x) = x \log \frac{1}{x}$. Clearly both functions satisfy property (P1), and it is known they also satisfy (P2)~\cite{journals/datamine/CoppersmithHH99}. So it remains to be shown that they satisfy property (P3).
%	
%	For $f_{Gini}$, (P3) becomes
%	%
%	\begin{align*}
%		p (1-p) \le p(1-q) + p\left(1-\frac{p}{q}\right) ~~~~\forall q \in [p, 1]
%	\end{align*}
%	which after canceling the $p$'s out and rearranging, is equivalent to $p \ge q + \frac{p}{q} - 1$ for all $q \in [p,1]$, or $p \ge \max_{q \in [p,1]} (q + \frac{p}{q} - 1)$. But the function in the $\max$ is convex in $q$, and hence its maximum is attained at one of the endpoints $q = p$ and $q = 1$; for these endpoints the inequality holds at equality, which then proves the desired property.
%	
%	For $f_{Entr}$, a simple inspection shows that (P3)  holds at equality. 
%\end{proof}

We now show that  the impurity measures of class ${\cal C}$
satisfy a special {\em subsystem property} which will be used in our analysis to 
relate the impurity of partitions for instances of dimension $k$
with the  impurity of partitions for instances of dimension $L$.

%With overload in notation, let $S = \sum_{i \in S} u_i$ and $T = \sum_{i \in T} u_i$.
\begin{lemma}[Subsystem Property]
Let $I$ be an impurity measure in $\calC$.
Then, for every $\uvec \in \R^k_+$ and pairwise orthogonal vectors $\dvec^{(1)}, \dots, \dvec^{(L)} \in \{0,1\}^k$,
such that $\sum_{i=1}^L \dvec^{(i)} = \onevec,$ we have
 \begin{equation} I(\uvec) \le I \bigg ( (\uvec \cdot \dvec^{(1)}, \uvec \cdot \dvec^{(2)}, \dots,  \uvec \cdot \dvec^{(L)}) \bigg) 
 + \sum_{i=1}^L I( \uvec \circ \dvec^{(i)}).
\label{eq:subsystem}
\end{equation}
Moreover, for $I = I_{Ent}$ we have that (\ref{eq:subsystem}) holds with equality.
 \label{lemma:subsystem}
\end{lemma}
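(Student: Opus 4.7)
The plan is to unwind definition (P0) on both sides of the inequality and reduce the lemma to a coordinate-wise application of property (P3), summed over the blocks of the coordinate partition induced by $\dvec^{(1)},\ldots,\dvec^{(L)}$.

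First, since the $\dvec^{(i)}$ are pairwise orthogonal $\{0,1\}$-vectors summing to $\onevec$, their supports $S_1,\ldots,S_L$ form a partition of $[k]$. Set $W = \|\uvec\|_1$ and, for each $i$, $W_i = \uvec\cdot\dvec^{(i)} = \sum_{j \in S_i} u_j = \|\uvec\circ\dvec^{(i)}\|_1$, so that $\sum_i W_i = W$. Expanding each impurity via (P0) (and using $f(0)=0$ from (P1) so that coordinates outside $S_i$ contribute nothing to $I(\uvec\circ\dvec^{(i)})$), the inequality to be proven becomes
$$W\sum_{i=1}^L\sum_{j\in S_i} f\!\left(\frac{u_j}{W}\right) \;\le\; W\sum_{i=1}^L f\!\left(\frac{W_i}{W}\right) \;+\; \sum_{i=1}^L W_i\sum_{j\in S_i} f\!\left(\frac{u_j}{W_i}\right).$$

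Now substitute $p_j = u_j/W$ and $q_i = W_i/W$, so that $u_j/W_i = p_j/q_i$ and $\sum_{j\in S_i} p_j = q_i$. After dividing by $W$, it suffices to show, for each block $S_i$ separately, that
$$\sum_{j\in S_i} f(p_j) \;\le\; f(q_i) \;+\; q_i\sum_{j\in S_i} f\!\left(\frac{p_j}{q_i}\right).$$
For every $j\in S_i$ we have $0 < p_j \le q_i \le 1$, so property (P3) gives $f(p_j) \le (p_j/q_i) f(q_i) + q_i f(p_j/q_i)$. Summing over $j\in S_i$ and using $\sum_{j\in S_i} p_j/q_i = 1$ produces exactly the block inequality; summing over $i$ and multiplying back by $W$ gives (\ref{eq:subsystem}).

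For the entropy case, Fact \ref{fact:gini-entropy} states that $f_{Entr}$ satisfies (P3) with equality, so every invocation of (P3) above is an equality and the whole chain collapses to an equality, proving the moreover part.

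The only real bookkeeping obstacle is lining up the three pieces of (P3) with the three pieces of the target inequality via the substitution $p_j = u_j/W$, $q_i = W_i/W$; once that correspondence is set up and the identity $\sum_{j\in S_i} p_j = q_i$ is used to cancel the factor $f(q_i)$'s coefficient, the lemma is just a sum of (P3) inequalities, with no further analytic content required.
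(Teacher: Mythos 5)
Your proof is correct and follows essentially the same route as the paper's: expand both sides via (P0), split the coordinate sum over the blocks induced by the $\dvec^{(i)}$, apply (P3) with $p = u_j/\|\uvec\|_1$ and $q = W_i/\|\uvec\|_1$, and use $\sum_{j\in S_i} u_j = W_i$ to reassemble the two terms on the right, with the entropy case following because $f_{Entr}$ satisfies (P3) with equality. The only difference is cosmetic (you phrase the reduction block-by-block after dividing by $\|\uvec\|_1$, while the paper carries the global sum throughout), so no further comment is needed.
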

\begin{proof}
Let $f$ be the concave function used by the frequency-weighted impurity measure $I$. 

For $i=1, \dots, L,$ let $\uvec^{(i)} = \uvec \circ \dvec^{(i)}.$ We have
\begin{eqnarray}
I(\uvec) &=& \|\uvec\|_1 \sum_{j=1}^k f(\frac{u_j}{\|\uvec\|_1})  \label{subsys-eq-1}\\
&=& \|\uvec\|_1 \sum_{i=1}^L \sum_{j \mid d^{(i)}_j = 1} f(\frac{u_j}{\|\uvec\|_1}) \label{subsys-eq-2} \\
&\leq&  \|\uvec\|_1 \sum_{i=1}^L \sum_{j \mid d^{(i)}_j = 1} 
\frac{u_j}{\|\uvec\|_1} \frac{\|\uvec\|_1}{\|\uvec^{(i)}\|_1} f(\frac{ \| \uvec^{(i)} \|_1}{\|\uvec\|_1}) +
\frac{ \| \uvec^{(i)} \|_1}{\|\uvec\|_1} f(\frac{u_j}{\|\uvec^{(i)}\|_1}) \label{subsys-eq-3}\\
&=& \|\uvec\|_1 \sum_{i=1}^L \sum_{j \mid d^{(i)}_j = 1} 
\frac{u_j}{\|\uvec^{(i)}\|_1} f(\frac{ \| \uvec^{(i)} \|_1}{\|\uvec\|_1}) +
\sum_{i=1}^L \sum_{j \mid d^{(i)}_j = 1} \| \uvec^{(i)} \|_1   f(\frac{u_j}{\|\uvec^{(i)}\|_1}) \label{subsys-eq-4}\\
&=&  \|\uvec\|_1 \sum_{i=1}^L   f(\frac{ \| \uvec^{(i)} \|_1}{\|\uvec\|_1}) +
\sum_{i=1}^L   \| \uvec^{(i)} \|_1  \sum_{j \mid d^{(i)}_j = 1}  f(\frac{u_j}{\|\uvec^{(i)}\|_1}) \label{subsys-eq-5}\\
&=& I\left( (\uvec \cdot \dvec^{(1)}, \uvec \cdot \dvec^{(2)}, \dots, \uvec \cdot \dvec^{(L)})\right) +
\sum_{i=1}^L I(\uvec \circ \dvec^{(i)}) \label{subsys-eq-6}
\end{eqnarray}
where (\ref{subsys-eq-2}) follows from (\ref{subsys-eq-1}) by splitting the second summation according to the partition 
of $[k]$ induced by the 
non zero components of the vectors $\dvec^{(i)}$; 
(\ref{subsys-eq-3}) follows from (\ref{subsys-eq-2}) by applying property (P3) with $p = \frac{u_j}{\|\uvec\|_1}$ and 
$q = \frac{\|\uvec^{(i)}\|_1}{\|\uvec\|_1}$; (\ref{subsys-eq-4}) follows from (\ref{subsys-eq-3}) by simple algebraic manipulations;
(\ref{subsys-eq-5}) follows from (\ref{subsys-eq-4}) since by definition of $\uvec{(i)}$ we have 
$\sum_{j \mid d^{(i)}_j = 1} {u_j}  = \|\uvec^{(i)}\|_1$;  (\ref{subsys-eq-5}) follows from (\ref{subsys-eq-4}) since
$\|\uvec^{(i)}\|_1 = \uvec \cdot \dvec^{(i)}$ and 
$I(\uvec \circ \dvec^{(i)}) = \sum_{j \mid d^{(i)}_j = 1} \|\uvec \circ \dvec^{(i)}\|_1 f(\frac{u_j}{\| \uvec \circ \dvec^{(i)}\|_1})$
and $ \|\uvec \circ \dvec^{(i)}\|_1 = \|\uvec^{(i)}\|_1.$

\medskip
The second statement of the lemma follows immediately by the fact that the concave function $f_{Entr}$ satisfies property (P3) with 
equality (see Fact \ref{fact:gini-entropy}). Hence, for $I_{Ent}$ the inequality in (\ref{subsys-eq-3}) becomes an equality.

\end{proof}

\begin{remark}
The Subsystem property in the previous lemma 
holds also under the stronger assumption that vectors $\dvec$'s are from $[0,1]^k$ and not necessarily orthogonal.
\end{remark}

%Note that the subsystem property in the previous lemma is stated
%in a slightequation \ref{eq:subsystem}
% property is stated in a slightly different
%The next theorem proved in 
%We note that the last result 
%###########################################################
%###########################################################
%###########################################################
%###########################################################

%\section{{\tt HcC} and {\tt LCA} Algorithms and their approximations}

\section{Handling high dimensional vectors} \label{sec:general}
In this section we present an approach  to
address instances $(V,I,L)$ with $I \in {\cal C}$ and $k > L.$
 It consists of two steps: 
finding a 'good'  projection of $\R^k$ into $\R^L$ and  then
solving \textsc{PMWIP} for the projected  instance with $k= L$.
Thus, in the next sections we will be focusing on how to build this projection
and  how to solve instances with $k \le L$.
The material of this section is a generalization  for arbitrary $L$ of the results introduced in \cite{conf/icml/LMP18} for $L=2$.

Let ${\cal D}$ be the family of all sequences $D$ 
of  $L$ pairwise orthogonal directions in $\{0,1\}^k$, such that $\sum_{\dvec \in D} \dvec = \onevec.$
For each $D=(\dvec^{(1)}, \dots,  \cdot \dvec^{(L)}) \in {\cal D}$  and any $\vvec \in \R^k$ we  define  the operation
 $collapse_{D} : \R^k_+ \rightarrow \R^L_+$ by 
$$collapse_{D} (\vvec) = (\vvec \cdot \dvec^{(1)}, \dots, \vvec \cdot \dvec^{(L)}).$$

We also naturally extend the operation to sets of vectors $S$, by defining $collapse_{D}(S)$ as the multiset of 
vectors obtained by applying  $collapse_{D}$ to each vector of $S$.
%\lnote{Maybe project is better}

Let ${\cal A}$ be an algorithm 
that on instance  $(V, I, L)$ chooses a sequence of vectors $D = \{\dvec^{(1)}, \dots, \dvec^{(L)}\} \in {\cal D}$ 
%of $L$ direction in $[0,1]^k$
and returns a partition $(V^{(1)},\ldots,V^{(L)})$ such that
$(collapse_{D}(V^{(1)}),\ldots,collapse_{D}(V^{(L)}))$ is a 'good' partition for the
 $L$-dimensional instance $(collapse_{D}(V), I, L)$. In this section we quantify the relationship between the
 approximation attained by $(collapse_{D}(V^{(1)}),\ldots,collapse_{D}(V^{(L)}))$ for instance $(collapse_{D}(V), I, L)$ and the 
 corresponding approximation attained by $(V^{(1)},\ldots,V^{(L)})$ for instance $(V, I, L)$.

\medskip 

\noindent
 Let $\uvec^{(i)} = \sum_{ \vvec \in V^{(i)}} \vvec$.
From the subsystem property we have the following upper bound on the impurity of the partition returned by ${\cal A}.$

$$ I({\cal A})= \sum_{i=1}^L  I(\uvec^{(i)}) 
\le \sum_{i=1}^L  I \left ( (\uvec^{(i)} \cdot \dvec^{(1)},\ldots,\uvec^{(i)} \cdot \dvec^{(L)}) \right ) 
+   \sum_{i=1}^L \sum_{\dvec \in D} I( \uvec^{(i)} \circ \dvec)  $$
Thus, by the superadditivity  of  $I$ we have
\begin{equation} \label{eq:collapse}
I({\cal A}) \le  \sum_{i=1}^L  I \left ( (\uvec^{(i)} \cdot \dvec^{(1)},\ldots,\uvec^{(i)} \cdot \dvec^{(L)}) \right )
+  \sum_{\dvec \in D} I( \uvec \circ \dvec).
\end{equation}

\medskip

We now show two lower bounds on $\OPT(V,I,L).$ For the sake of simplifying the notation we will use $\OPT(V)$ for $\OPT(V,I,L).$

\begin{lemma} \label{lemma:optCollapse}
For any instance $(V, I, L)$ of \textsc{PMWIP} 
and any $D = \{\dvec^{(1)}, \dots, \dvec^{(L)}\}\in {\cal D}$  we have
	$\OPT(V) \ge \OPT(collapse_{D}(V)).$
\end{lemma}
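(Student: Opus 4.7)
\medskip
\noindent\textbf{Proof Plan.}
The plan is to exhibit, for any optimal $L$-partition of $V$, a partition of $collapse_{D}(V)$ whose impurity is no greater; this immediately yields the inequality. Let $(V^{(1)}, \ldots, V^{(L)})$ be an optimal partition of $(V, I, L)$ and set $\uvec^{(i)} = \sum_{\vvec \in V^{(i)}} \vvec$, so that $\OPT(V) = \sum_{i=1}^L I(\uvec^{(i)})$. The natural candidate partition of the multiset $collapse_{D}(V)$ is $\mathcal{Q} = (collapse_{D}(V^{(1)}), \ldots, collapse_{D}(V^{(L)}))$. Since $collapse_{D}$ is linear, the sum of vectors in the $i$-th part of $\mathcal{Q}$ is exactly $collapse_{D}(\uvec^{(i)})$, so $I(\mathcal{Q}) = \sum_{i=1}^L I(collapse_{D}(\uvec^{(i)}))$.

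The heart of the argument is then the pointwise inequality $I(collapse_{D}(\wvec)) \le I(\wvec)$ for every $\wvec \in \R^k_+$. To prove it, I would first note that $\|collapse_{D}(\wvec)\|_1 = \wvec \cdot \onevec = \|\wvec\|_1$, since $\sum_{j} \dvec^{(j)} = \onevec$. Writing $J_j = \{i : d^{(j)}_i = 1\}$ (these partition $[k]$) and $p_i = w_i/\|\wvec\|_1$, the inequality reduces to $\sum_{j=1}^L f\!\left(\sum_{i \in J_j} p_i\right) \le \sum_{i=1}^k f(p_i)$, i.e., to the subadditivity of $f$ on $[0,1]$. This follows from (P1)--(P2) alone: concavity of $f$ makes $f'$ non-increasing, so for $a, b \ge 0$ with $a+b \le 1$ the map $t \mapsto f(t+a) - f(t)$ is non-increasing, yielding $f(a+b) - f(b) \le f(a) - f(0) = f(a)$. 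A routine induction on $|J_j|$ then extends this to each block, and summing over $j$ gives the claim.

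Combining the two steps yields $\OPT(collapse_{D}(V)) \le I(\mathcal{Q}) = \sum_i I(collapse_{D}(\uvec^{(i)})) \le \sum_i I(\uvec^{(i)}) = \OPT(V)$, as desired.

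The (minor) obstacle is to resist deriving the pointwise inequality from Lemma \ref{lemma:subsystem} itself: the subsystem property bounds $I(\wvec)$ \emph{from above} by $I(collapse_{D}(\wvec)) + \sum_j I(\wvec \circ \dvec^{(j)})$, which is in the wrong direction for what is needed here, and combining it with the superadditivity of $I$ only recovers the trivial bound $I(collapse_{D}(\wvec)) \ge 0$. Hence the subadditivity of $f$ must be extracted afresh from (P1)--(P2), independently of (P3). Beyond that point the argument is a routine linearity-plus-concavity verification.
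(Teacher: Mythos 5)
Your proof is correct and follows essentially the same route as the paper: both project the optimal partition of $V$ through $collapse_{D}$ and bound the impurity of each projected part using the subadditivity of $f$, which is exactly the consequence of (P1)--(P2) the paper invokes. The only difference is that you spell out the concavity-plus-$f(0)=0$ argument for subadditivity that the paper leaves implicit.
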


\begin{proof}
Let $V^{(1)}, \dots, V^{(L)}$ be an optimal partition for $V$, i.e., 
\begin{equation}
\sum_{i=1}^L I(\sum_{\vvec \in V^{(i)}} \vvec) = \OPT(V).
\end{equation}
We define the corresponding partition on the vectors $\tilde{\vvec}$ in $collapse_{D}(V)$ by 
letting $\tilde{V}^{(i)} = \{collapse_D(\vvec)  \mid \vvec \in V^{(i)}\}$. We have 
\begin{equation} \label{5-step1}
\sum_{i = 1}^L I(\sum_{\tilde{\vvec} \in \tilde{V}^{(i)}} \tilde{\vvec}) \geq \OPT(collapse_D(V)).
\end{equation}

Let $\uvec^{(i)}= \sum_{\vvec \in V^{(i)}} \vvec $.
Moreover, by the subadditivity of $f$, we have that for each $i=1, \dots, L,$ it holds that 
\begin{align}
I(\sum_{\vvec \in V^{(i)}} \vvec) = ||\uvec^{(i)}||_1 \sum_{j=1}^k f \left (\frac{\uvec^{(i)}_j}{||\uvec^{(i)}||_1}\right ) = & ||\uvec^{(i)}||_1 \sum_{j=1}^L  \sum_{ \ell | d^{(j)}_{\ell}=1 } 
f \left (\frac{\uvec^{(i)}_{\ell}}{||\uvec^{(i)}||_1} \right ) \ge \notag \\
||\uvec^{(i)}||_1 \sum_{i=1}^L f\left(\frac{\sum_{\ell \mid d^{(j)}_{\ell} = 1} \uvec^{(i)}_{\ell}}{||\uvec^{(i)}||_1} \right)
 = I(\sum_{\tilde{\vvec} \in \tilde{V}^{(i)}} \tilde{\vvec}) & \notag
 \end{align}

\remove{
$$I(\sum_{\vvec \in V^{(i)}} \vvec) = ||\vvec||_1 \sum_{j=1}^k f(\frac{v_j}{||\vvec||_1}) \geq
||\vvec||_1 \sum_{i=1}^L f\left(\frac{\sum_{j \mid d^{(i)}_j = 1} v_j}{||\vvec||_1} \right)  = I(\sum_{\tilde{\vvec} \in \tilde{V}^{(i)}} \tilde{\vvec}),$$
}

which implies $$\OPT(V) = \sum_{i=1}^L I(\sum_{\vvec \in V^{(i)}} \vvec) \geq \sum_{i=1}^L I(\sum_{\tilde{\vvec} \in \tilde{V}^{(i)}} \tilde{\vvec})$$
that combined with (\ref{5-step1})  gives the desired result. 
\end{proof}

The following result, proved in \cite{BPKN:92,journals/datamine/CoppersmithHH99}, states that the groups in the optimal solution can be separated by hyperplanes in $\R^L$. 
We recall it here as it will be used to derive our second lower bound on $\OPT(V)$ contained in Lemma \ref{6-hyperplane-consequence} below. 
%which will be useful for our general approach. 

%We end this section with a theorem proved in 
%\cite{BPKN:92,journals/datamine/CoppersmithHH99} 
%that gives a necessary condition for an  $\ell$-ary partition of $V$ to be optimal.
%We state a reduced version of the theorem for binary partitions using  our notation.

\begin{lemma}[Hyperplanes Lemma~\cite{BPKN:92,journals/datamine/CoppersmithHH99}]  \label{lem:hyperplanes}
	Let $I$ be an impurity measure satisfying properties (P0)-(P2).
	If $(V_i)_{i=1,\dots L}$ is an optimal partition of a set of vectors $V$,  then
	there are vectors  $\vvec^{(1)}, \dots \vvec^{(L)} \in \R^k$ such that 
	$\vvec \in V_i$ if and only if $\vvec \cdot \vvec^{(i)} <  \vvec \cdot \vvec^{(j)}$ for each $j \neq i.$	
\end{lemma}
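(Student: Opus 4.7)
The plan is to derive the hyperplane characterization from first-order optimality conditions, using supergradients of the concave impurity $I$. Let $\uvec^{(i)}=\sum_{\vvec\in V_i}\vvec$ for each $i$. By (P0)--(P2), $I$ is concave on $\R^k_+$ (as the perspective of the concave function $\uvec\mapsto\sum_\ell f(u_\ell)$) and positively $1$-homogeneous, hence strictly concave transversally to rays. For each $i$ I take $\vvec^{(i)}$ to be a supergradient of $I$ at $\uvec^{(i)}$. Combining the supergradient inequality with Euler's identity for $1$-homogeneous concave functions (which gives $\uvec^{(i)}\cdot\vvec^{(i)}=I(\uvec^{(i)})$) yields the key linear upper bound
\begin{equation}\label{eq:plan-keybound}
I(\uvec)\le\uvec\cdot\vvec^{(i)}\quad\text{for every }\uvec\in\R^k_+\text{ and every }i\in[L],
\end{equation}
with equality precisely along the ray through $\uvec^{(i)}$.

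I will then apply a single-vector swap argument. Fix $\vvec\in V_i$ and $j\neq i$ and consider the alternative partition obtained by moving $\vvec$ from $V_i$ to $V_j$. Its impurity is at least $\OPT(V)$ by optimality of $(V_i)_i$. On the other hand, applying \eqref{eq:plan-keybound} to bound the two modified cluster sums and using $\uvec^{(\ell)}\cdot\vvec^{(\ell)}=I(\uvec^{(\ell)})$ on the unchanged clusters, its impurity is at most
$$\sum_{\ell\neq i,j}I(\uvec^{(\ell)})+(\uvec^{(i)}-\vvec)\cdot\vvec^{(i)}+(\uvec^{(j)}+\vvec)\cdot\vvec^{(j)}=\OPT(V)+\vvec\cdot(\vvec^{(j)}-\vvec^{(i)}).$$
Chaining the two bounds yields the weak separation $\vvec\cdot\vvec^{(i)}\le\vvec\cdot\vvec^{(j)}$. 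To upgrade to strict inequality I invoke strict concavity of $f$ (property (P2)), which makes \eqref{eq:plan-keybound} strict whenever $\uvec$ is not a non-negative scalar multiple of $\uvec^{(i)}$; hence at least one of the two supergradient bounds above is strict unless $\vvec$ is collinear with both $\uvec^{(i)}$ and $\uvec^{(j)}$. By $1$-homogeneity, in that degenerate case moving $\vvec$ between $V_i$ and $V_j$ leaves the total impurity unchanged, the optimum is genuinely non-unique, and a generic infinitesimal perturbation of the $\vvec^{(\ell)}$'s breaks the tie without destroying the strict inequalities already established. The ``if'' direction is then automatic: ``only if'' makes $i$ the unique minimizer of $\ell\mapsto\vvec\cdot\vvec^{(\ell)}$ for every $\vvec\in V_i$, and since $(V_i)_i$ partitions $V$, the converse follows.

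The main obstacle I expect is making the strict inequality and the choice of $\vvec^{(i)}$ fully rigorous when $\uvec^{(i)}$ lies on the boundary of $\R^k_+$. For $I_{Ent}$ (and more generally for any $I\in\calC$ with $f'(0^+)=+\infty$) the gradient of $I$ blows up at any cluster sum having a zero coordinate, so $\vvec^{(i)}$ must be picked from a suitable generalized supergradient set, or obtained as the limit of supergradients of interior perturbations $\uvec^{(i)}+\varepsilon\onevec$ as $\varepsilon\to 0^+$. The degenerate collinear case, where the optimal partition is truly non-unique, likewise requires a careful tie-breaking perturbation rather than a purely first-order calculation.
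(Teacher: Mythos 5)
First, a point of comparison: the paper does not prove Lemma~\ref{lem:hyperplanes} at all --- it is imported from \cite{BPKN:92,journals/datamine/CoppersmithHH99} and used as a black box --- so there is no internal proof to measure yours against. Your core argument is the classical one from those references: concavity plus positive $1$-homogeneity of $I$ give the linear bound $I(\uvec)\le \uvec\cdot\vvec^{(i)}$ with equality at $\uvec^{(i)}$ (Euler), and the single-vector exchange then yields the weak separation $\vvec\cdot\vvec^{(i)}\le\vvec\cdot\vvec^{(j)}$ for every $\vvec\in V_i$ and $j\neq i$. Up to the boundary issue you yourself flag (for $I_{Ent}$ a cluster sum with a zero coordinate admits no finite supergradient, so $\vvec^{(i)}$ has to be a capped or limiting price vector), that part is sound.

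The genuine gap is the upgrade to strict inequalities. You claim that in the degenerate collinear case ``a generic infinitesimal perturbation of the $\vvec^{(\ell)}$'s breaks the tie without destroying the strict inequalities already established.'' No perturbation can work, because in that case the strict inequalities demanded by the lemma for the \emph{given} partition are mutually inconsistent, not merely tied. Take $I=I_{Ent}$, $k=L=2$, $V=\{(1,1),(2,2)\}$ and the partition $V_1=\{(1,1)\}$, $V_2=\{(2,2)\}$: its impurity is $2+4=6=I_{Ent}((3,3))$, which equals the superadditivity lower bound, so this partition is optimal. The conclusion of the lemma would force $(1,1)\cdot\vvec^{(1)}<(1,1)\cdot\vvec^{(2)}$ and $(2,2)\cdot\vvec^{(2)}<(2,2)\cdot\vvec^{(1)}$, i.e.\ $(1,1)\cdot(\vvec^{(2)}-\vvec^{(1)})$ would have to be simultaneously positive and negative. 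Hence for an \emph{arbitrary} optimal partition the strict ``if and only if'' statement is unprovable as written; the results established in \cite{BPKN:92,journals/datamine/CoppersmithHH99} are existential (some optimal partition is separated by such hyperplanes) or use non-strict inequalities with an explicit tie-breaking reassignment. A correct write-up along your lines must finish not by perturbing the normals but by modifying the partition: whenever two clusters have proportional sums and carry vectors proportional to that common direction, reassign those vectors to one side, which by $1$-homogeneity (and additivity of $I$ along rays) leaves the impurity unchanged. With that repair your exchange argument proves the lemma in the form actually needed later: in Lemma~\ref{6-hyperplane-consequence} one only needs \emph{some} optimal partition of $W$ in which all vectors supported on the same coordinate $\evec_i$ land in the same group.
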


\begin{lemma} \label{6-hyperplane-consequence}
Let $(V,I,L)$ be an instance of \textsc{PMWIP}. Let $\uvec = \sum_{\vvec \in V} \vvec$. It holds that
	\begin{align*}
		\OPT(V) \ge  \min_{ D \in {\cal D} } \sum_{\dvec' \in D} I(\uvec \circ \dvec' ), 
		\label{eq:optFree}
	\end{align*}
\end{lemma}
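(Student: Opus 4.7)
The plan is to realize $\OPT(V)$ as a feasible value of a convex minimization problem whose concave objective attains its minimum at an extreme point corresponding to some $D \in \mathcal{D}$. First I would fix any optimal partition $V^{(1)},\dots,V^{(L)}$ with cluster sums $\uvec^{(i)} = \sum_{\vvec \in V^{(i)}} \vvec$, so that $\OPT(V) = \sum_{i=1}^L I(\uvec^{(i)})$. Introduce the polytope
$$\mathcal{M} = \Big\{M \in \R_{\ge 0}^{L \times k} \ : \ \sum_{i=1}^L M_{ij} = u_j \text{ for all } j \in [k]\Big\}$$
and the functional $F(M) = \sum_{i=1}^L I(M_i)$, where $M_i$ denotes the $i$-th row of $M$. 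The matrix $M^*$ with $M^*_{ij} = u^{(i)}_j$ belongs to $\mathcal{M}$ (its $j$-th column sums to $\sum_i u^{(i)}_j = u_j$) and satisfies $F(M^*) = \OPT(V)$.

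Next I would argue that $F$ is concave on $\mathcal{M}$. By (P0) and (P2), $I(\vvec) = \|\vvec\|_1 \sum_j f(v_j/\|\vvec\|_1)$ is the perspective transform of the concave function $\xvec \mapsto \sum_j f(x_j)$ evaluated at $t = \|\vvec\|_1$; since the perspective of a concave function is jointly concave and composition with the linear map $\vvec \mapsto (\vvec, \|\vvec\|_1)$ preserves concavity, $I$ is concave on $\R_{\ge 0}^k$, and hence $F$ is concave on the compact polytope $\mathcal{M}$. A concave function on a compact polytope attains its infimum at a vertex. Since $\mathcal{M}$ is the product over $j \in [k]$ of the scaled column-simplices $\Delta_j = \{x \in \R_{\ge 0}^L : \sum_i x_i = u_j\}$, whose vertices are $\{u_j \evec_i : i \in [L]\}$, the vertices of $\mathcal{M}$ are precisely the matrices where each column $j$ concentrates its entire mass $u_j$ in a single row $\sigma(j) \in [L]$. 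Such a vertex is in bijection with a $D = (\dvec^{(1)},\dots,\dvec^{(L)}) \in \mathcal{D}$ via $\dvec^{(i)}_j = 1$ iff $\sigma(j) = i$, and its $i$-th row equals $\uvec \circ \dvec^{(i)}$. Therefore
$$\min_{M \in \mathcal{M}} F(M) = \min_{D \in \mathcal{D}} \sum_{\dvec \in D} I(\uvec \circ \dvec).$$

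Combining everything: $\OPT(V) = F(M^*) \ge \min_{M \in \mathcal{M}} F(M) = \min_{D \in \mathcal{D}} \sum_{\dvec \in D} I(\uvec \circ \dvec)$, as claimed. The main subtleties are establishing concavity of $I$ (with continuous extension $I(\nullvec) = 0$) and the vertex characterization of a product of simplices; both are standard convex-analytic facts. An alternative route that makes the witnessing $D$ explicit is to invoke the Hyperplanes Lemma: letting $\vvec^{(1)},\ldots,\vvec^{(L)} \in \R^k$ be the separating vectors of the optimal partition of $V$, set $\sigma(j) = \argmin_{i} v^{(i)}_j$ and argue directly that the resulting $D$ satisfies $\sum_{\ell} I(\uvec \circ \dvec^{(\ell)}) \le \OPT(V)$. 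The convexity argument above is however more transparent and, in my view, the main obstacle lies only in properly justifying concavity of $I$ on the boundary of $\R_{\ge 0}^k$, which is routine.
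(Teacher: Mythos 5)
Your proof is correct, but it takes a genuinely different route from the paper. The paper first replaces every $\vvec\in V$ by its coordinate components $v_1\evec_1,\dots,v_k\evec_k$ to form a multiset $W$ with $\OPT(V)\ge\OPT(W)$, and then invokes the Hyperplanes Lemma (Lemma \ref{lem:hyperplanes}) on an optimal partition of $W$ to conclude that all vectors of $W$ lying on the same axis fall into the same group, so that this optimal partition is described by some $D\in{\cal D}$, whence $\OPT(W)\ge\min_{D\in{\cal D}}\sum_{\dvec\in D}I(\uvec\circ\dvec)$. You instead relax the problem to fractional splittings of the mass $u_j$ of each coordinate among the $L$ groups, observe that the objective $F$ is concave on the resulting transportation polytope (a product of scaled simplices), and use the Jensen-over-vertices argument to push the minimum to a vertex, which is exactly an element of ${\cal D}$. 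Both proofs ultimately rest on concavity of the impurity, but yours is self-contained: it does not need the separation theorem of \cite{BPKN:92,journals/datamine/CoppersmithHH99}, and it makes explicit that only concavity plus degree-one homogeneity of $I$ is used --- indeed, you could skip the perspective-function argument altogether, since homogeneity of (P0) together with superadditivity (Lemma \ref{lem:imputity-supperadditivy}) already gives $I(\lambda\xvec+(1-\lambda)\mathbf{y})\ge I(\lambda\xvec)+I((1-\lambda)\mathbf{y})=\lambda I(\xvec)+(1-\lambda)I(\mathbf{y})$; note also that the minimum being attained at a vertex needs no continuity of $f$, only the finite convex-combination inequality. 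The paper's version is shorter given that the Hyperplanes Lemma is stated anyway for other purposes, and its construction of $W$ makes the witnessing $D$ combinatorially explicit; your version additionally yields the slightly stronger fact that the fractional relaxation has the same value as $\min_{D\in{\cal D}}\sum_{\dvec\in D}I(\uvec\circ\dvec)$. One small point to keep in mind: a vertex may have an empty row, i.e.\ some $\dvec^{(i)}=\nullvec$; this is harmless since $I(\nullvec)=0$ and the paper's own proof likewise allows directions with no coordinates assigned, but it deserves a half-sentence.
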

\begin{proof}
Let $W$ be the multiset of vectors built as follows:
for each $\vvec=(v_1,\ldots,v_k) \in V$ we add the vectors $v_1 \evec_1, \ldots,v_k \evec_k$ to
$W$. Hence, $W$ has $nk$ vectors, all of them with only one non-zero coordinate.

%To see this, let $W$ be the multiset of vectors defined by 
%$$W = \biguplus_{\vvec \in V} \biguplus_{i \mid v_i \neq 0} v_i \evec_i,$$
%where $\biguplus$ denotes disjoint union and $\evec_i$ is the vector in $\R^k$ with the $i$th coordinate equal to %$1$ and 
%all other coordinates equal to $0$. 

It is not hard to see that for every partition $V^{(1)}, \dots V^{(L)}$ of $V$ 
there is a corresponding partition $W^{(1)}, \dots, W^{(L)}$
such that $\sum_{i=1}^L I(\sum_{\vvec \in V^{(i)}} \vvec) = \sum_{i=1}^L I(\sum_{\wvec \in W^{(i)}}\wvec)$, hence, 
$$\OPT(V) \geq \OPT(W).$$

Let us now employ Lemma \ref{lem:hyperplanes} to analyze $\OPT(W).$ Let $W^{(1)}, \dots, W^{(L)},$ 
be a partition of $W$ with impurity $\OPT(W)$. From Lemma \ref{lem:hyperplanes}  if two vectors 
$\wvec, \wvec' \in W$ are such that $\wvec = w \evec_i$ and $\wvec' = w' \evec_i$ for some $i$ (i.e., they have the same non-zero component) 
then there is  a $j$ such that both $\wvec$ and $\wvec'$ belong to  $W^{(j)}$.

For $j = 1, \dots, L,$ let $\dvec^{(j)}$ be the vector in $\{0,1\}^k$ such that $\dvec^{(j)}_i = 1$ if and only if
the vectors of $W$ whose only non-zero coordinate is the $i$th one are in $W^{(j)}.$ Then 
$\{\dvec^{(1)}, \dots, \dvec^{(L)}\} \in {\cal D}$ and 
%letting $\uvec = \sum_{\vvec \in V} \vvec.$
%is a set of orthogonal vectors 
we have $$\OPT(W) = \sum_{i=1}^L I(\sum_{\wvec \in W} \wvec \circ \dvec^{(i)}) =
\sum_{i=1}^L I(\sum_{\vvec \in V} \vvec \circ \dvec^{(i)}) 
= \sum_{i=1}^L I(\uvec \circ \dvec^{(i)}) 
\geq \min_{D \in {\cal D}} \sum_{\dvec' \in D} I(\uvec \circ \dvec').$$  
\end{proof}

Putting together (\ref{eq:collapse}) and Lemmas \ref{lemma:optCollapse}, \ref{6-hyperplane-consequence} we have

\begin{eqnarray} 
\frac{I({\cal A})}{\OPT(V)} &\le& 
\frac{ \sum_{i=1}^L  I \left ( (\uvec^{(i)} \cdot \dvec^{(1)},\ldots,\uvec^{(i)} \cdot \dvec^{(L)}) \right )
+  \sum_{\dvec \in D} I( \uvec \circ \dvec)}{\max \left \{ \OPT(collapse_{D}(V)) ,\min_{ D \in {\cal D}  } \sum_{\dvec' \in D} I(\uvec \circ \dvec' )\right\}} \nonumber \\
&\le& 
\frac{ \sum_{i=1}^L  I \left ( (\uvec^{(i)} \cdot \dvec^{(1)},\ldots,\uvec^{(i)} \cdot \dvec^{(L)}) \right )}{\OPT(collapse_{D}(V)) }
 + 
\frac{   \sum_{\dvec \in D} I( \uvec \circ \dvec)  }{\min_{ D \in {\cal D}  } \sum_{\dvec' \in D} I(\uvec \circ \dvec' ) } \label{eq:BoundGeneralApproach} 
\end{eqnarray}

%\begin{align}
%\frac{I({\cal B})}{\OPT(V)} \le \frac{ \sum_{i=1}^L  I \left ( (\uvec^i \cdot \dvec_1,\ldots,\uvec^i \cdot \dvec_L) \right )+  \sum_{\dvec \in D} I( \uvec \circ \dvec)}{\max \left \{ \OPT(collapse_{D}(V)) ,\min_{ D \in {\cal D}  } \sum_{\dvec' \in D} I(\uvec \circ \dvec' )\right\}} \le \notag \\
%\frac{ \sum_{i=1}^L  I \left ( (\uvec^i \cdot \dvec_1,\ldots,\uvec^i \cdot \dvec_L) \right )}{\OPT(collapse_{D}(V)) }
%+
%\frac{   \sum_{\dvec \in D} I( \uvec \circ \dvec)  }{\min_{ D \in {\cal D}  } \sum_{\dvec' \in D} I(\uvec \circ \dvec' ) }
%\label{eq:BoundGeneralApproach}
%\end{align}

Since the first ratio in the last expression is the approximation attained by the partition 
$collapse_D(V^{(1)}), \dots, collapse_D(V^{(L)})$ on the instance $(collapse_D(V), I, L)$,  
this inequality says that  we can obtain a good approximation for instance 
$(V,I,L)$ of \textsc{PMWIP} (where the vectors have dimension $k > L$)
by properly choosing: (i)  a set $D$ of $L$ orthogonal directions in $\{0,1\}^k$,  
and---given the choice of $D$---(ii) a good approximation for
the instance $(collapse_{D}(V), I, L)$, where the vectors have dimension $L$.

\section{The dominance algorithm}
For a vector $\vvec$ we say that $i$ is the dominant component for $\vvec$ if $v_i \geq v_j$ for each $j \neq i.$ 
In such a case we also say that 
$\vvec$ is  $i$-dominant.
For a set of vectors $U$ we say that $i$ is the dominant component in $U$ if $i$ is the dominant component 
for $\uvec = \sum_{\vvec \in U} \vvec.$

Given an instance $(V, I)$ let $\uvec = \sum_{\vvec \in V} \vvec$ and let us assume that, up to reordering of the components, 
it holds that $u_i \geq u_{i-1},$ for $i=1, \dots, k-1.$

Let  ${\cal A}^{\sc Dom}$ be the algorithm that proceeds according to
the following cases:

\begin{itemize}

\item[i] $k >L$. 
${\cal A}^{\sc Dom}$ assigns each vector $\vvec=(v_1,\ldots,v_k) \in V$  to group $i$ 
where $i$ is the dominant component of  vector $\vvec'=(v_1,\ldots,v_{L-1}, \sum_{j=L}^k v_j)$

\item[ii] $k \le L$. ${\cal A}^{\sc Dom}$ assigns each vector $\vvec \in V$ 
to group $i$ where $i$ is  the dominant component of  $\vvec$. 
 
\end{itemize}

The only difference between  cases (i) and (ii) is the 
reduction of dimensionality employed in the former to
aggregate the smallest components with respect to $\uvec$.

Let  $D = \{\dvec^{(1)}, \dots, \dvec^{(L)} \} \in {\cal D}$ where $\dvec^{(i)} = \evec_i$ for $i =1, \dots, L-1$ and $\dvec^{(L)} = \onevec - \sum_{\ell=1}^{L-1} \dvec^{(\ell)}$.
We notice that that vector $\vvec'$ in case (i) is exactly 
 $collapse_D(\vvec)$. Thus, if $k > L$,  we can rewrite (\ref{eq:BoundGeneralApproach}) as 

\begin{equation} \label{eq:algoAdom}
 \frac{I({\cal A}^{\sc Dom}(V))}{\OPT(V)} \le
\frac{I({\cal A}^{\sc Dom}(collapse_D(V))}{\OPT(collapse_{D}(V)) }
 + 
\frac{   \sum_{\dvec \in D} I( \uvec \circ \dvec)  }{\min_{ D \in {\cal D}  } \sum_{\dvec' \in D} I(\uvec \circ \dvec' ) } 
\end{equation}

The next lemma  is useful to prove an upper bound on the approximation of 
${\cal A}^{\sc Dom}$, when $k \le L$.

%We prepare a simple upper bound on $\frac{I({\cal A}^{\sc Dom}_{L}(collaps_D(V))}{\OPT_I(collapse_{D}(V)) }.$

\begin{lemma} \label{generalanalysis}
Let $(V,I)$ be an instance of  \textsc{$(L,k)$-PMWIP} with $I \in {\cal C}$ and $k \leq L$.  
For a subset $S$ of $V$ let $\uvec^S=\sum_{\vvec \in S} \vvec$.
If there exist positive numbers $\alpha, \beta$ such that for each $S \subseteq V$ we have
$$\beta(\|\uvec^S\|_1 - \|\uvec^S\|_{\infty}) \leq I(\uvec^S) \leq \alpha (\| \uvec^S\|_1 - \|\uvec^S\|_{\infty})$$
then the algorithm ${\cal A}^{\sc Dom}$ guarantees $\alpha/\beta$ approximation, 
i.e., 
$$ \frac{I({\cal A}^{\sc Dom}(V))}{\OPT(V)} \le \frac{\alpha}{\beta}.$$ 
\end{lemma}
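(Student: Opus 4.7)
The plan is to sandwich both sides of the ratio $I(\mathcal{A}^{\sc Dom}(V))/\OPT(V)$ by the quantity $\sum_{i} (\|\uvec^{V^{(i)}}\|_1 - \|\uvec^{V^{(i)}}\|_\infty)$ for appropriate partitions, using the hypothesized two-sided bound as the bridge, and then show that the dominance partition is extremal for this surrogate quantity.

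First, I would fix the notation. Let $(V_1^D,\ldots,V_k^D)$ denote the partition produced by $\mathcal{A}^{\sc Dom}$; groups $V_{k+1}^D,\ldots,V_L^D$ are empty since $k\le L$ and each vector is placed in the group indexed by its dominant component. I would first observe the key structural fact: if every vector in a set $S$ has $i$ as its dominant component, then $i$ is also the dominant component of $\uvec^S=\sum_{\vvec \in S}\vvec$, because $(\uvec^S)_i = \sum_{\vvec\in S} v_i \ge \sum_{\vvec\in S} v_j = (\uvec^S)_j$ for every $j$. Hence $\|\uvec^{V_i^D}\|_\infty = (\uvec^{V_i^D})_i$. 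Applying the upper bound from the hypothesis to each group of the dominance partition gives
\[
I(\mathcal{A}^{\sc Dom}(V)) \;=\; \sum_{i=1}^{k} I(\uvec^{V_i^D}) \;\le\; \alpha \sum_{i=1}^{k}\bigl(\|\uvec^{V_i^D}\|_1 - \|\uvec^{V_i^D}\|_\infty\bigr).
\]

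The heart of the argument, and the step I expect to be the main (though not severe) obstacle, is to show that the dominance partition minimizes the surrogate quantity $\Phi(\mathcal{P}) := \sum_{j} (\|\uvec^{W_j}\|_1 - \|\uvec^{W_j}\|_\infty)$ over all $L$-partitions $\mathcal{P}=(W_1,\ldots,W_L)$. Since $\sum_j \|\uvec^{W_j}\|_1 = \|\uvec^V\|_1$ is partition-independent, this is equivalent to showing that $\sum_j \|\uvec^{W_j}\|_\infty$ is maximized by the dominance partition. For any partition and any group $W_j$, let $\ell(j)$ be a coordinate attaining $\|\uvec^{W_j}\|_\infty$. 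Then
\[
\sum_{j=1}^{L}\|\uvec^{W_j}\|_\infty \;=\; \sum_{j=1}^{L}\sum_{\vvec\in W_j} v_{\ell(j)} \;\le\; \sum_{\vvec\in V}\max_{1\le i\le k} v_i \;=\; \sum_{i=1}^{k}(\uvec^{V_i^D})_i \;=\; \sum_{i=1}^{k}\|\uvec^{V_i^D}\|_\infty,
\]
where the inequality holds componentwise since $v_{\ell(j)}\le \max_i v_i$ for every $\vvec$. Consequently $\Phi(\mathcal{P})\ge \Phi(\mathcal{A}^{\sc Dom}(V))$ for every $L$-partition $\mathcal{P}$.

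Finally, I would apply the lower bound from the hypothesis to an optimal partition $(W_1^*,\ldots,W_L^*)$, obtaining
\[
\OPT(V) \;=\; \sum_{j=1}^{L} I(\uvec^{W_j^*}) \;\ge\; \beta\sum_{j=1}^{L}\bigl(\|\uvec^{W_j^*}\|_1 - \|\uvec^{W_j^*}\|_\infty\bigr) \;=\; \beta\,\Phi\bigl((W_1^*,\ldots,W_L^*)\bigr) \;\ge\; \beta\,\Phi(\mathcal{A}^{\sc Dom}(V)),
\]
the last inequality by the extremal property just established. Dividing the upper bound on $I(\mathcal{A}^{\sc Dom}(V))$ by this lower bound on $\OPT(V)$ yields $I(\mathcal{A}^{\sc Dom}(V))/\OPT(V) \le \alpha/\beta$, as desired.
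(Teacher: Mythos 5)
Your proof is correct, but it takes a genuinely different route from the paper's. The paper lower-bounds $\OPT(V)$ by $\sum_{\vvec\in V} I(\vvec)$ via superadditivity (Lemma~\ref{lem:imputity-supperadditivy}) and then argues cluster-by-cluster: for each dominance group it applies the hypothesized upper bound to the group sum and the lower bound to each individual vector, using the same structural fact you use (inside an $i$-dominant group the $\ell_\infty$ norms add up) to conclude that every group's impurity is within $\alpha/\beta$ of the sum of its members' impurities. You instead never invoke superadditivity: you apply the lower bound of the hypothesis directly to the groups of an optimal partition and prove an extremal property of ${\cal A}^{\sc Dom}$, namely that it minimizes the surrogate $\Phi({\cal P})=\sum_j(\|\uvec^{W_j}\|_1-\|\uvec^{W_j}\|_\infty)$ over all $L$-partitions (equivalently maximizes $\sum_j\|\uvec^{W_j}\|_\infty$, since the total $\ell_1$ mass is partition-independent). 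What your route buys is self-containment and a cleaner conceptual picture: the algorithm exactly optimizes a proxy objective that sandwiches $I$ within a factor $\alpha/\beta$, so only the hypothesis (plus non-negativity of coordinates) is needed. What the paper's route buys is the stronger local statement that the ratio $\alpha/\beta$ holds against the weakest lower bound $\sum_{\vvec\in V} I(\vvec)$ group by group; this skeleton is reused later (e.g., in the proof of Theorem~\ref{theo:uniform-Ient-ksmall}) in situations where the two-sided sandwich of this lemma is not available, whereas your argument would have to be redone there. Minor remark: empty groups of the optimal partition contribute $0$ on both sides, so they cause no trouble in either argument.
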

\begin{proof}
Let $(V^{(1)}, \dots, V^{(L)})$ be the partition of $V$ returned by ${\cal A}^{\sc Dom}$.
Then, by the superadditivity of $I$ 
$$\frac{I({\cal A}^{\sc Dom})}{\OPT(V)} = \frac{ \sum_{i=1}^k I ( \sum_{ \vvec \in V^{(i)}} \vvec) }{ \sum_{i=1}^k \sum_{\vvec \in V^{(i)} } I ( \vvec) }.
$$
Thus, it is enough to prove that for $i=1,\ldots,k$

$$ \frac{ I ( \sum_{ \vvec \in V^{(i)}} \vvec ) } { \sum_{\vvec \in V^{(i)} } I ( \vvec)   } \le \frac{\alpha}{\beta}
$$

%To simplify the notation we  assume $i=1$.
Fix $i \in [k]$ and let $\uvec = \sum_{ \vvec \in V^{(i)} } \vvec$.
By hypothesis, we have $$I ( \uvec ) \le \alpha( \| \uvec \|_1 - \| \uvec \|_{\infty}) \quad \mbox{and} \quad 
I( \vvec ) \ge  \beta (\| \vvec \|_1 - \| \vvec \|_{\infty} ) \mbox{ for every } \vvec \in V^{(i)} .$$

Moreover, by construction, for every vector $\vvec \in V^{(i)}$ we have $\| \vvec \|_{\infty}=v_i$, so that
$ \sum_{\vvec \in V^{(i)}} \| \vvec \|_{\infty}  = \| \uvec \|_{\infty} $.  
Putting everything together we have

$$ \frac{ I ( \sum_{ \vvec \in V^{(i)}} \vvec) }{ \sum_{\vvec \in V^{(i)} } I ( \vvec)    } 
\le \frac{ \alpha( \| \uvec \|_1 - \| \uvec \|_{\infty} ) }{ \sum_{\vvec \in V^{(i)} }   \beta (\| \vvec \|_1 - \| \vvec \|_{\infty}) } = 
\frac{ \alpha ( \| \uvec \|_1 - \| \uvec \|_{\infty} ) }{ \beta(\| \uvec \|_1 - \| \uvec \|_{\infty}) } = \frac{\alpha}{\beta},
$$
as desired.
\end{proof}

\subsection{Analysis of ${\cal A}^{\sc Dom}$ for the Gini impurity measure $I_{Gini}$}
In this section we show that algorithm ${\cal A}^{\sc Dom}$ achieves constant $3$-approximation 
when the impurity measure is $I_{Gini}$. 

%We first analyze the show a simple approximation algorithm for the
%{\em $L$-PMWIP}, for 
% the special case where  $k \le L$.
%The algorithm starts with $L$ empty groups $V^{(1)}$,\ldots,$V^{(L)}$.
%Then,  each vector $\vvec=(v_1,\ldots,v_k) \in V$
%is put in the group $V^{(j)}$ such that $v_j=\max_{i=1}^k v_i$.
%We call this algorithm ${\cal A}$.

The following lemma together with Lemma \ref{generalanalysis} will show that ${\cal A}^{\sc Dom}$ guarantees 
$2$-approximation on instances with $k \leq L.$ 

\begin{lemma} \label{boundsGini}
For a vector  $\vvec \in \R_{+}^k$ we
have
$\displaystyle{\| \vvec\|_1 - \| \vvec\|_{\infty}  \le I_{Gini} ( \vvec) \le  2( \| \vvec\|_1 - \| \vvec\|_{\infty})}$.
\end{lemma}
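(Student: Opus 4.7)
\bigskip
\noindent
\textbf{Proof Plan.}
The plan is a direct algebraic manipulation of $I_{Gini}(\vvec) = \sum_{i=1}^{k} v_i\bigl(1 - \tfrac{v_i}{\|\vvec\|_1}\bigr)$, after normalizing notation. Set $S = \|\vvec\|_1$ and, without loss of generality (reorder coordinates), assume $v_1 = \|\vvec\|_{\infty}$, so the bounds to prove become
\[
S - v_1 \;\le\; I_{Gini}(\vvec) \;\le\; 2(S - v_1).
\]
The useful identity to keep in mind is
\[
I_{Gini}(\vvec) \;=\; \frac{1}{S}\sum_{i=1}^{k} v_i (S - v_i) \;=\; S - \frac{\|\vvec\|_2^2}{S},
\]
both forms of which will make the two inequalities one-line computations.

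For the lower bound I would use the second form. Since each $v_i \le v_1$, we get $\|\vvec\|_2^2 = \sum_i v_i^2 \le v_1 \sum_i v_i = v_1 S$, hence $I_{Gini}(\vvec) = S - \|\vvec\|_2^2/S \ge S - v_1$, as required.

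For the upper bound I would split the contribution of the dominant coordinate from the rest:
\[
I_{Gini}(\vvec) \;=\; v_1\!\left(1 - \tfrac{v_1}{S}\right) + \sum_{i \ne 1} v_i\!\left(1 - \tfrac{v_i}{S}\right).
\]
For the second sum, each factor $(1 - v_i/S) \le 1$, so it is at most $\sum_{i \ne 1} v_i = S - v_1$. For the first term, the inequality $v_1(1 - v_1/S) \le S - v_1$ rearranges to $v_1^2 - 2 v_1 S + S^2 = (v_1 - S)^2 \ge 0$, which holds trivially. Adding the two bounds gives $I_{Gini}(\vvec) \le 2(S - v_1)$.

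There is no real obstacle here; the only point worth double-checking is that the case $v_1 \ge S/2$ versus $v_1 < S/2$ does not force a split, and indeed the argument above is uniform in the value of $v_1 \in [0,S]$.
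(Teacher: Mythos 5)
Your proof is correct. The lower bound is literally the paper's argument: both use $\|\vvec\|_2^2\le \|\vvec\|_\infty\|\vvec\|_1$ in the identity $I_{Gini}(\vvec)=\|\vvec\|_1-\|\vvec\|_2^2/\|\vvec\|_1$. For the upper bound you take a slightly different (and arguably cleaner) route: you keep the per-coordinate form $\sum_i v_i(1-v_i/S)$, bound the dominant term by $S-v_1$ via $(S-v_1)^2\ge 0$ and the remaining sum by $S-v_1$ via $1-v_i/S\le 1$, then add. The paper instead expands the Gini impurity as the normalized sum of pairwise products $\frac{1}{S}\sum_i\sum_{j\ne i}v_iv_j$, isolates the pairs involving the dominant coordinate (contributing $2\|\vvec\|_\infty(\|\vvec\|_1-\|\vvec\|_\infty)/\|\vvec\|_1$) and bounds the rest by $(\|\vvec\|_1-\|\vvec\|_\infty)^2/\|\vvec\|_1$, which yields the slightly sharper intermediate bound $(\|\vvec\|_1-\|\vvec\|_\infty)(\|\vvec\|_1+\|\vvec\|_\infty)/\|\vvec\|_1$ before relaxing to the factor $2$. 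Both are elementary one-paragraph computations and both give exactly the constant $2$ needed downstream, so the difference is purely one of bookkeeping; your version avoids the cross-product expansion, the paper's version records a marginally tighter bound along the way. Your closing remark that no case split on $v_1\ge S/2$ is needed is also consistent with the paper, which likewise argues uniformly in $v_1\in[0,S]$.
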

\begin{proof}
First we prove the upper bound.
We have that
\begin{eqnarray}
I_{Gini} ( \vvec) &=& \|\vvec\|_1 \sum_{i=1}^k \frac{v_i}{\| \vvec \|_1} \left( 1 - \frac{v_i}{\| \vvec \|_1}\right) 
 =  \frac{ \sum_{i=1}^k  \sum_{j \ne i} v_i v_j }{\| \vvec\|_1} \\
 &=& \frac{\|\vvec\|_{\infty}(\|\vvec\|_1 - \|\vvec\|_{\infty}) + \sum_{i | v_i \neq \|\vvec\|_{\infty}} v_i \sum_{j \neq i} v_j}{\|\vvec\|_1} \\
 &=& \frac{2 \|\vvec\|_{\infty}(\|\vvec\|_1 - \|\vvec\|_{\infty}) + \sum_{i | v_i \neq \|\vvec\|_{\infty}} v_i \sum_{j \neq \|\vvec\|_{\infty}, j\neq i} v_j}{\|\vvec\|_1} \\
  &\le&  \frac{ 2 \| \vvec\|_{\infty} (  \| \vvec\|_{1} - \| \vvec\|_{\infty}  )  + (  \| \vvec\|_{1} - \| \vvec\|_{\infty})^2      }{\| \vvec\|_1} \\
  &=& \frac{( \| \vvec\|_{1} - \| \vvec\|_{\infty})( \| 2\| \vvec\|_{\infty} + \vvec\|_{1} - \| \vvec\|_{\infty})}{\| \vvec\|_1} \\
& = & \frac{ (  \| \vvec\|_{1} - \| \vvec\|_{\infty}  ) (   \| \vvec\|_{1} + \| \vvec\|_{\infty})  }{
 {\| \vvec\|_1}} \le  2( \| \vvec\|_{1} - \| \vvec\|_{\infty})  
\end{eqnarray}

For the lower bound we observe that $\sum_{i=1}^k v_i^2 \leq \|\vvec \|_{\infty} \|\vvec \|_{1}.$ Therefore, we have
\begin{equation}
I_{Gini} ( \vvec) = \|\vvec\|_1 \sum_{i=1}^k \frac{v_i}{\| \vvec \|_1} \left( 1 - \frac{v_i}{\| \vvec \|_1}\right) 
= \|\vvec \|_1 - \frac{\sum_{i=1}^k v_i^2}{\|\vvec\|_1} \geq \|\vvec \|_1 - \frac{ \|\vvec \|_{\infty} \|\vvec \|_{1}}{\|\vvec\|_1} = 
\|\vvec \|_1 - \|\vvec \|_{\infty}.
\end{equation}

%
%For the lower bound we assume w.l.o.g that
%$v_1= \| \vvec\|_{\infty}$.
%Thus,
%
%\begin{align}
%I_{Gini} ( \vvec) = \frac{ \sum_{i=1}^{k-1} \sum_{j=2}^k   v_i v_j }{\| \vvec\|_1} =
% \frac{  \| \vvec\|_{\infty} (  \| \vvec\|_{1} - \| \vvec\|_{\infty}  ) +
%\sum_{i=2}^{k-1} \sum_{j=3}^k   v_i v_j }{\| \vvec\|_1}
%\end{align}
%However, the expression
%$$\sum_{i=2}^{k-1} \sum_{j=3}^k   v_i v_j $$
% is minimized on a vertex of
%the polyhedron 
%$${\cal P}= \left \{ (v_2,\ldots,v_k) \bigg | \sum_{i=2}^k =  \| \vvec\|_{1} - \| \vvec\|_{\infty} \mbox{ and }
%v_i \le v_1 \mbox{ for every } i \right \} $$
%Thus, the expression is minimized when 
% $v_i=v_1$
%for $i=1,\ldots, \lfloor \| \vvec\|_{1}/\| \vvec\|_{\infty}  \rfloor $;
%$v_i= \sum_{i=2}^k v_i -  \lfloor \| \vvec\|_{1}/\| \vvec\|_{\infty} \rfloor  v_1$
%when $i= \lceil \| \vvec\|_{1}/\| \vvec\|_{\infty} \rceil $;
%and $v_i=0$ for $i>  \left \lceil \| \vvec \|_{1} /  \| \vvec\|_{\infty} \right \rceil $.
% 
\end{proof}

\begin{theorem} \label{Gini-L-approx}
Algorithm ${\cal A}^{\sc Dom}$ is a 2-approximation algorithm for instances $(V,I)$ with $I=I_{Gini}$ and $k \leq L.$
\end{theorem}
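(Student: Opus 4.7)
The plan is a direct combination of Lemma \ref{generalanalysis} and Lemma \ref{boundsGini}. The hypothesis of Lemma \ref{generalanalysis} requires identifying constants $\alpha, \beta > 0$ such that for every $S \subseteq V$, with $\uvec^S = \sum_{\vvec \in S} \vvec$, we have
$$\beta(\|\uvec^S\|_1 - \|\uvec^S\|_{\infty}) \leq I_{Gini}(\uvec^S) \leq \alpha(\|\uvec^S\|_1 - \|\uvec^S\|_{\infty}).$$
But this is exactly the content of Lemma \ref{boundsGini}, which holds pointwise for any non-negative vector in $\R^k$ and in particular for $\uvec^S$: the lower bound gives $\beta = 1$ and the upper bound gives $\alpha = 2$.

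Hence I would first note that $\uvec^S \in \R_+^k$ for any subset $S \subseteq V$ since $V \subset \N^k$, so Lemma \ref{boundsGini} applies with $\vvec = \uvec^S$. Then I would simply instantiate Lemma \ref{generalanalysis} with $I = I_{Gini}$, $\alpha = 2$, $\beta = 1$, and conclude
$$\frac{I_{Gini}({\cal A}^{\sc Dom}(V))}{\OPT(V)} \le \frac{\alpha}{\beta} = 2.$$

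There is no real obstacle here: the whole technical content sits in Lemma \ref{boundsGini} (the pointwise sandwich of the Gini impurity between the two quantities $\|\vvec\|_1 - \|\vvec\|_\infty$ and $2(\|\vvec\|_1 - \|\vvec\|_\infty)$) and in Lemma \ref{generalanalysis} (which leverages the fact that in case (ii) of ${\cal A}^{\sc Dom}$ each vector is assigned to its own dominant coordinate, so the $\ell_\infty$ norms of the vectors within each group sum to the $\ell_\infty$ norm of the group sum). The proof of the theorem is thus essentially a one-line invocation, with the only thing worth spelling out being the observation that the hypothesis of Lemma \ref{generalanalysis} must hold for all subsets $S$, and this is automatic since Lemma \ref{boundsGini} is a statement about arbitrary vectors in $\R_+^k$.
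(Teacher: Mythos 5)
Your proof is correct and is exactly the paper's argument: the theorem is derived by instantiating Lemma \ref{generalanalysis} with $\alpha=2$, $\beta=1$, these constants being supplied by Lemma \ref{boundsGini} applied to the group sums $\uvec^S$. Your added remark that the subset hypothesis is automatic because Lemma \ref{boundsGini} holds for arbitrary vectors in $\R_+^k$ is a fair (and correct) spelling-out of what the paper leaves implicit.
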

\begin{proof}
Directly from Lemmas \ref{generalanalysis} and \ref{boundsGini}.
\end{proof}

The following lemma will provide an upper bound (in fact an exact estimate) of the second ratio in (\ref{eq:algoAdom}).

\begin{lemma} \label{Gini-directions}
Fix a vector $\uvec \in \R^k$ such that $u_{i} \geq u_{i+1}$ for each $i=1, \dots k-1$ 
and $D = \{\dvec^{(1)}, \dots \dvec^{(L)}\} \in {\cal D}$ with $\dvec^{(i)} = \evec_i$ for $i=1, \dots, L-1$ and
$\dvec^{(L)} = \sum_{j=L}^k \evec_j = \onevec - \sum_{j=1}^{L-1} \dvec^{(j)}.$
It holds that  
$$\sum_{\dvec \in D} I(\uvec \circ \dvec) = \min_{ D' \in {\cal D} } \left \{
\sum_{\dvec' \in D'} I(\uvec \circ \dvec' ) \right \}$$
\end{lemma}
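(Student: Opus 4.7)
The plan is to turn the minimization of $\sum_{\dvec\in D'} I(\uvec\circ\dvec)$ (with $I=I_{Gini}$, as inherited from this subsection) into a combinatorial maximization and solve it via two exchange arguments. Using $I_{Gini}(\vvec)=\|\vvec\|_1-\sum_i v_i^2/\|\vvec\|_1$, for any $D'=(\dvec^{(1)},\dots,\dvec^{(L)})\in{\cal D}$ with induced partition $(A_1,\dots,A_L)$ of $[k]$, set $s_\ell=\sum_{i\in A_\ell}u_i$ and $q_\ell=\sum_{i\in A_\ell}u_i^2$; then $\sum_\ell I_{Gini}(\uvec\circ\dvec^{(\ell)})=\|\uvec\|_1-\sum_\ell q_\ell/s_\ell$, so minimizing over $D'$ is equivalent to maximizing $T(D'):=\sum_\ell q_\ell/s_\ell$. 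For the distinguished $D$ the singletons $\{\ell\}$ contribute $u_\ell^2/u_\ell=u_\ell$ each and the last part contributes $\sum_{j\geq L}u_j^2/\sum_{j\geq L}u_j$, giving $T(D)=u_1+\cdots+u_{L-1}+\sum_{j\geq L}u_j^2/\sum_{j\geq L}u_j$; the goal is to prove $T(D)\geq T(D')$ for every $D'\in{\cal D}$.

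First exchange: reduce to at most one non-singleton part. If a partition has two non-singleton parts $A,B$, I would pick $i^*\in A\cup B$ with $u_{i^*}=a$ maximal (WLOG $i^*\in A$) and replace $A,B$ by $\{i^*\}$ and $(A\cup B)\setminus\{i^*\}$. Writing $t=s_A-a$, $\tilde p=at-(q_A-a^2)=\sum_{j\in A\setminus\{i^*\}}u_j(a-u_j)\geq 0$, and $\tilde q=as_B-q_B=\sum_{j\in B}u_j(a-u_j)\geq 0$ (nonnegativity because $a$ dominates every $u_j$ in $A\cup B$), a direct expansion reveals that, after clearing the positive denominator $(t+s_B)(a+t)s_B$, the change $\Delta T$ collapses to $\tilde p\, s_B(s_B-a)+\tilde q\, t\, s_A$. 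When $s_B\geq a$ this is manifestly nonnegative. I expect the case $s_B<a$ to be the main obstacle; here I would use that $u_j\leq s_B$ for each $j\in B$, so $a-u_j\geq a-s_B>0$, to get $\tilde q\geq s_B(a-s_B)$, and combine with the trivial bound $\tilde p\leq at\leq ts_A$ to deduce $\tilde q\, t\, s_A\geq s_B(a-s_B)\, t\, s_A\geq \tilde p\, s_B(a-s_B)$, whence $\Delta T\geq 0$. Iterating reduces the partition to one with $L-1$ singletons and a single multi-element block.

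Second exchange: identify which indices are the singletons. For a partition $(\{i_1\},\dots,\{i_{L-1}\},B)$ with $S=\{i_1,\dots,i_{L-1}\}$, I would compute that a swap $i\leftrightarrow j$ between $i\in S$ and $j\in B$ changes $T$ by $(u_j-u_i)\cdot[s_B^2-2s_B u_j+q_B]/(s_B s_{B'})$, where $s_{B'}=s_B-u_j+u_i$, and rewrite the bracketed factor as $\bigl(\sum_{l\in B\setminus\{j\}}u_l\bigr)^2+\sum_{l\in B\setminus\{j\}}u_l^2\geq 0$. Hence the sign of the change matches that of $u_j-u_i$, so at an optimum every $u_i$ with $i\in S$ is at least every $u_j$ with $j\in B$, which by the sorted ordering of $\uvec$ forces $S=\{1,\dots,L-1\}$. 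Combining the two exchanges, $T$ is maximized by the partition $\{1\},\dots,\{L-1\},\{L,\dots,k\}$, which is exactly the $D$ of the lemma, establishing the claimed equality.
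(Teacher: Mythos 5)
Your proof is correct, and its core exchange coincides with the paper's key Claim, but the overall organization is genuinely different. The paper works with the impurity directly: it takes a minimizer $D^*$ with $|D^*\cap D|$ maximum and applies a single exchange (the Claim, taken from Lemma~4.1 of \cite{conf/icml/LMP18} and re-proved in the appendix) that merges two blocks of $D^*$ and splits off $\evec_{i^*}$, where $i^*$ is the \emph{smallest index} in the merged support; since $\uvec$ is sorted, that index is simultaneously a maximal-value element and an element of $\{1,\dots,L-1\}$, so one exchange both preserves optimality and enlarges the intersection with $D$, contradicting maximality. You instead rewrite the objective as maximizing $T(D')=\sum_\ell q_\ell/s_\ell$ and run a two-phase exchange: your Phase-1 step (extract a maximum-value element of two merged blocks as a singleton; the reduction of $\Delta T$ to $\tilde p\,s_B(s_B-a)+\tilde q\,t\,s_A$ and the case split $s_B\gtrless a$ check out) is exactly the content of the paper's Claim, proved by a different and arguably more transparent algebraic route than the appendix's $\alpha,\beta$ manipulation; your Phase-2 swap identity, with sign equal to that of $u_j-u_i$ times the nonnegative factor $\bigl(\sum_{l\in B\setminus\{j\}}u_l\bigr)^2+\sum_{l\in B\setminus\{j\}}u_l^2$, is an extra ingredient the paper does not need, because its smallest-index extraction together with the intersection potential settles which indices end up as singletons. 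Your route buys a self-contained, fully explicit argument; the paper's buys a single exchange type and a shorter global wrap-up. Two points to tighten: dividing by $s_\ell$, and the strictness needed to conclude that an optimal one-block configuration has every singleton value at least every block value, tacitly assume all coordinates of $\uvec$ are positive (true in the paper's setting, but say so); and with ties such as $u_{L-1}=u_L$ you should conclude only that some maximizer has the same value multisets as $D$ (hence $T(D)$ equals the maximum), rather than literally $S=\{1,\dots,L-1\}$.
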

\begin{proof}
%When $L=2$ the result holds \cite{}.
Let $D^* \in {\cal D}$ be such that 
\begin{equation} \label{gini-optimaldirections}
\sum_{\dvec^* \in D^*} I(\uvec \circ \dvec^*) = \min_{ D' \in {\cal D} } \left \{
\sum_{\dvec' \in D'} I(\uvec \circ \dvec' ) \right \}
\end{equation}
and $|D^* \cap D|$ is maximum among all 
$D^*$ satisfying (\ref{gini-optimaldirections}). 

Let us assume for the sake of contradiction that $D^* \neq D$. Let $\hat{\dvec} \in D^*$ such that $\hat{\dvec}_k =1$.
We note that $\hat{\dvec} \ne \dvec^{(L)}$ 
for otherwise we would have $D^*=D$.

Let  $\cvec \in D^* \setminus (D \cup \{\hat{\dvec}\})$ such that for all other $\dvec \in  D^* \setminus (D \cup \{\hat{\dvec}\})$ we have 
$\min\{i \mid \cvec_i=1\} < \min\{i \mid \dvec_i=1\}$, i.e., $\cvec$ is the vector in $D^* \setminus (D \cup \{\hat{\dvec}\})$
with the smallest non-zero component. 

%some $i \in [L-1]$. Again $\dvec$ must exist for otherwise we would have $D^*=D$.
Let $\vvec = \cvec + \hat{\dvec}$ and 
 $i^*$ be the minimum integer such that $\vvec_{i^*}=  1$. 
Note that $i^* \leq L-1,$ for otherwise we would have $D^*\notin {\cal D}$.
Let  $F$ be the set of vectors 
from ${\cal D}$ defined by
$$F= (D^* \setminus \{\hat{\dvec}, \cvec\} ) \cup \{\dvec^{(i^*)},\vvec - \dvec^{(i^*)} \}.$$

%Define 
%Then we can obtain a new set of directions 
%$F$ such that 
%$$\sum_{\dvec^* \in F} I(\uvec \circ \dvec^*) = \min_{ D' \in {\cal D} } \left \{
%\sum_{\dvec' \in D'} I(\uvec \circ \dvec' ) \right \}.$$

The following claim directly follows from \cite[Lemma 4.1]{conf/icml/LMP18}. For the sake of self-containment we defer its proof to the appendix. 

\noindent
{\em Claim.} Fix $\uvec \in \R^k$ such that $u_i \geq u_{i+1}$ for each $i=1, \dots, k-1.$ Let ${\bf z}^{(1)}$ and ${\bf z}^{(2)}$ two orthogonal vectors
from $\{0,1\}^k\setminus\{\nullvec\}.$ Let $i^* = \min\{i \mid \max\{z^{(1)}_i, z^{(2)}_i\} = 1\}$ and 
$\vvec^{(1)} = \evec_{i^*} $ and $\vvec^{(2)} = {\bf z}^{(1)}+ {\bf z}^{(2)} - \evec_{i^*}.$ Then 
$$I(\uvec \circ \vvec^{(1)}) + I(\uvec \circ \vvec^{(2)}) \leq I(\uvec \circ {\bf z}^{(1)}) + I(\uvec \circ {\bf z}^{(1)}).$$

\medskip
By the Claim, we have that 
\begin{eqnarray*}
\sum_{\dvec \in F} I(\uvec \circ \dvec) &=& 
I(\uvec \circ \dvec^{(i^*)}) + I(\uvec \circ (\vvec - \dvec^{(i^*)})) + \sum_{\dvec \in F \cap D^*} I(\uvec \circ \dvec)  \\
&\leq&  I(\uvec \circ \hat{\dvec}) + I(\uvec \circ \cvec) + \sum_{\dvec \in F \cap D^*} I(\uvec \circ \dvec)  = 
\sum_{\dvec \in D^*} I(\uvec \circ \dvec),
\end{eqnarray*}
 hence since $D$ satisfies (\ref{gini-optimaldirections}) we have that $F$ also satisfies (\ref{gini-optimaldirections}).

In addition we observe that $|D \cap F| > |D \cap D^*|$ as  by definition it shares with $D$ all that was shared  by $D^*$ and also
$\dvec^{(i^*)}$.
%$$F= (D^* \setminus \{\dvec',\dvec\} ) \cup \{\dvec^{(i^*)},\vvec - \dvec^{(i^*)}\}.$$
%Moreover, we have that To guarantee that $F$ satisfies equation \ref{} we use the result for $L=2$ \cite{}
This would be in contradiction with the maximality of the intersection of $D^*$. Therefore, we must have
$D^*  = D$ which concludes the proof. 
\end{proof}

Putting together inequalities (\ref{eq:algoAdom}), Theorem \ref{Gini-L-approx} and Lemma \ref{Gini-directions} we get that

\begin{theorem}
Algorithm ${\cal A}^{\sc Dom}$ is a linear time  3-approximation for the $I_{Gini}$
\end{theorem}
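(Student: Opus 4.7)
The plan is to combine the three ingredients already established. The theorem splits naturally into the case $k\le L$ and the case $k>L$.

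For $k\le L$, Theorem~\ref{Gini-L-approx} directly gives a $2$-approximation, which is stronger than what is claimed. So the only real work is the case $k>L$. Here I would invoke inequality (\ref{eq:algoAdom}), which says
\[
\frac{I({\cal A}^{\sc Dom}(V))}{\OPT(V)} \le
\frac{I({\cal A}^{\sc Dom}(collapse_D(V)))}{\OPT(collapse_{D}(V)) }
 \;+\;
\frac{\sum_{\dvec \in D} I(\uvec \circ \dvec)}{\min_{D' \in {\cal D}} \sum_{\dvec' \in D'} I(\uvec \circ \dvec')},
\]
where $D = \{\evec_1,\dots,\evec_{L-1},\onevec - \sum_{j=1}^{L-1}\evec_j\}$ is the specific set of directions used in case (i) of the algorithm. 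I would then bound each summand separately.

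For the first ratio, observe that $collapse_D(V)$ is an instance of dimension exactly $L$, so it falls into the $k\le L$ regime; moreover, by the very definition of ${\cal A}^{\sc Dom}$ in case (i), the partition ${\cal A}^{\sc Dom}(V)$ is exactly the partition obtained by running case (ii) of ${\cal A}^{\sc Dom}$ on $collapse_D(V)$. Therefore Theorem~\ref{Gini-L-approx} applies to the collapsed instance and bounds the first ratio by $2$. For the second ratio, Lemma~\ref{Gini-directions} says exactly that the chosen $D$ attains the minimum in the denominator, so the ratio equals $1$. Adding the two bounds yields the $3$-approximation.

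For the linear time claim, I would argue that the algorithm requires only: (a) computing $\uvec = \sum_{\vvec \in V} \vvec$ in $O(nk)$ time; (b) identifying the $L-1$ largest coordinates of $\uvec$, which can be done in $O(k)$ time via a linear-time selection algorithm; and (c) for each $\vvec\in V$, computing its dominant component either directly (in case (ii)) or with respect to the $(L-1)$ top coordinates plus the aggregated tail (in case (i)), which is $O(k)$ per vector. Since the input itself has size $\Theta(nk)$, this is linear in the input size. No step in this proof is subtle: the entire argument is a bookkeeping combination of the two preceding results with the structural lemma on optimal directions, so I do not expect any serious obstacle.
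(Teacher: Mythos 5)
Your proposal is correct and follows essentially the same route as the paper, whose proof is precisely the one-line combination of inequality (\ref{eq:algoAdom}), Theorem \ref{Gini-L-approx} (bounding the first ratio by $2$ on the $L$-dimensional collapsed instance), and Lemma \ref{Gini-directions} (making the second ratio equal to $1$). Your added details, including the observation that ${\cal A}^{\sc Dom}(V)$ in case (i) coincides with running case (ii) on $collapse_D(V)$ and the linear-time bookkeeping, just make explicit what the paper leaves implicit.
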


\subsection{Analysis of ${\cal A}^{\sc Dom}$ for the Entropy impurity measure $I_{Ent}$}

The following lemma will be useful for applying Lemma \ref{generalanalysis} 
to the analysis of the performance of ${\cal A}^{\sc Dom}$ with respect to the 
entropy impurity measure $I_{Ent}.$

\begin{lemma} \label{Entropy-basicbounds}
For a vector  $\vvec \in \R_{+}^k$ we
have
$$ ( \|  \vvec\|_1 - \| \vvec\|_{\infty} ) \log \left ( \frac{ \|  \vvec\|_1 }{\min \{ \| \vvec\|_1 - \| \vvec\|_{\infty}, \| \vvec\|_{\infty} \} }
\right ) \le  I_{Ent}(\vvec) \le
2 ( \|  \vvec\|_1 - \| \vvec\|_{\infty} ) \log \left (  \frac{ k \|  \vvec\|_1 }{ \| \vvec\|_1 - \| \vvec\|_{\infty} }
\right )
 $$.
\end{lemma}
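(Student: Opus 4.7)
The plan is to set $s=\|\vvec\|_1$ and $m=\|\vvec\|_\infty$, assume without loss of generality that $v_1=m$, and work with the decomposition
$$I_{Ent}(\vvec) \;=\; m\log\frac{s}{m} \;+\; \sum_{i=2}^{k} v_i \log\frac{s}{v_i},$$
treating the two pieces separately.

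For the lower bound, the key observation I would use is that for every $i\ge 2$ we have simultaneously $v_i \le m$ (since $m$ is the maximum coordinate) and $v_i \le s-m$ (since the entries $v_2,\dots,v_k$ are non-negative and sum to $s-m$). Hence $v_i \le \min\{m,s-m\}$ for every such $i$, so $\log(s/v_i)\ge \log\!\big(s/\min\{m,s-m\}\big)$ whenever $v_i>0$ (and the term vanishes when $v_i=0$). Summing over $i\ge 2$ gives $\sum_{i\ge 2} v_i \log(s/v_i)\;\ge\;(s-m)\log\!\big(s/\min\{m,s-m\}\big)$, and dropping the non-negative first piece $m\log(s/m)$ yields the claimed lower bound.

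For the upper bound, I would bound the sum by Jensen's inequality applied to the concave function $\phi(x)=x\log(s/x)$ on $(0,s]$: its average over the $k-1$ values $v_2,\dots,v_k$ is at most $\phi\!\big((s-m)/(k-1)\big)$, which gives
$$\sum_{i=2}^k v_i\log\frac{s}{v_i} \;\le\; (s-m)\log\frac{(k-1)s}{s-m} \;\le\; (s-m)\log\frac{ks}{s-m}.$$
It then remains to control the first piece by the same quantity, for which I would split on whether $m\ge s/2$ or $m<s/2$. In the first case, taking $p=(s-m)/s\in[0,1/2)$ in Proposition~\ref{prop:basic-inequality} gives $m\log(s/m)\le (s-m)\log(s/(s-m))\le (s-m)\log(ks/(s-m))$. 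In the second case, $m\le s-m$ together with the averaging bound $m\ge s/k$ yields $m\log(s/m)\le m\log k\le (s-m)\log k\le (s-m)\log(ks/(s-m))$. Summing the two inequalities produces the desired $2(s-m)\log(ks/(s-m))$.

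The main obstacle is the upper bound: the term $m\log(s/m)$ is not naturally a multiple of $(s-m)\log(\cdot)$, and the dichotomy on whether the dominant coordinate carries more than half of the total mass is what makes both regimes convert into the required form; the lower bound, in contrast, follows almost immediately from the observation $v_i\le \min\{m,s-m\}$ for $i\ge 2$. The trivial edge cases $k=1$ or $s=m$ reduce, with the standard convention $0\log(\cdot/0)=0$, to $0\le 0\le 0$.
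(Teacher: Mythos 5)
Your proof is correct. For the upper bound you follow essentially the paper's route: the same decomposition isolating the dominant coordinate, the same Jensen/extremal step replacing the non-dominant coordinates by $(s-m)/(k-1)$, and the same case split at $m\ge s/2$ versus $m<s/2$, with Proposition~\ref{prop:basic-inequality} handling the first case exactly as in the paper; in the second case the paper instead invokes Proposition~\ref{prop:increasing-new} to bound $m\log(s/m)\le s\log e/e\le 2(s-m)\log e/e$, while you use the averaging bound $m\ge s/k$ to get $m\log(s/m)\le (s-m)\log k$ --- both are equally short and valid (the boundary value $p=1/2$, formally outside the range of Proposition~\ref{prop:basic-inequality}, is an equality and harmless in either treatment). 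Where you genuinely diverge is the lower bound: the paper argues separately for $m>s/2$ and $m<s/2$ by identifying the extremal configurations that minimize the expression (a single residual coordinate of mass $s-m$, respectively $\lceil s/m\rceil-1$ coordinates equal to $m$), whereas you observe that every non-dominant coordinate satisfies $v_i\le\min\{m,s-m\}$ and bound each term $v_i\log(s/v_i)$ by monotonicity of $\log(s/x)$, then sum. Your version is more elementary and uniform: it avoids the case analysis and the unproved ``the expression is minimized when\dots'' claims of the paper, at no cost in the strength of the bound.
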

\begin{proof}
Let $i^*$ be an index in $[k]$ such that $v_{i^*} = \| \vvec\|_{\infty}.$ We have that 

\begin{eqnarray} 
I_{Ent}(\vvec) &=& \| \vvec\|_{\infty}  \log \frac{ \|  \vvec\|_1}{\| \vvec\|_{\infty} } + \sum_{i \neq i^*}  v_i \log \frac{ \|  \vvec\|_1}{ v_i} \\
&=& \| \vvec\|_{\infty}  \log \frac{ \|  \vvec\|_1}{\| \vvec\|_{\infty} } +
  ( \|  \vvec\|_1- \| \vvec\|_{\infty} ) \log  \|  \vvec\|_1  - \sum_{i \neq i*} v_i \log v_i. \label{x}
\end{eqnarray}

For the upper bound, we observe that 
the expression in (\ref{x})  is maximum when $v_i=( \|  \vvec\|_1 - \|  \vvec\|_{\infty})/(k-1)$ for $i\neq i^*$.
Thus, 
\begin{equation}
  I_{Ent}(\vvec)	\le \| \vvec\|_{\infty} \log \frac{\|  \vvec\|_1}{ \| \vvec\|_{\infty}} 
+ (\|  \vvec\|_1 - \| \vvec\|_{\infty}) \log \frac{\|  \vvec\|_1}{(\|  \vvec\|_1-\| \vvec\|_{\infty})} 
+ (\|  \vvec\|_1-\| \vvec\|_{\infty}) \log (k-1).\
\label{Lemma9:18Jun}
\end{equation}
To show that this satisfies the desired upper bound, we split the analysis into two cases:

\medskip

\noindent
If $  \| \vvec\|_{\infty} \ge  \frac{\| \vvec\|_1}{2}$ we have that

\begin{eqnarray*}
I_{Ent}(\vvec) &\le& \| \vvec\|_{\infty} \log \frac{\| \vvec\|_1}{ \| \vvec\|_{\infty}} + 
(\| \vvec\|_1-\| \vvec\|_{\infty}) \log \frac{\| \vvec\|_1}{\| \vvec\|_1-\| \vvec\|_{\infty}} 
+ (\| \vvec\|_1-\| \vvec\|_{\infty}) \log (k-1) \\
&\le&
2 (\| \vvec\|_1-\| \vvec\|_{\infty}) \log \frac{\| \vvec\|_1}{(\| \vvec\|_1-\| \vvec\|_{\infty})} + 
  (\| \vvec\|_1-\| \vvec\|_{\infty}) \log (k-1)\\
  &\le &  2 (\| \vvec\|_1-\| \vvec\|_{\infty}) \log \frac{k \| \vvec\|_1}{(\| \vvec\|_1-\| \vvec\|_{\infty})},  
\end{eqnarray*}
where the second inequality follows from 
Proposition \ref{prop:basic-inequality} using $p=(\|\vvec\|_1 - \|\vvec\|_{\infty})/\|\vvec\|_1$.

\medskip

\noindent
If $ \| \vvec\|_{\infty} \le  \frac{\| \vvec\|_1}{2}$ we have that

\begin{eqnarray*}  
I_{Ent}(\vvec) & \le & 2 (\| \vvec\|_1-\| \vvec\|_{\infty})\frac{\log e}{ e} 
+ (\| \vvec\|_1 - \| \vvec\|_{\infty}) \log \frac{\| \vvec\|_1}{\| \vvec\|_1 - \| \vvec\|_{\infty}}  +  (\| \vvec\|_1-\| \vvec\|_{\infty}) \log (k-1)    \\
  &  \le & 2(\| \vvec\|_1 - \| \vvec\|_{\infty}) \log \frac{k  \| \vvec\|_1 }{ \| \vvec\|_1 - \| \vvec\|_{\infty}} .
\end{eqnarray*}
where the first inequality follows from  (\ref{Lemma9:18Jun}) and Proposition
\ref{prop:increasing-new}.

\medskip

\noindent
For the lower bound, consider the same two cases:

\noindent
If $ \| \vvec\|_{\infty} >  \frac{\| \vvec\|_1}{2}$,
the expression in (\ref{x}) is minimum when there is a unique index $j \neq i^*$ such that 
$v_j=\| \vvec\|_1 - \| \vvec\|_{\infty} $ and $v_{i} = 0$ for each $i \in [k]\setminus \{j, i^*\}.$  
Thus,
$$ I_{Ent}(\vvec) \ge   \| \vvec\|_{\infty} \log \frac{\| \vvec\|_1}{ \| \vvec\|_{\infty}}  
+   (\| \vvec\|_1- \| \vvec\|_{\infty}) \log \frac{\| \vvec\|_1}{\| \vvec\|_1 - \| \vvec\|_{\infty}}  \ge   
(\| \vvec\|_1 - \| \vvec\|_{\infty}) \log \frac{\| \vvec\|_1}{\min\{ \| \vvec\|_1 - \| \vvec\|_{\infty}, \| \vvec\|_{\infty}\}}	$$

\smallskip

\noindent
If $\| \vvec\|_{\infty} < \frac{\| \vvec\|_1}{2}$,
the expression in (\ref{x})
is minimum when there exists a set of indexes $A \subseteq [k]$ with $|A| =\lceil \|\vvec\|_1 / \| \vvec\|_{\infty} \rceil -1$ such that 
$v_i = \| \vvec\|_{\infty}$ for each $i \in A$ and (possibly) an index $j \not \in A$ such that 
$v_j = \|\vvec\|_1 - |A| \cdot \| \vvec\|_{\infty}.$
Thus,
$$  I_{Ent}(\vvec) \ge  (\|\vvec\|_1 - \| \vvec\|_{\infty}) \log \frac{\|\vvec\|_1}{\| \vvec\|_{\infty}}  \ge  
(\|\vvec\|_1 - \| \vvec\|_{\infty}) \log \frac{\|\vvec\|_1}{\min\{ \|\vvec\|_1 - \| \vvec\|_{\infty}, \| \vvec\|_{\infty}\}} 	$$
\end{proof}

From the bounds in the previous lemma and Lemma \ref{generalanalysis} we obtain our first guarantee on the 
approximation of algorithm ${\cal A}^{\sc Dom}$ for the Entropy Impurity measure on instances with $k \leq L.$

\begin{theorem} \label{theo:Ient-ksmall} 
Let $(V, I, L)$ be an instance of \textsc{PMWIP} with $I = I_{Ent}$ and $k \leq L$.
Let $p= \log k +\log (\sum_{\vvec \in V } \| \vvec \|_1))$. Then,   ${\cal A}^{\sc Dom}$ guarantees 
a $2p$-approximation on instance  $(V, I, L)$.
\end{theorem}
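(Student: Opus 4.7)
The plan is to apply Lemma \ref{generalanalysis} directly, so I only need to identify constants $\alpha$ and $\beta$ for which every subset $S \subseteq V$ satisfies
$\beta(\|\uvec^S\|_1 - \|\uvec^S\|_\infty) \le I_{Ent}(\uvec^S) \le \alpha(\|\uvec^S\|_1 - \|\uvec^S\|_\infty)$, and then show that $\alpha/\beta \le 2p$. Lemma \ref{Entropy-basicbounds} already gives me both bounds, so the whole argument is reduced to absorbing the logarithmic factors into numerical constants depending only on the quantities appearing in $p$. The hypothesis $k \le L$ enters only through the fact that case (ii) of ${\cal A}^{\sc Dom}$ applies, so no dimensionality reduction is needed and Lemma \ref{generalanalysis} is applicable off the shelf.

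For the upper bound, Lemma \ref{Entropy-basicbounds} yields
$I_{Ent}(\uvec^S) \le 2(\|\uvec^S\|_1 - \|\uvec^S\|_\infty)\log\!\bigl(k\|\uvec^S\|_1/(\|\uvec^S\|_1 - \|\uvec^S\|_\infty)\bigr)$.
Since the coordinates of every $\vvec \in V$ are non-negative integers, so are those of $\uvec^S$; hence either $\|\uvec^S\|_1 - \|\uvec^S\|_\infty = 0$ (in which case both sides of the desired inequality are $0$ and there is nothing to prove) or $\|\uvec^S\|_1 - \|\uvec^S\|_\infty \ge 1$. In the latter case I can drop the denominator inside the log and use $\|\uvec^S\|_1 \le \sum_{\vvec \in V}\|\vvec\|_1$ to get
$\log\!\bigl(k\|\uvec^S\|_1/(\|\uvec^S\|_1-\|\uvec^S\|_\infty)\bigr) \le \log k + \log \sum_{\vvec \in V}\|\vvec\|_1 = p$,
so $\alpha = 2p$ works.

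For the lower bound, Lemma \ref{Entropy-basicbounds} gives
$I_{Ent}(\uvec^S) \ge (\|\uvec^S\|_1 - \|\uvec^S\|_\infty)\log\!\bigl(\|\uvec^S\|_1/\min\{\|\uvec^S\|_1-\|\uvec^S\|_\infty,\|\uvec^S\|_\infty\}\bigr)$.
The key (trivial) observation is that $\min\{\|\uvec^S\|_1-\|\uvec^S\|_\infty,\|\uvec^S\|_\infty\} \le \|\uvec^S\|_1/2$, which makes the argument of the logarithm at least $2$, so the log is at least $1$. Hence $\beta = 1$ works.

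Plugging $\alpha = 2p$ and $\beta = 1$ into Lemma \ref{generalanalysis} gives $I({\cal A}^{\sc Dom}(V))/\OPT(V) \le 2p$, as claimed. The only mildly delicate point is handling the ``pure'' subsets $S$ for which $\|\uvec^S\|_1 = \|\uvec^S\|_\infty$; these are precisely the cases where the sandwich bounds from Lemma \ref{Entropy-basicbounds} collapse to $0 \le 0 \le 0$, and they are harmless because both sides of the ratio bound in Lemma \ref{generalanalysis} are then trivially compatible. No further work is needed.
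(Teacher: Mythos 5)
Your proposal is correct and follows essentially the same route as the paper: both apply Lemma \ref{generalanalysis} with the sandwich bounds of Lemma \ref{Entropy-basicbounds}, taking the lower-bound constant $1$ (since the logarithm's argument is at least $2$) and the upper-bound constant $2p$ (bounding $\log(k\|\uvec^S\|_1/(\|\uvec^S\|_1-\|\uvec^S\|_\infty))$ by $\log k + \log\sum_{\vvec\in V}\|\vvec\|_1$), and treating the degenerate case $\|\uvec^S\|_1=\|\uvec^S\|_\infty$ separately. If anything, you justify the dropping of the denominator via integrality of the coordinates more explicitly than the paper does.
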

\begin{proof}
Let $S$ be a subset of $V$ and let $\uvec^S= \sum_{\vvec \in S} \vvec$.
Define $\alpha =1$ and $\beta = 2p $.

If $\|\uvec^S\|_1 = \|\uvec^S\|_{\infty}$ then 
$I(\uvec^S)=0$ so that the conditions of Lemma \ref{generalanalysis}
is satisified.
Otherwise, Lemma \ref{Entropy-basicbounds} guarantees
that  $$ \|\uvec^S\|_1 - \|\uvec^S\|_{\infty}  \le I(\uvec^S) \le 2  ( \|\uvec^S\|_1 - \|\uvec^S\|_{\infty} ) \log ( k \| \uvec^S \|_1 )\le  \|\uvec^S\|_1 - \|\uvec^S\|_{\infty}  2 p.$$
Thus, it follows from Lemma \ref{generalanalysis} that we have a $2p-$approximation.
\end{proof}

\begin{remark}
Let $s$ be a large integer.
The instance
$\{(s,0),(2,1),(0,1)\}$ and $L=2$ shows
that the analysis is tight up to constant factors.
In fact, the impurity of ${\cal A}^{\sc Dom}$ is
larger than $ \log s$ while the impurity of 
the partition that leaves $(s,0)$ alone is 
$ 4$.
\label{rem:tight}
\end{remark}

\begin{theorem}  \label{theo:uniform-Ient-ksmall}
Let Uniform-\textsc{PMWIP} (\textsc{U-PMWIP}) be the variant of \textsc{PMWIP} 
where all vectors have the same $\ell_1$ norm.
We have that  ${\cal A}^{\sc Dom}$ is an $O(\log n + \log k )$-approximation algorithm 
for \textsc{U-PMWIP} with $I = I_{Ent}$ and $k \leq L.$ 
\end{theorem}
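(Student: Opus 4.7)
The plan is to sharpen the approximation ratio in the uniform case by abandoning the uniform estimate of Lemma \ref{generalanalysis} (which would introduce an inescapable $\log w$ term, where $w$ is the common $\ell_1$ norm) in favour of a class-by-class analysis of the output partition of ${\cal A}^{\sc Dom}$. Write $w$ for this common $\ell_1$ norm. Superadditivity yields $\OPT(V) \ge \sum_{\vvec\in V} I(\vvec)$, so it suffices to prove that for each output class $V^{(i)}$, with $\uvec^{(i)} = \sum_{\vvec \in V^{(i)}} \vvec$, one has $I(\uvec^{(i)}) \le C(\log n + \log k) \sum_{\vvec \in V^{(i)}} I(\vvec)$ for an absolute constant $C$; summing over $i$ and dividing by $\OPT(V)$ then delivers the theorem.

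Fix a class $V^{(i)}$. Because every $\vvec \in V^{(i)}$ is $i$-dominant, $\|\uvec^{(i)}\|_\infty = \sum_{\vvec} v_i$, hence $T := \|\uvec^{(i)}\|_1 - \|\uvec^{(i)}\|_\infty = \sum_{\vvec} t_\vvec$ with $t_\vvec := w - v_i$. If $T = 0$ the impurity of the class vanishes, so assume $T > 0$. Lemma \ref{Entropy-basicbounds} yields the upper bound $I(\uvec^{(i)}) \le 2T \log(k n w / T)$. I would split on whether $T \ge w/2$ or $T < w/2$. When $T \ge w/2$, we have $\log(knw/T) \le \log(2kn)$; the elementary per-vector lower bound $I(\vvec) \ge t_\vvec$ (obtained from Lemma \ref{Entropy-basicbounds} using $\min\{t_\vvec, w-t_\vvec\} \le w/2$) gives $\sum_\vvec I(\vvec) \ge T$, so the ratio is $O(\log n + \log k)$. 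When $T < w/2$, every $t_\vvec \le T < w/2$, so Lemma \ref{Entropy-basicbounds} supplies the sharper $I(\vvec) \ge t_\vvec \log(w / t_\vvec)$; the function $f(x) = x \log(w/x)$ is concave on $(0,w]$ and vanishes at $0$, hence subadditive, so $\sum_\vvec I(\vvec) \ge f(T) = T \log(w/T)$. Therefore the ratio is at most $2\log(knw/T) / \log(w/T) = 2 + 2\log(kn)/\log(w/T) \le 2 + 2\log(kn)$, using $\log(w/T) \ge 1$.

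The principal obstacle is the small-$T$ regime: a priori the upper bound $2T \log(knw/T)$ could have $\log(w/T)$ as large as $\log w$, reintroducing the unwanted $\log w$ dependence that Theorem \ref{theo:Ient-ksmall} suffers from. The crucial observation is that in precisely this regime each individual $t_\vvec$ is necessarily at most $T < w/2$, which unlocks the sharper per-vector lower bound $t_\vvec \log(w/t_\vvec)$; subadditivity of $f$ then amplifies these terms to $T \log(w/T)$, cancelling the troublesome factor in the upper bound and leaving only an $O(\log n + \log k)$ overhead.
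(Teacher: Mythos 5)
Your proof is correct and follows essentially the same route as the paper's: superadditivity to reduce to a per-class ratio, the upper bound $O(T\log(knw/T))$ from Lemma \ref{Entropy-basicbounds}, a case split on $T\ge w/2$ versus $T<w/2$, and in the small-$T$ regime the aggregated lower bound $T\log(w/T)$ (your subadditivity of $x\log(w/x)$ is the same inequality the paper states as $\sum_\vvec t_\vvec\log t_\vvec\le c\log c$).
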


\begin{proof}
Let $(V, I, L)$ be an instance of \textsc{U-PMWIP} with $I = I_{Ent}$ and vectors of
dimension $k \leq L$.
Let $(V^{(1)}, \dots, V^{(L)})$ be the partition of $V$ returned by ${\cal A}^{\sc Dom}$.
By the superadditivity of $I$ it holds that 

$$\frac{I({\cal A}^{\sc Dom})}{\OPT(V)} = \frac{ \sum_{i=1}^k I ( \sum_{ \vvec \in V^{(i)}} \vvec) }{ 
\sum_{i=1}^k \sum_{ \vvec \in V^{(i)}} I(\vvec)} 
$$

Thus, it is enough to prove that for $i=1,\ldots,k$

$$ \frac{ I ( \sum_{ \vvec \in V^{(i)}} \vvec) }{
\sum_{ \vvec \in V^{(i)}} I(\vvec) } = O(\log n + \log k ) %\frac{\alpha}{\beta}
$$
Let $s$ be the $\ell_1$ norm of
all vectors in $V$,  let  $\uvec=\sum_{ \vvec \in V^{(i)}} \vvec$ and
let $ c = \| \uvec \|_{1} -\| \uvec \|_{\infty}$. 
By Lemma \ref{Entropy-basicbounds}, we have that 
$$I( \uvec) \le c  \log \frac{k n s}{c}.$$

Moreover, we have
$$ \sum_{ \vvec \in V^{(i)}} I(\vvec) \ge \max \left \{ c,  c \log s - \sum_{ \vvec \in V^{(i)} } 
( \| \vvec \|_1 - \| \vvec \|_{\infty} )  \log (\| \vvec \|_1 - \| \vvec \|_{\infty}) \right \}$$

If $c \ge s/2$ then we have a $O(\log n  + \log k)$ approximation using $c$ as a lower bound.
If $c < s/2$ we get that
$$ \sum_{ \vvec \in V^{(i)}} I(\vvec) \ge c \log s - c \log c = c \log (s/c)$$
and the approximation is $O(\log n + \log k)$ as well
\end{proof}

\begin{remark}
\label{rem:tight2}
Let $s$ be a large integer. The instance with $n-1$ vectors equal to $(s,0)$, one
vector equals to $(s,s/2)$ and  $L=2$ shows that the analysis is tight.
\end{remark}

To obtain an approximation of ${\cal A}^{\sc Dom}$ for $I_{Ent}$ 
for general $L$ and $k$  we need an upper  bound on the second fraction
in Equation (\ref{eq:BoundGeneralApproach}). This is given by the next lemma.

\begin{lemma}  \label{Entropy-directions}
Fix a vector $\uvec \in \R^k$ such that $u_{i} \geq u_{i+1}$ for each $i=1, \dots k-1$ 
and $D = \{\dvec^{(1)}, \dots \dvec^{(L)}\} \in {\cal D}$ with $\dvec^{(i)} = \evec_i$ for $i=1, \dots, L-1$ and
$\dvec^{(L)} = \sum_{j=L}^k \evec_j = \onevec - \sum_{j=1}^{L-1} \dvec^{(j)}.$
It holds that  
$$\sum_{\dvec \in D} I_{Ent}(\uvec \circ \dvec) \le O( \log L) \min_{ D' \in {\cal D} } \left \{
\sum_{\dvec' \in D'} I_{Ent}(\uvec \circ \dvec' ) \right \}$$
\end{lemma}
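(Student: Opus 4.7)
The plan is to reduce the left-hand side to the entropy of the tail vector $(u_L,\dots,u_k)$ and then lower-bound the optimum in two complementary ways, finishing with a short case analysis. Since $\uvec \circ \dvec^{(i)} = u_i \evec_i$ has a single non-zero coordinate for $i=1,\dots,L-1$, we have $I_{Ent}(\uvec \circ \dvec^{(i)}) = 0$, so the LHS reduces to $I_{Ent}(\uvec \circ \dvec^{(L)}) = I_{Ent}((u_L, u_{L+1},\dots,u_k))$. Set $T_L = \sum_{j\ge L} u_j$.

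For any $D^* = \{\dvec^{*(1)},\dots,\dvec^{*(L)}\} \in {\cal D}$ with induced partition $(B_1^*,\dots,B_L^*)$ of $[k]$, block leaders $m_\ell^* = \min B_\ell^*$, and tail masses $P_\ell^* = \sum_{j \in B_\ell^* \cap \{L,\dots,k\}} u_j$, I would establish two lower bounds on $\sum_\ell I_{Ent}(\uvec \circ \dvec^{*(\ell)})$. The first comes from the subsystem identity (Lemma \ref{lemma:subsystem}, which holds with equality for $I_{Ent}$ by Fact \ref{fact:gini-entropy}) applied to the tail vector with the partition of $\{L,\dots,k\}$ induced by $D^*$, combined with superadditivity (Lemma \ref{lem:imputity-supperadditivy}) and the trivial bound $I_{Ent}((P_1^*,\dots,P_L^*)) \le T_L \log L$ for the entropy of any distribution over $L$ items with total mass $T_L$:
$$\sum_{\ell=1}^L I_{Ent}(\uvec \circ \dvec^{*(\ell)}) \;\ge\; I_{Ent}((u_L,\dots,u_k)) - T_L \log L. \qquad (\dagger)$$
The second comes from applying Lemma \ref{Entropy-basicbounds} block-by-block with its trivial log factor of one, together with the fact that the $m_\ell^*$ are $L$ distinct indices so $\sum_\ell u_{m_\ell^*} \le u_1 + \cdots + u_L$:
$$\sum_{\ell=1}^L I_{Ent}(\uvec \circ \dvec^{*(\ell)}) \;\ge\; T_L - u_L. \qquad (\ddagger)$$

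I would then finish by comparing $(\dagger)$ and $(\ddagger)$ against the upper bound $I_{Ent}((u_L,\dots,u_k)) \le 2(T_L - u_L) \log\bigl(\tfrac{(k-L+1)T_L}{T_L-u_L}\bigr)$ from Lemma \ref{Entropy-basicbounds} via a case split. If $I_{Ent}((u_L,\dots,u_k)) \ge 2 T_L \log L$, then $(\dagger)$ immediately yields a ratio of $2$. Otherwise the upper bound forces $(T_L - u_L) \log\bigl(\tfrac{(k-L+1)T_L}{T_L-u_L}\bigr) = O(T_L \log L)$, and one splits further on whether $u_L \le T_L/2$ (where $T_L/(T_L-u_L)\le 2$, so the remaining factor is $O(\log(k-L+1))$, which is $O(\log L)$ under the small-entropy assumption, and $(\ddagger)$ gives the $O(\log L)$ ratio) or $u_L > T_L/2$ (where the tail is concentrated on $u_L$ and one expands $I_{Ent}((u_L,\dots,u_k)) = u_L \log(T_L/u_L) + \sum_{j \ge L+1} u_j \log(T_L/u_j)$, observes $u_L \log(T_L/u_L) = O(T_L - u_L)$, and sharpens $(\ddagger)$ using the non-trivial log factor $R_\ell^* \log(W_\ell^*/\min(R_\ell^*, u_{m_\ell^*}))$ of Lemma \ref{Entropy-basicbounds}). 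The main obstacle is this last sub-case, where $(\ddagger)$ alone is too weak and a refined block-wise lower bound combined with a concavity argument over the distribution of the remainder masses $R_\ell^*$ is needed to recover the $O(\log L)$ ratio.
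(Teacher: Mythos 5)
Your reduction of the left-hand side to $I_{Ent}((u_L,\dots,u_k))$ is correct, and both of your lower bounds are sound: $(\dagger)$ does follow from superadditivity plus the subsystem identity of Lemma \ref{lemma:subsystem} (which is an equality for $I_{Ent}$) applied to the tail, together with $I_{Ent}((P_1^*,\dots,P_L^*))\le T_L\log L$; and $(\ddagger)$ follows from the per-block bound $I_{Ent}(\vvec)\ge \|\vvec\|_1-\|\vvec\|_\infty$ of Corollary \ref{cor:approx} and the distinctness of the block leaders. Your first case is fine. In the sub-case $u_L\le T_L/2$ the conclusion is also fine, but not for the reason you give: the intermediate claim that $\log(k-L+1)=O(\log L)$ ``under the small-entropy assumption'' is false (a tail with two comparable large entries and arbitrarily many negligible ones satisfies $I_{Ent}(\mathrm{tail})<2T_L\log L$ for every $k$). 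The factor $k-L+1$ should never enter: the case hypothesis and $(\ddagger)$ give directly $I_{Ent}((u_L,\dots,u_k))< 2T_L\log L\le 4(T_L-u_L)\log L\le 4\log L\cdot \OPT$.

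The genuine gap is the sub-case $u_L>T_L/2$, which you explicitly leave open; this is not a routine sharpening but is where essentially all the difficulty of the lemma lives. The term $\sum_{j\ge L+1}u_j\log(T_L/u_j)$ cannot be controlled by per-block lower bounds (whether $(\ddagger)$ or the refined factor $R_\ell^*\log(W_\ell^*/\min\{R_\ell^*,u_{m_\ell^*}\})$ from Lemma \ref{Entropy-basicbounds}) applied to an arbitrary $D'$: such bounds are minimized by bucketings that place the small tail elements into buckets whose total mass is comparable to those elements themselves, in which case every block-wise logarithm collapses and no comparison with $\log(T_L/u_j)$ is possible. What is missing is structural control of the \emph{minimizer} $D^*$ — one must rule out exactly those adversarial bucketings. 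The paper obtains this by using the subsystem identity to convert the minimization into maximizing the entropy of the bucket totals, so $D^*$ may be taken to be a most balanced bucketing; exchange arguments (its Claims 1 and 2) then show that no bucket of $D^*$ is a singleton tail element, that any bucket containing a non-solo element $u_j$ has every other bucket total at least $u_j$ and at least half its own total (whence $S\le 2L\,S^{(i(j))}$), and that at most one tail element dominates its bucket; the dominant-element situation is finally closed by the product inequality comparing $u_L\log(S/u_L)$ with $\sum_m u_m\log(S^{(m)}/u_m)$, i.e.\ the paper's inequality (\ref{twocases:2}). A ``concavity argument over the remainder masses'' does not substitute for these exchange arguments, because without them there is no lower bound on the bucket totals $W_{i(j)}^*$ in terms of $T_L$. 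So your proposal is a correct (and in places more economical) framework for the easy regimes, but the heart of the proof is still missing.
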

\begin{proof}
Let $D^*=\{\dvec_*^{(1)}, \dots, \dvec_*^{(L)} \} \in {\cal D}$ be such that 
\begin{equation} \label{opt-condition-entr}
\sum_{\dvec \in D^*} I_{Ent}(\uvec \circ \dvec) =  \min_{ D' \in {\cal D} } \left \{
\sum_{\dvec' \in D'} I_{Ent}(\uvec \circ \dvec' ) \right \}
\end{equation}
and $|D \cap D^*|$ is maximum among all set of vectors in ${\cal D}$ satisfying (\ref{opt-condition-entr}). Assume that 
$D \neq D^*$ for otherwise the claim holds trivially. 

By Lemma \ref{lemma:subsystem} we have that for every 
$\hat{D} = \{\hat{\dvec}^{(1)}, \dots, \hat{\dvec}^{(L)} \} \in {\cal D}$
\begin{eqnarray*} \sum_{i=1}^L I_{Ent}(\uvec \circ \hat{\dvec}^{(i)}) &=& 
I_{Ent}(\uvec) - I_{Ent}(\uvec \cdot \hat{\dvec}^{(1)}, \dots, \uvec \cdot \hat{\dvec}^{(L)})\\
&=& \|\uvec\|_1 \left(H\left(\frac{u_1}{\|\uvec\|_1}, \dots, \frac{u_k}{\|\uvec\|_1}\right) - 
H\left(\frac{\uvec \cdot \hat{\dvec}^{(1)}}{\|\uvec\|_1}, \dots, \frac{ \uvec \cdot \hat{\dvec}^{(L)}}{\|\uvec\|_1}\right) \right)
\end{eqnarray*}
where $H()$ denotes the Entropy function. 
Let us define  
$H(\hat{D})=  H\left(\frac{\uvec \cdot \hat{\dvec}^{(1)}}{\|\uvec\|_1}, \dots, \frac{ \uvec \cdot \hat{\dvec}^{(L)}}{\|\uvec\|_1}\right)$

Then $\hat{D}$ is a set of vectors that 
minimizes $\sum_{i=1}^L I_{Ent}(\uvec \circ \hat{\dvec}^{(i)})$ iff 
it maximizes  $H(\hat{D}).$

We can think of the vectors in $\hat{D}$ as buckets containing components of $\uvec$, and 
we say that $u_j$ is in bucket $i$ if $\hat{d}^{(i)}_j = 1.$ 
From the above formula and the concavity property of the Entropy function we have that the following claim holds.

\medskip

\noindent
{\em Claim 1.}  Assume that there exists a subset $A\subseteq \{j \mid \hat{d}^{(i)}_j = 1\}$ of bucket $i$ and a subset 
$B \subseteq \{j' \mid \hat{d}^{(i')}_{j'} = 1\}$ of bucket $i'$ such that 
\begin{equation}
\left | \left (\hat{\dvec}^{(i)}\cdot \uvec - \sum_{j \in A} u_j  + \sum_{j' \in B} u_{j'} \right ) - \left  (\hat{\dvec}^{(i')}\cdot \uvec - \sum_{j' \in B} u_{j'} + \sum_{j \in A} u_j \right ) \right |  \leq 
|\hat{\dvec}^{(i)}\cdot \uvec  - \hat{\dvec}^{(i')}\cdot \uvec |
\label{eq:Lemma10-17Jun}
\end{equation}
i.e., swapping bucket for elements in $A$ and $B$ does not increase the absolute difference between the sum of elements in 
buckets $i$ and $i'$. Then, 
%from the above formula and the concavity property of the Entropy function we have that
for the set of vectors $\tilde{D} = \{\tilde{\dvec}^{(1)}, \dots, \tilde{\dvec}^{(L)} \} \in {\cal D}$ defined by 
$$
\tilde{\dvec}^{(\ell)} = 
\begin{cases}
\hat{\dvec}^{(\ell)} & \ell \not \in \{i, i'\} \\
\hat{\dvec}^{(i)} - \sum_{j \in A} \evec_j + \sum_{j' \in B}\evec_{j'} & \ell = i \\
\hat{\dvec}^{(\ell)} - \sum_{j' \in B} \evec_{j'} + \sum_{j \in A} \evec_{j} & \ell = i',
\end{cases}
$$
i.e., for the set of vectors corresponding to the new buckets, it holds that 
$
H\left(\frac{\uvec \cdot \hat{\dvec}^{(1)}}{\|\uvec\|_1}, \dots, \frac{ \uvec \cdot \hat{\dvec}^{(L)}}{\|\uvec\|_1}\right)
\leq 
H\left(\frac{\uvec \cdot \tilde{\dvec}^{(1)}}{\|\uvec\|_1}, \dots, \frac{ \uvec \cdot \tilde{\dvec}^{(L)}}{\|\uvec\|_1}\right),
$ with the equality holding iff inequality (\ref{eq:Lemma10-17Jun}) is tight.

\medskip

Because of Claim 1, we have that $D^*$ satisfying (\ref{opt-condition-entr}) 
is a set of vectors that coincides with buckets that distribute the components of $\uvec$ in {\em the most
balanced way}, i.e., $H(D^*)$ is maximum among all $D \in {\cal D}.$

From these observations, we can characterize the structure of buckets of $D^*$. For the sake of a simpler notation, let us 
denote with $S^{(i)}$ the sum of components in bucket $\dvec_*^{(i)},$ i.e., 
$S^{(i)} = \uvec \cdot \dvec_*^{(i)}.$ We have the following

\medskip 
\noindent
{\em Claim 2.} The set $D^*$ satisfies the following properties:
\begin{itemize}
\item[(i)] there is no bucket $i$ that consists of a single  element $u_j$ with $j \geq L$;
\item[(ii)] if $u_j$ is not alone in bucket $i$ then for each $i' \neq i$ it holds that $S^{(i')} \geq u_j$;
\item[(iii)] if $u_j$ is not alone in bucket $i$ then for each $i' \neq i$ it holds that $S^{(i')} \geq S^{(i)} - u_j$;
\end{itemize}

For (i), assume, by contradiction that such $i$ and $j$ exists. Then, since $D^* \neq D$, 
there exists a bucket $i' \neq i$  that contains at least two elements, with one of them being $u_{j'}$  for some $j' < L$.
Then, by Claim 1, swapping the buckets for $u_j$ and $u_{j'}$ produces a new set of vectors with 
entropy not smaller than $H(D^*)$ and intersection with $D$ larger than that of  $D^*$, which is a contradiction.

For (ii), we observe that if there exists a bucket $i'$ such that $S^{(i')} < u_j$ by moving every element of 
bucket $i'$ into bucket $i$ and moving only $u_j$ from bucket $i$ into bucket $i'$, by Claim 1, we get a
new set of vectors with entropy larger than $H(D^*)$, which is a contradiction.

For (iii), we observe that if there exists a bucket $i'$ such that $S^{(i')} < S^{(i)} - u_j$ swapping all the elements of 
bucket $i'$ with all the elements of bucket $i$ except for $u_j$, by Claim 1, we get a
new set of vectors with entropy larger than $H(D^*)$, which is a contradiction.

We are now ready to prove the statement of the lemma. From the definition of $D$, 
since for $i=1, \dots, L-1$ the bucket $i$ contains only one element, we have $I(\uvec \circ \dvec^{(i)}) = 0$. Let $S = \sum_{j \geq L} u_j,$ and
define $i(j)$ to be the bucket of $D^*$ that contains $u_j$, for each $j = 1, \dots, k.$
We have
\begin{equation} \label{I-ent-UB}
\sum_{i=1}^L I_{Ent} (\uvec \circ \dvec^{(i)}) = I_{Ent} (\uvec \circ \dvec^{(L)}) = \sum_{j \geq L} u_j \log \frac{S}{u_j} =
\sum_{\substack{j \geq L \\ u_j \leq S^{(i(j))}/2}} u_j \log \frac{S}{u_j}  + \sum_{\substack{j \geq L \\ u_j > S^{(i(j))}/2}} u_j \log \frac{S}{u_j}  
\end{equation} 
where in the last expression we split the summands according to whether $u_j \geq S^{(i(j))}/2$ or $u_j < S^{(i(j))}/2.$
We will argue  that 
\begin{eqnarray} \label{lemma10:twocases}
\sum_{\substack{j \geq L \\ u_j \leq S^{(i(j))}/2}} u_j \log \frac{S}{u_j} &\mbox{ is }& O(\log L) \sum_{i=1}^L I_{Ent} (\uvec \circ \dvec_*^{(i)}) \label{twocases:1}\\ 
\sum_{\substack{j \geq L \\ u_j > S^{(i(j))}/2}} u_j \log \frac{S}{u_j} &\mbox{ is }&  O(\log L) \sum_{i=1}^L I_{Ent} (\uvec \circ \dvec_*^{(i)}), \label{twocases:2}
\end{eqnarray}
from which the statement of the lemma follows. 

\medskip

\noindent
{\em Proof of Inequality (\ref{twocases:1}).} 

Since 
\begin{equation} \label{I-ent-LB} 
\sum_{i=1}^L I_{Ent} (\uvec \circ \dvec_*^{(i)})  
\geq   \sum_{\substack{j \geq L \\ u_j \leq S^{(i(j))}/2}}  u_j \log \frac{S^{(i(j))}}{u_j}.
\end{equation}
it is enough to show that for each $j \geq L,$ with $u_j \leq S^{(i(j))/2}$, we have 
\begin{equation} \label{desired-ratio}
\frac{u_j \log \frac{S}{u_j}}{u_j \log \frac{S^{(i(j))}}{u_j}} \le \log (4L).
\end{equation} 

The above inequality can be established by showing that  $S \le 2L \cdot S^{(i(j))}$ and, then, using the bound $\frac{\log a}{\log b} \le \log (2a/b)$, which holds whenever $b \ge 2$ and $a \ge b$.

To see that $S \le 2L \cdot S^{(i(j))}$, let $\ell$ be a bucket in $D^*$ containing some $u_{j'}$ for $j' \geq L.$ By 
Claim 2 (i) we have that bucket $\ell$ contains at least two elements. 
Let $e(\ell)$ be the element in bucket $\ell$ of minimum value. Then,  
by Claim 2 (iii), we have 
\begin{equation} \label{intermediate-1}
S^{(i(j))} \geq S^{(\ell)} - u_{e(\ell)} \geq S^{(\ell)} /2,
\end{equation}
where the last inequality follows from the fact that bucket $\ell$ has at least two elements. 
Let $B = \{\ell \mid $ bucket $\ell$ has at least one element $u_{j'}$ with $j' \geq L \}$. Then, we have 
$L S^{(i(j))} \geq \sum_{\ell \in B} S^{(\ell)}/2 \ge S/2$, that gives $S/S^{(i(j))} \leq 2L,$ as desired.

\medskip
\noindent
{\em Proof of Inequality (\ref{twocases:2}}). 

First we argue that we can assume that there exists at most one $j$, with $j \ge L$,  
with $u_j >  S^{(i(j))}/2$. 
In fact, if there exist $j \neq j'$ such that $u_j > S^{(i(j))}/2$ and $u_{j'} > S^{(i(j'))}/2$  
then $i(j') \ne i(j)$ and no element $u_r$, with $r <L$, is either in bucket $i(j)$ or in $i(j')$ 
Hence, by  the pigeonhole principle, there must exist elements $u_r$ and $u_s$, with $r,s < L$
that are both in some bucket $i' \notin \{i(j),i(j')\}.$ Thus, by Claim 1, swapping buckets for $u_r$ and $u_j$ 
we get a new set of vectors $D'$ whose buckets are at least as balanced as those of $D^*$ ($H(D') \geq  H(D^*)$)
and $| D' \cap D |\ge |D^* \cap D|$. However, 
 in $D'$ there is one less index $j$ with $j \ge L$ and $u_j >  S^{(i(j))}/2$.  Thus, by repeating
 this argument,  we eventually obtain a $D'$ satisfying $H(D^*) = H(D')$ (maximum) and  there is at most one $j$ satisfying $u_j > S^{(i(j))}/2.$

We also have that $u_j = u_L.$ For otherwise, if $u_L > u_j$, by the previous observation we have that 
$S^{(i(L))} \geq 2 u_L$ hence swapping $u_L$ and $u_j$ we obtain a more balanced set of vectors $D'$ with $H(D') > H(D^*),$ against the hypothesis
that $H(D^*)$ is maximum. Therefore, we can assume, w.l.o.g., that $j = L$ and $i(L) = L.$

Finally, for each $\ell, \ell' < L$ we can assume that  $u_{\ell}$ and $u_{\ell'}$ are in different buckets.
 For otherwise, swapping buckets for $u_j$ and $u_{\ell} \geq u_j=u_{L} > S^{(L)} - u_L$ we get a new set $D'$ with $H(D') \geq H(D^*)$
and for all $j \geq L, \, u_j \leq S^{(i(j))}/2.$ Then, the desired result would follow because
we already proved that inequality (\ref{twocases:1}) holds.

\medskip
Because of the previous observation we can assume that in $D^*$, up to renaming the buckets, for each $m \in [L]$ the element 
$u_m$ is in bucket $m.$ Let $X_m = S^{(m)} - u_m.$ Note that $u_L + \sum_{m=1}^L X_m = S.$

%For $m < L$, let $g(m)$ be the group in which it lies in the optimal partition.
%Moreover, let $X_m = S_{g(m)}-u_m$, that is, $X_m$ is the size of the group where
%$m$ lies minus $u_m$.  

Then, we have the following lower bound on the impurity of the buckets of $D^*$:
\begin{eqnarray} 
\sum_{i=1}^L I_{Ent}(\uvec \circ \dvec_*^{(i)}) &\geq& \sum_{m=1}^{L} u_m \log \frac{S^{(m)}}{u_m} \geq 
u_L \left( \sum_{m=1}^{L} \log \frac{S^{(m)}}{u_m} \right)\\
&=& u_L  \log \left( \prod_{m=1}^L \frac{(u_m + X_m)}{u_m} \right) =
u_L  \log \left( \frac{(u_L + X_L) \prod_{m=1}^{L-1} (u_m + X_m)}{u_L \prod_{m=1}^{L-1} u_m} \right). 
\end{eqnarray}

On the other hand, because of the standing assumption $j = L$ we can write as  
upper bound on the only summand in the left hand side of (\ref{twocases:2})
$$u_j \log \frac{S}{u_j} = u_L \log \left(\frac{(u_L+ X_L) + \sum_{m=1}^{L-1} X_m}{u_L} \right).$$

Therefore, to prove the bound in  (\ref{twocases:2}) it is enough to show
$$(u_L + X_L) + \sum_{m=1}^{L-1} X_m  \le
 \left ( \frac{(u_L+X_L)\prod_{m=1}^{L-1} (X_m+u_m)}{\prod_{m=1}^{L-1} u_m} \right ).$$

We can now show that this inequality holds by using Claim 2 (ii), which gives 
$(u_L + X_L) \geq u_s$ for each $s < L$ such that $X_s \neq 0.$ Therefore, we have  
\begin{eqnarray*}
\left((X_L + u_L) + \sum_{s=1}^{L-1} X_s \right) \prod_{m=1}^{L-1} u_m &=& 
(X_L + u_L)\prod_{m=1}^{L-1} u_m + \sum_{s=1}^{L-1} X_s \prod_{m=1}^{L-1} u_m \notag\\
&\leq& (X_L + u_L)\prod_{m=1}^{L-1} u_m + 
\sum_{s=1}^{L-1} \frac{u_L+X_L}{u_s} X_s \prod_{m=1}^{L-1} u_m \notag \\
&=& (X_L + u_L)\prod_{m=1}^{L-1} u_m + 
(u_L + X_L) \sum_{s=1}^{L-1} \left ( X_s \prod_{m \in [L-1]\setminus s} u_m \right) \notag \\
&=& (u_L + X_L) \left( \prod_{m=1}^{L-1} u_m +  \sum_{s=1}^{L-1} X_s \prod_{m \in [L-1]\setminus s} u_m \right)\\
&\leq& (u_L + X_L)   \prod_{m=1}^{L-1} (u_m + X_m),
\end{eqnarray*}
%X_s \prod_{m=1}^L u_m    \le   (u_i+X_i) ( X_s \prod_{m=1}^L u_m )/ u_s 
which concludes the proof of (\ref{twocases:2}). 

The proof of the lemma is complete. 
\end{proof}

By (\ref{eq:algoAdom}), combining 
the results in the previous lemma with Theorems \ref{theo:Ient-ksmall}, \ref{theo:uniform-Ient-ksmall} and
the fact that $L \le n$,  
we have the following results 
that apply regardless the relation between $k$ and $L$.

\begin{theorem} 
Let $(V, I_{Ent}, L)$ be an instance of \textsc{PMWIP} and 
let $p=\min\{\log L, \log k \} +\log (\sum_{\vvec \in V } \| \vvec \|_1)$. 
Then,   ${\cal A}^{\sc Dom}$ on  instance $(V, I_{Ent}, L)$ guarantees $2p$-approximation. 
\end{theorem}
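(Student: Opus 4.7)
The plan is to split the analysis into the two cases used to define ${\cal A}^{\sc Dom}$, namely $k \le L$ and $k > L$, and combine the previously established ingredients.

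For the case $k \le L$, the bound follows immediately from Theorem \ref{theo:Ient-ksmall}: the quantity $p$ in that theorem is $\log k + \log(\sum_{\vvec \in V}\|\vvec\|_1)$, and when $k \le L$ we have $\min\{\log L,\log k\} = \log k$, so the value of $p$ in the current statement coincides with that in Theorem \ref{theo:Ient-ksmall}. Hence ${\cal A}^{\sc Dom}$ is a $2p$-approximation.

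For $k > L$, the strategy is to invoke the master inequality (\ref{eq:algoAdom}), which expresses the approximation ratio of ${\cal A}^{\sc Dom}$ on $(V, I_{Ent}, L)$ as the sum of two ratios: (i) the approximation ratio achieved by ${\cal A}^{\sc Dom}$ on the collapsed instance $(collapse_D(V), I_{Ent}, L)$, whose vectors live in $\R^L$, and (ii) the ratio $\sum_{\dvec \in D} I_{Ent}(\uvec \circ \dvec) / \min_{D' \in {\cal D}} \sum_{\dvec' \in D'} I_{Ent}(\uvec \circ \dvec')$ comparing the collapsing directions chosen by ${\cal A}^{\sc Dom}$ to the best ones. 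Bound (i) by applying Theorem \ref{theo:Ient-ksmall} to the collapsed instance, whose dimension equals $L$ and whose total $\ell_1$ norm is preserved under $collapse_D$, yielding a bound of $2(\log L + \log \sum_{\vvec \in V}\|\vvec\|_1)$. Bound (ii) by Lemma \ref{Entropy-directions}, which states that the particular choice of $D$ used in case (i) of ${\cal A}^{\sc Dom}$ (the top $L-1$ coordinates plus the aggregate of the tail) is within an $O(\log L)$ factor of the optimum. Summing the two bounds gives an overall approximation of $O(\log L + \log \sum_{\vvec \in V}\|\vvec\|_1)$, which, since $\min\{\log k, \log L\} = \log L$ when $k > L$, matches the claim $2p$ up to absorbing the additive $O(\log L)$ term from (ii) into the multiplicative constant.

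The main obstacle is purely bookkeeping: making sure that the constants from Theorem \ref{theo:Ient-ksmall} (the factor $2$) and from Lemma \ref{Entropy-directions} (the hidden constant in $O(\log L)$) combine into the stated $2p$ factor, and verifying that the collapsed instance satisfies the hypotheses of Theorem \ref{theo:Ient-ksmall}, i.e., that its dimension is at most $L$ and that its total $\ell_1$ weight is $\sum_{\vvec \in V}\|\vvec\|_1$. Both are immediate from the definition of $collapse_D$, so the proof is essentially a synthesis of (\ref{eq:algoAdom}), Theorem \ref{theo:Ient-ksmall}, and Lemma \ref{Entropy-directions}.
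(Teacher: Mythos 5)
Your proposal is correct and follows essentially the same route as the paper, which proves the theorem precisely by combining inequality (\ref{eq:algoAdom}) with Theorem \ref{theo:Ient-ksmall} (applied to the $L$-dimensional collapsed instance) and Lemma \ref{Entropy-directions}. Your remark about absorbing the additive $O(\log L)$ term corresponds to the paper's appeal to $L \le n \le \sum_{\vvec \in V}\|\vvec\|_1$, so nothing is missing.
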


\begin{theorem}  \label{theo:uniform-Ient}
Let Uniform-\textsc{PMWIP} (\textsc{U-PMWIP}) be the variant of \textsc{PMWIP} 
where all vectors have the same $\ell_1$ norm.
We have that  ${\cal A}^{\sc Dom}$ is an $O( \log k + \log n)$-approximation algorithm 
for \textsc{U-PMWIP} with $I = I_{Ent}$. 
\end{theorem}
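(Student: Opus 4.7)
The plan is to split on the relationship between $k$ and $L$, reduce the case $k > L$ to the already-handled case $k \le L$ via the dimensionality reduction machinery of Section \ref{sec:general}, and then assemble the two summands in inequality (\ref{eq:algoAdom}) using Theorem \ref{theo:uniform-Ient-ksmall} and Lemma \ref{Entropy-directions}.

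If $k \le L$, there is nothing to do: Theorem \ref{theo:uniform-Ient-ksmall} immediately gives the desired $O(\log n + \log k)$ bound. So assume $k > L$ and let $D = \{\dvec^{(1)},\dots,\dvec^{(L)}\}$ be the specific set of orthogonal directions used by $\mathcal{A}^{\sc Dom}$, namely $\dvec^{(i)} = \evec_i$ for $i < L$ and $\dvec^{(L)} = \onevec - \sum_{i<L}\evec_i$. Then inequality (\ref{eq:algoAdom}) applies and it suffices to bound the two ratios on its right-hand side.

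For the first ratio I would note the key observation that the collapse operation preserves $\ell_1$ norms: since $\sum_{i=1}^L \dvec^{(i)} = \onevec$, for every $\vvec \in V$ we have $\|collapse_D(\vvec)\|_1 = \sum_{i=1}^L \vvec \cdot \dvec^{(i)} = \vvec \cdot \onevec = \|\vvec\|_1$. Hence $(collapse_D(V), I_{Ent}, L)$ is again an instance of \textsc{U-PMWIP}, and its vectors live in dimension exactly $L$, so the hypothesis $k' \le L$ of Theorem \ref{theo:uniform-Ient-ksmall} is satisfied. Applying that theorem yields that the first ratio is $O(\log n + \log L)$. For the second ratio, Lemma \ref{Entropy-directions} (with the same $D$) gives a bound of $O(\log L)$.

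Summing, the overall approximation ratio is $O(\log n + \log L)$, which is subsumed by $O(\log n + \log k)$ since $L < k$ in this case. Combined with the trivial case $k \le L$, this establishes the claim. There is no real obstacle here: all the heavy lifting has been carried out in Theorem \ref{theo:uniform-Ient-ksmall} and Lemma \ref{Entropy-directions}, and the only thing worth flagging is the small but essential observation that collapsing preserves uniformity, which is what allows one to feed the collapsed instance back into the uniform-case bound.
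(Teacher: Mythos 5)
Your proof is correct and follows essentially the same route as the paper, which obtains this theorem by combining inequality (\ref{eq:algoAdom}) with Theorem \ref{theo:uniform-Ient-ksmall} and Lemma \ref{Entropy-directions}; the observation you flag, that $collapse_D$ preserves $\ell_1$ norms so the collapsed instance is again uniform and of dimension $L$, is exactly the point that makes this combination legitimate.
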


\section{An $O(\log^2 (\min\{k,L\}))$-approximation for \textsc{PMWIP} with $I_{Ent}$}
In this section we present our main result on the entropy measure. 
Under the assumption $k \leq L,$ we will show the existence of an $O(\log^2 k)$-approximation 
polynomial time algorithm.
Note that in the light of  Lemma \ref{Entropy-directions} and 
the approach of Section \ref{sec:general} (see, in particular equation (\ref{eq:BoundGeneralApproach})), 
this  implies an $O(\log^2(\min\{k, L\}))$-approximation algorithm for any $k$ and $L.$

Recall that  a vector $\vvec$ is called $i$-dominant if $i$ is the largest component in $\vvec$, i.e., $v_i = \|\vvec\|_{\infty}.$ 
Accordingly, we say that a set of vectors $B$ (often, in this section, referred to as a bucket)  is $i$-dominant if $i$ is 
the largest component in the bucket, 
i.e., $\| \sum_{\vvec \in B} \vvec \|_{\infty} = \sum_{\vvec \in B} v_i.$
We use $dom ( \vvec)$ and $dom(B)$, respectively, to denote the index of the dominant component of vectors $\vvec$ and   $\sum_{\vvec \in B} \vvec $.

We will say that a bucket $B$ is $i$-pure if each vector in $B$ is $i$-dominant. A bucket which is not $i$-pure for any $i$ will be called 
a {\em mixed bucket}.  
%We use $k$ for the number of classes and $L$ for the number of buckets.
Following the bound on the impurity of a vector $\vvec$ given by Lemma \ref{Entropy-basicbounds}, we define the {\em ratio of a vector $\vvec$} as 
$$ratio(\vvec)= \frac{ \| \vvec \|_1 }{\| \vvec \|_1 - \| \vvec \|_{\infty} }.$$
and, accordingly,  the {\em ratio of bucket $B$} as 
$$ratio(B)= \frac{ \| \sum_{\vvec \in B} \vvec \|_1 }{\|  \sum_{\vvec \in B} \vvec \|_1 - \|  \sum_{\vvec \in B} \vvec \|_{\infty} }.$$

Abusing notation, 
for a set of vectors $B$ we will use $\|B\|_1$ to denote $\| \sum_{\vvec \in B} \vvec \|_1$ and 
$\|B\|_{\infty}$ to denote $\| \sum_{\vvec \in B} \vvec \|_{\infty}.$ Moreover, we use 
$B(j)$ to denote the set of the $j$ vectors in $B$ of minimum ratio.
Since in this section we are only focusing on the entropy impurity measure, we will use $I$ to
denote $I_{Ent}$

We will find it useful to employ the following corollary of Lemma \ref{Entropy-basicbounds}.

\begin{corollary} \label{corollary:Entropy-basicbounds}
For a vector  $\vvec \in \R_{+}^k$ and $i \in [k]$ we
have 
\begin{align}
 ( \|  \vvec\|_1 - \| \vvec\|_{\infty} ) \max \left \{1, \log \left ( \frac{ \|  \vvec\|_1 }{\vvec\|_1 - \| \vvec\|_{\infty} } \right ) \right \} \le I_{Ent}(\vvec) \le 2 ( \|  \vvec\|_1 - v_i ) \log \left (  \frac{ 2k \|  \vvec\|_1 }{\| \vvec\|_1 - v_i } \right )
\end{align}
\label{cor:approx}
\end{corollary}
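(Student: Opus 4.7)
The plan is to establish the two inequalities separately, both by revisiting the proof of Lemma \ref{Entropy-basicbounds}. The lower bound is a direct case analysis on top of the lemma itself, while the upper bound requires re-running the lemma's argument with the arbitrary index $i$ in place of the dominant index $i^*$.

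For the lower bound I would split on whether $\|\vvec\|_\infty \ge \|\vvec\|_1/2$. If yes, the $\min$ inside the logarithm of Lemma \ref{Entropy-basicbounds} equals $\|\vvec\|_1 - \|\vvec\|_\infty$ and $\log(\|\vvec\|_1/(\|\vvec\|_1 - \|\vvec\|_\infty)) \ge 1$, so the lemma's bound already matches the corollary's $\max$. Otherwise the $\min$ equals $\|\vvec\|_\infty \le \|\vvec\|_1 - \|\vvec\|_\infty$, hence $\log(\|\vvec\|_1/\|\vvec\|_\infty) \ge \log(\|\vvec\|_1/(\|\vvec\|_1 - \|\vvec\|_\infty))$ and also $\log(\|\vvec\|_1/\|\vvec\|_\infty) \ge 1$, so the lemma's bound dominates both entries of the $\max$.

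For the upper bound I would mimic the proof of Lemma \ref{Entropy-basicbounds} but single out the arbitrary coordinate $i$ rather than a maximizer. By concavity of $x \mapsto x\log(\|\vvec\|_1/x)$, the sum $\sum_{j \ne i} v_j \log(\|\vvec\|_1/v_j)$ is maximized when all $v_j$ with $j\ne i$ are equal to $(\|\vvec\|_1 - v_i)/(k-1)$, yielding
\[
I_{Ent}(\vvec) \le v_i \log\frac{\|\vvec\|_1}{v_i} + (\|\vvec\|_1 - v_i)\log\frac{\|\vvec\|_1}{\|\vvec\|_1 - v_i} + (\|\vvec\|_1 - v_i)\log(k-1).
\]
I would then split on the size of $v_i$. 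If $v_i \ge \|\vvec\|_1/2$, Proposition \ref{prop:basic-inequality} applied to $p = (\|\vvec\|_1 - v_i)/\|\vvec\|_1$ bounds $v_i \log(\|\vvec\|_1/v_i)$ by $(\|\vvec\|_1 - v_i)\log(\|\vvec\|_1/(\|\vvec\|_1 - v_i))$, and combining the three terms yields $2(\|\vvec\|_1 - v_i)\log((k-1)\|\vvec\|_1/(\|\vvec\|_1 - v_i))$, which lies within the claimed bound. If $v_i < \|\vvec\|_1/2$, then $\|\vvec\|_1 < 2(\|\vvec\|_1 - v_i)$ and Proposition \ref{prop:increasing-new} gives $v_i\log(\|\vvec\|_1/v_i) \le \|\vvec\|_1 (\log e)/e \le \|\vvec\|_1 \le 2(\|\vvec\|_1 - v_i)$, so the three terms combine to $(\|\vvec\|_1 - v_i)\log(4(k-1)\|\vvec\|_1/(\|\vvec\|_1 - v_i))$.

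The delicate point is the bookkeeping of constants in this second case: to absorb the additive term $2$ into the logarithm and reach the target form $2(\|\vvec\|_1 - v_i)\log(2k\|\vvec\|_1/(\|\vvec\|_1 - v_i))$, one uses that $\log(2k\|\vvec\|_1/(\|\vvec\|_1 - v_i)) \ge 1$, which holds trivially since $k \ge 1$. This is exactly why the corollary needs the factor $2k$ inside the log, in place of the $k$ appearing in Lemma \ref{Entropy-basicbounds}.
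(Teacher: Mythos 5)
Your proof is correct, but for the upper bound it takes a different route from the paper. The paper's own proof of the corollary is essentially a two-line deduction from Lemma \ref{Entropy-basicbounds}: since $\|\vvec\|_1-\|\vvec\|_{\infty}\le\|\vvec\|_1-v_i\le\|\vvec\|_1\le 2k\|\vvec\|_1/e$, one applies Proposition \ref{prop:increasing-new} with $A=2k\|\vvec\|_1$ (monotonicity of $x\mapsto x\log(A/x)$ on $(0,A/e]$) to replace $\|\vvec\|_1-\|\vvec\|_{\infty}$ by the larger quantity $\|\vvec\|_1-v_i$, after weakening the factor $k$ of the lemma to $2k$ inside the logarithm; the lower bound is left as the same routine case analysis on $\|\vvec\|_{\infty}\gtrless\|\vvec\|_1/2$ that you spell out. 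You instead re-run the lemma's internal argument with the arbitrary coordinate $i$ in place of the dominant one, which is legitimate (the concavity/uniformization step never used that $i$ is a maximizer), and it even gives slightly sharper intermediate bounds, namely $2(\|\vvec\|_1-v_i)\log\bigl((k-1)\|\vvec\|_1/(\|\vvec\|_1-v_i)\bigr)$ in your first case and $(\|\vvec\|_1-v_i)\log\bigl(4(k-1)\|\vvec\|_1/(\|\vvec\|_1-v_i)\bigr)$ in your second, both of which are at most the stated bound because $4(k-1)X\le(2kX)^2$ for $X=\|\vvec\|_1/(\|\vvec\|_1-v_i)\ge 1$. So the paper's route buys brevity and reuse of the lemma, while yours buys a self-contained argument with marginally better constants. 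One small inaccuracy in your closing remark: the factor $2k$ is not forced by your absorption of the additive term, since in your derivation $4(k-1)X\le(kX)^2$ already holds for all $k\ge 2$ (it reduces to $(k-2)^2\ge 0$); in the paper the extra factor $2$ is there so that $\|\vvec\|_1-v_i\le 2k\|\vvec\|_1/e$ holds even when $v_i=0$ and $k=2$, which is what lets Proposition \ref{prop:increasing-new} be applied. This affects only your commentary, not the validity of the proof.
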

\begin{proof}
The second inequality follows from Lemma \ref{Entropy-basicbounds}
and Proposition \ref{prop:increasing-new}, using $A=2k \|  \vvec\|_1$.
\end{proof}

\subsection{Our Tools}
In this section we discuss the main tools employed to design our algorithms.

The example of Remark \ref{rem:tight}, apart from establishing the tightness of ${\cal A}^{\sc Dom}$ for $I_{Ent}$, also
shows that we cannot obtain a very good partition by just considering those containing only pure buckets.
However, perhaps surprisingly, the situation is different if we allow at most one mixed bucket.
This is formalized in Theorem \ref{teo:main},  
our first and main tool to obtain good approximate solutions for  
instances of $\textsc{PMWIP}.$ 
This structural theorem will be used by our algorithms to restrict the space where a  partition with low impurity is searched.
Its proof, presented in the next section, is reasonably involved: it consists of starting with
an optimal partition an then showing how to exchange vectors from its
buckets so that a new partition ${\cal P}'$ satisfying the desired properties  is obtained.

\begin{theorem} 
There exists a partition  ${\cal P'}$ with the following properties:
(i) it has at most one mixed bucket; 
(ii) if $\vvec$ is  an $i$-dominant vector in the mixed bucket and $\vvec'$ is an $i$-dominant vector of
a $i$-pure bucket, then $ratio(\vvec) \le ratio(\vvec')$;
(iii)  the impurity of ${\cal P}'$ is at an $O( \log^2 k)$ factor from the minimum possible impurity.
\label{teo:main}
\end{theorem}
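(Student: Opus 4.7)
The plan is to start from an optimal partition ${\cal P}^* = (B^*_1,\dots,B^*_L)$ for $(V, I_{Ent}, L)$ and transform it into a partition ${\cal P}'$ satisfying (i)--(iii) in two stages, each of which loses at most an $O(\log k)$ factor in total impurity. Composing the stages yields the claimed $O(\log^2 k)$ bound in (iii).

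\textbf{Stage 1 (collapsing mixed buckets into one).} Classify each bucket of ${\cal P}^*$ as \emph{pure} or \emph{mixed}, and let $M^*_1,\dots,M^*_t$ denote the mixed buckets. Define the intermediate partition ${\cal P}_1$ by keeping the pure buckets of ${\cal P}^*$ unchanged and replacing the $t$ mixed ones by their union $M=\bigcup_{j=1}^t M^*_j$, padding with empty groups so that the total number of groups remains $L$. Then ${\cal P}_1$ already satisfies (i). To bound its cost, apply the upper bound of Corollary~\ref{cor:approx} to $I(M)$ and the lower bound of Corollary~\ref{cor:approx} to each $I(M^*_j)$: because each $M^*_j$ is mixed, the quantity $\|M^*_j\|_1-\|M^*_j\|_\infty$ is a nontrivial fraction of $\|M^*_j\|_1$, so $I(M^*_j)=\Omega(\|M^*_j\|_1-\|M^*_j\|_\infty)$, while $I(M)\le O\bigl((\|M\|_1-\|M\|_\infty)\log k\bigr)$. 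Combining these with the straightforward inequality $\|M\|_1-\|M\|_\infty\le\sum_j(\|M^*_j\|_1-\|M^*_j\|_\infty)$ gives $I({\cal P}_1)\le O(\log k)\cdot\OPT$.

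\textbf{Stage 2 (enforcing ratio monotonicity).} The partition ${\cal P}_1$ satisfies (i) but possibly not (ii). For each direction $i$, as long as there are an $i$-dominant $\vvec\in M$ and an $i$-dominant $\vvec'$ in some $i$-pure bucket $P$ of ${\cal P}_1$ with $ratio(\vvec)>ratio(\vvec')$, swap $\vvec$ and $\vvec'$. Because both vectors are $i$-dominant, $P$ remains $i$-pure and the dominant index of $M$ is preserved, so $M$ stays a mixed bucket and (i) continues to hold. Iterating until no offending pair remains yields ${\cal P}'$ satisfying (ii). A careful exchange argument based on Corollary~\ref{cor:approx} together with the subsystem property (Lemma~\ref{lemma:subsystem}) and the concavity of $f_{Entr}$ shows that the cumulative impurity increase over all swaps is bounded by another $O(\log k)$ factor, so $I({\cal P}')\le O(\log^2 k)\cdot\OPT$.

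\textbf{Main obstacle.} The delicate step is Stage~2. A single swap can in principle increase both $I(M)$ and $I(P)$, and it is not a priori clear that the amortized cost of all swaps stays within an $O(\log k)$ factor. What makes the bound go through is that each swap only exchanges vectors sharing the same dominant coordinate: the ``bulk'' contribution to $\|M\|_\infty$ and $\|P\|_\infty$ is untouched, and the quantities $\|M\|_1-\|M\|_\infty$ and $\|P\|_1-\|P\|_\infty$ shift in a coordinated, controlled way. One then charges the impurity increase against the lower-bound form of Corollary~\ref{cor:approx}, using the fact that the monotone ratio order enforced by (ii) is precisely what keeps the combined tail of $M$ and $P$ small after rearrangement. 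Dovetailing this analysis with Stage~1 completes the argument.
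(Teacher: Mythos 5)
Your Stage 1 contains a genuine error, and it is the load-bearing step. You merge \emph{all} mixed buckets of the optimal partition wholesale into one bucket $M$ and claim $I(M)\le O(\log k)\sum_j I(M^*_j)$ via the intermediate bounds ``$\|M^*_j\|_1-\|M^*_j\|_\infty$ is a nontrivial fraction of $\|M^*_j\|_1$'' and ``$I(M)\le O\bigl((\|M\|_1-\|M\|_\infty)\log k\bigr)$.'' Both intermediate claims are false: a bucket such as $\{(s,0,0),(0,1,0)\}$ is mixed yet has $\|B\|_1-\|B\|_\infty=1\ll\|B\|_1$, and the correct upper bound (Corollary \ref{cor:approx}) carries the term $\log\frac{\|M\|_1}{\|M\|_1-\|M\|_\infty}$, which is not $O(\log k)$ in general. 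The conclusion itself fails: take $k=4$, $L=2$, $V=\{(s,0,0,0),(0,0,1,0),(0,s,0,0),(0,0,0,1)\}$. The optimal partition is $\{(s,0,0,0),(0,0,1,0)\}$, $\{(0,s,0,0),(0,0,0,1)\}$ (both buckets mixed, total impurity $\Theta(\log s)$), while their union has aggregate vector $(s,s,1,1)$ and impurity $\Theta(s)$ --- an unbounded loss even though $k$ is constant. This is precisely why the paper's Lemma \ref{lem:part1} does \emph{not} merge mixed buckets: it keeps the dominant mass of every other mixed bucket in place (those remainders become pure buckets) and moves only the vectors whose dominant coordinate disagrees with their bucket, into the mixed bucket of \emph{smallest ratio}; the moved mass $s_i$ can then be charged against $I(B_i)\ge\frac{s_i}{2}\max\{1,\log ratio(B_i)\}$ together with $ratio(B_1)\le ratio(B_i)$, which is what makes the $O(\log k)$ bound go through.

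Stage 2 is also not a proof as written: ``a careful exchange argument shows'' carries the entire $O(\log k)$ amortization, and one-for-one vector swaps do not control mass (the swapped vectors can have wildly different $\ell_1$ norms), nor is termination with bounded cumulative cost argued. The paper needs two further ingredients you are missing: an intermediate step (Lemma \ref{lem:part2}) that re-partitions, via ${\cal A}^{2C}$, the $i$-dominant vectors outside the mixed bucket so that pure buckets are ratio-ordered (costing one $O(\log k)$ factor), and then a single structured exchange of \emph{sets} $X_i,Y_i$ between the mixed bucket and the lowest-ratio $i$-pure bucket, chosen case-by-case so that the exchanged masses are matched; bounding that exchange is the content of Lemma \ref{lem:last-step-01Jul} and is a substantial computation, not a routine concavity argument. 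As it stands, both the structural claim of your Stage 1 and the quantitative claim of your Stage 2 are unestablished, so the proposal does not prove the theorem.
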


\remove{The algorithms presented in the next section build a partition satisfying properties (i) and (ii) of Theorem \ref{teo:main} and 
such that its impurity is at most an $O(\log k)$ factor worse than the impurity of 
partition ${\cal P}'$ satisfying all the properties of Theorem \ref{teo:main}.}

Our second  tool is a transformation $\chi^{2C}$ that maps  vectors in $\R^k$
into vectors in $\R^2$. The nice property of this transformation is that it preserves the entropy
of a set of $i$-pure vectors up to an $O( \log k)$ distortion as formalized by Proposition \ref{prop:may23-10}.
Thus, in the light of Theorem \ref{teo:main}, instead of searching for low-impurity partitions of 
$k$-dimensional vectors with at least $L$-1 pure buckets, we can search for those in a $2$-dimensional space.

The transformation $\chi^{2C}$ is defined as follows 
$$\chi^{2C}(\vvec)= \begin{cases}
(\|\vvec\|_{\infty},\|\vvec\|_1-\|\vvec\|_{\infty}) & \mbox{if }
\|\vvec \|_{\infty} \geq \frac{1}{2} \| \vvec \|_1\\
(\|\vvec\|_1/2,\|\vvec\|_1/2) &  \mbox{if } \|\vvec \|_{\infty} < \frac{1}{2} \| \vvec \|_1.
\end{cases}
$$

Let $I_2(B)$ to denote the {\em 2-impurity of the set $B$}, that is,
the impurity of the set of $2$-dimensional vectors obtained by applying $\chi^{2C}$ to each vector in $B$.
We have that

\begin{prop}
\label{prop:may23-10}
Fix $i \in [k]$ and let $B$ be an $i$-pure bucket. It holds that 
$$(1/2) I_2(B)\le I(B) \le 2 I_2(B) +  4 (\log k) \sum_{\wvec \in B} I(\wvec).
%4 (\|B\|_1 - \|B\|_{\infty}) \log k  \leq 2 \log k \cdot  I_2(B).
$$
\end{prop}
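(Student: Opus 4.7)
The plan is to exploit the subsystem property (Lemma~\ref{lemma:subsystem}), which for $I_{Ent}$ holds with equality by Fact~\ref{fact:gini-entropy}. Applied with $\dvec^{(1)} = \evec_i$ and $\dvec^{(2)} = \onevec - \evec_i$ to $\uvec = \sum_{\vvec \in B} \vvec$, this yields the key identity
\begin{equation*}
I(B) = I\bigl((u_i, s - u_i)\bigr) + I\bigl(\uvec \circ (\onevec - \evec_i)\bigr), \qquad s := \|\uvec\|_1,
\end{equation*}
using $I(u_i \evec_i) = 0$. I partition $B$ into $B_1 = \{\vvec : \|\vvec\|_\infty \geq \|\vvec\|_1/2\}$ and $B_2 = B \setminus B_1$, and introduce $A_1 = \sum_{\vvec \in B_1} v_i$, $A_2 = \sum_{\vvec \in B_1}(\|\vvec\|_1 - v_i)$, $A_3 = \sum_{\vvec \in B_2} \|\vvec\|_1/2$, $X = \sum_{\vvec \in B_2} v_i \leq A_3$. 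A direct computation shows $\sum_{\vvec \in B} \chi^{2C}(\vvec) = (A_1 + A_3, A_2 + A_3)$, so (using $A_1 \geq A_2$) one has $I_2(B) = s\,h(\beta)$ with $\beta := (A_1 + A_3)/s \geq 1/2$, where $h$ denotes the binary entropy; similarly $I((u_i, s-u_i)) = s\,h(u_i/s)$ with $u_i/s \leq \beta$.

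For the lower bound I would in fact establish the stronger $I(B) \geq I_2(B)$ by a two-case split. If $u_i \geq s/2$, both $u_i/s$ and $\beta$ lie in $[1/2,1]$ with $u_i/s \leq \beta$; since $h$ is decreasing on $[1/2,1]$, $h(u_i/s) \geq h(\beta)$, so $I(B) \geq I((u_i, s-u_i)) \geq I_2(B)$. If $u_i < s/2$, I use that Shannon entropy dominates min-entropy (immediate from $\log(1/p_j) \geq \log(1/p_{\max})$ for every $j$), giving $H(\uvec/s) \geq \log(s/u_i) \geq 1 \geq h(\beta)$ and hence $I(B) = s\,H(\uvec/s) \geq s\,h(\beta) = I_2(B)$.

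For the upper bound I would bound the two summands of the identity separately. The residual $I(\uvec \circ (\onevec - \evec_i))$ is the impurity of a vector supported on at most $k - 1$ coordinates, so it is at most $(s - u_i)\log(k-1)$. Since $B$ is $i$-pure, Corollary~\ref{cor:approx} gives $I(\wvec) \geq \|\wvec\|_1 - w_i$ for each $\wvec \in B$, so $\sum_{\wvec \in B} I(\wvec) \geq s - u_i$, and therefore $I(\uvec \circ (\onevec - \evec_i)) \leq \log k \cdot \sum_{\wvec \in B} I(\wvec)$. For the $2$-dimensional piece I would show $h(u_i/s) \leq 2\,h(\beta)$: since $A_2 + 2A_3 - X \leq 2(A_2 + A_3)$, we have $1 - u_i/s \leq 2(1 - \beta)$. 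When $\beta \geq 3/4$, $2(1-\beta) \leq 1/2$, so $h$ is increasing on $[1-\beta, 2(1-\beta)]$, and the subadditivity of $h$ on $[0,1]$ (a consequence of concavity together with $h(0)=0$, which yields $h(a) + h(b) \geq h(a+b)$ whenever $a+b \leq 1$) gives
\[
h(u_i/s) = h(1-u_i/s) \leq h(2(1-\beta)) \leq 2\,h(1-\beta) = 2\,h(\beta).
\]
When $\beta < 3/4$, $h(\beta) \geq h(3/4) > 1/2$, so $h(u_i/s) \leq 1 < 2\,h(\beta)$ trivially. Summing the two bounds yields $I(B) \leq 2\,I_2(B) + \log k \cdot \sum_{\wvec \in B} I(\wvec) \leq 2\,I_2(B) + 4\,\log k \cdot \sum_{\wvec \in B} I(\wvec)$.

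The main obstacle is the $2$-dimensional inequality $h(u_i/s) \leq 2\,h(\beta)$, which becomes delicate precisely when $\beta$ is close to $1$ (the effective 2-dimensional projection is highly skewed): a direct comparison of binary entropies does not yield a universal constant, and the subadditivity $h(2y) \leq 2\,h(y)$ for $y \in [0,1/2]$ is exactly what delivers the factor $2$ in the bound.
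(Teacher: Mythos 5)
Your proof is correct, and it follows a genuinely different route from the paper's. The paper never invokes the exact chain rule: it sandwiches both $I(B)$ and $I_2(B)$ between expressions of the form $(\|\cdot\|_1-\|\cdot\|_\infty)\cdot\log(\cdot)$ via Lemma \ref{Entropy-basicbounds} and Corollary \ref{cor:approx}, shows that the off-dominant mass $\alpha$ of the collapsed vector $\uvec=\sum_{\wvec\in B}\chi^{2C}(\wvec)$ satisfies $\alpha\le \|\vvec\|_1-\|\vvec\|_\infty\le 2\alpha$ for $\vvec=\sum_{\wvec\in B}\wvec$, and then argues by cases on whether $\|\vvec\|_1-\|\vvec\|_\infty$ exceeds $\|\vvec\|_1/e$, using Propositions \ref{prop:basic-inequality} and \ref{prop:increasing-new}. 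You instead exploit that the subsystem property holds with equality for $I_{Ent}$ (Lemma \ref{lemma:subsystem}), so that $I(B)$ splits exactly into the two-dimensional term $s\,h(u_i/s)$ (with $h$ the binary entropy) plus a residual supported on at most $k-1$ coordinates, which you bound by $(s-u_i)\log(k-1)\le(\log k)\sum_{\wvec\in B}I(\wvec)$; the comparison of $s\,h(u_i/s)$ with $I_2(B)=s\,h(\beta)$ then reduces to elementary facts about $h$ (monotonicity on each half and the subadditivity $h(2y)\le 2h(y)$), driven by the two inequalities $u_i/s\le\beta$ and $1-u_i/s\le 2(1-\beta)$, which are precisely the analogues of the paper's $\|\vvec\|_\infty\le\|\uvec\|_\infty$ and $\|\vvec\|_1-\|\vvec\|_\infty\le 2\alpha$. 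Your route buys slightly sharper constants (you obtain $I_2(B)\le I(B)$ instead of $\tfrac{1}{2}I_2(B)\le I(B)$, and an additive term $(\log k)\sum_{\wvec\in B}I(\wvec)$ instead of $4(\log k)\sum_{\wvec\in B}I(\wvec)$) and makes transparent that the dimension $k$ enters only through the residual term; the paper's route is more mechanical, recycling its already-established two-sided entropy bounds and thus avoiding any separate analysis of the binary entropy function.
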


Finally, our  last tool is the following result from \cite{journals/tit/KurkoskiY14}, 
here stated following our notation, that shows that $\textsc{PMWIP}$ 
 can be optimally solved when $k=2$.  

\begin{theorem}[\cite{journals/tit/KurkoskiY14}]
Let $V$ be  a set of 2-dimensional vectors and let $L$ be an integer larger than 1.
There exists a polynomial time algorithm to build a partition of $V$ into $L$ buckets 
with optimal impurity.

In addition, the partition computed by the algorithm satisfies the following property: if $B$ is a bucket in
the partition and 
if $ \vvec \in V \setminus B $ then either $ratio(\vvec) \ge \max_{\vvec' \in B} \{ ratio(\vvec') \}$ or $ratio(\vvec) \le \min_{\vvec' \in B} \{ ratio(\vvec') \}$. 
\label{teo:opt-2classes}
\end{theorem}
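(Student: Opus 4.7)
The plan is to combine the Hyperplanes Lemma (Lemma \ref{lem:hyperplanes}) with a standard dynamic program on the angular ordering of the input vectors, and then to establish the ratio-contiguity property by a local exchange argument on the partition returned by the DP.

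Applying Lemma \ref{lem:hyperplanes} with $k=2$ furnishes vectors $\vvec^{(1)},\dots,\vvec^{(L)} \in \R^2$ such that $\vvec \in V_i$ iff $\vvec \cdot \vvec^{(i)} < \vvec \cdot \vvec^{(j)}$ for every $j \neq i$. In the plane, each such inequality defines an open half-plane through the origin, so every bucket of an optimal partition is an intersection of such half-planes, i.e., an open cone with apex at the origin. Equivalently, after sorting the input vectors by the angular coordinate $p(\vvec) := v_1/(v_1+v_2)$, some optimal partition assigns $L$ consecutive runs of this sorted sequence to the $L$ buckets. The problem reduces to cutting a sorted sequence of $n$ vectors into $L$ consecutive blocks while minimizing the sum of block-impurities; this is solved by a textbook dynamic program. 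With $S_{i,j}=\sum_{r=i}^{j}\vvec_r$ evaluable in $O(1)$ from prefix sums, set
\begin{equation*}
f(j,\ell)\;=\;\min_{0\le i<j}\bigl[f(i,\ell-1)+I(S_{i+1,j})\bigr],\qquad f(0,0)=0,
\end{equation*}
and return $f(n,L)$, recovering the partition via back-pointers in $O(n^2 L)$ time.

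For the ratio property I would argue by local exchange on the DP output. Suppose, toward contradiction, that some bucket $B$ and external vector $\vvec \notin B$ satisfied $\min_{\wvec \in B} ratio(\wvec) < ratio(\vvec) < \max_{\wvec \in B} ratio(\wvec)$. Choosing an extremal-ratio $\wvec \in B$ on the side corresponding to $ratio(\vvec)$ and swapping $\vvec$ with $\wvec$ across their buckets, I would show via the superadditivity of $I$ (Lemma \ref{lem:imputity-supperadditivy}) and the strict concavity of $f_{Entr}$ (property P2) that the total impurity strictly decreases — contradicting optimality — or is unchanged, in which case the algorithm's tie-breaking can be taken to favor ratio-monotone buckets and the property is attained directly.

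The main obstacle is that $ratio(\vvec)$ is U-shaped (non-monotone) in the angular coordinate $p$: it is decreasing on $[0,1/2]$ and increasing on $[1/2,1]$, so an angular interval need not be an interval in $ratio$. To handle this, I would first argue that some optimal partition can be chosen in which no bucket has angular support crossing $p=1/2$: any such bucket can be split at the diagonal, strictly lowering its impurity by superadditivity, and the DP's local flexibility in reassigning the endpoints of adjacent cones allows restoring exactly $L$ buckets without increasing the total. Once each bucket lies entirely on one side of the diagonal, $ratio$ is monotone in $p$ within it, so the angular contiguity already produced by the DP automatically yields the claimed ratio-contiguity; the details of the simultaneous split/merge step are the delicate part of the argument and constitute the main technical work I would need to fill in.
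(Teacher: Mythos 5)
The paper gives no proof of this statement at all: Theorem \ref{teo:opt-2classes} is imported verbatim from Kurkoski--Yagi \cite{journals/tit/KurkoskiY14}, so there is no internal argument to compare against. The first half of your proposal (Hyperplanes Lemma in $\R^2$ $\Rightarrow$ every bucket of some optimal partition is an angular cone $\Rightarrow$ optimal partitions may be sought among concatenations of consecutive blocks in the order of $p(\vvec)=v_1/(v_1+v_2)$ $\Rightarrow$ $O(n^2L)$ dynamic program) is sound and is essentially the standard route, i.e.\ the same structure-plus-DP idea underlying the cited algorithm.

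The second half, however, has a genuine gap, and in fact the ratio property cannot be proved in the generality you are attempting, because for arbitrary $2$-dimensional $V$ it is false. Take $V=\{(10,1),(7,3),(2,9)\}$ and $L=2$: the unique optimal partition is $\{(10,1),(7,3)\}$ together with $\{(2,9)\}$ (impurity $\approx 14.75$ versus $\approx 20.7$ and $\approx 21.9$ for the alternatives), yet $ratio(7,3)=10/3$, $ratio(2,9)=11/2$, $ratio(10,1)=11$, so the external vector $(2,9)$ has ratio strictly between the minimum and maximum ratios inside the bucket $\{(10,1),(7,3)\}$. Note that in this example no bucket crosses the diagonal, which shows that your key intermediate claims both fail: (a) ``split any diagonal-crossing bucket and restore $L$ buckets without increasing the total'' is unjustified (splitting forces a merge elsewhere, which by superadditivity only increases impurity, and optimal partitions may genuinely need a bucket straddling $p=1/2$, e.g.\ $\{(3,2),(2,3)\}$ grouped against $\{(10,1)\}$); and (b) even when every bucket lies on one side of the diagonal, angular contiguity does not yield ratio contiguity, because $ratio(\cdot)=1/\min\{p,1-p\}$ folds the two sides of the diagonal onto the same range, so a vector on the opposite side can have its ratio interleave with a bucket's ratios. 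The statement is only meaningful (and then immediate from your Part 1) in the regime in which the paper actually invokes it: the input vectors are images under $\chi^{2C}$, hence all satisfy $u_1\ge u_2$, and on that domain $ratio(\uvec)=\|\uvec\|_1/u_2$ is a monotone function of the angular/likelihood-ratio order, so the block structure produced by your DP (or by the Kurkoski--Yagi algorithm, with suitable tie-breaking) is literally the claimed ratio contiguity. If you add that restriction, your argument closes with a one-line monotonicity remark and none of the exchange or diagonal-splitting machinery is needed; without it, no proof can exist.
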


\remove{This theorem says that there is an algorithm producing a partition of minimum impurity and also satisfying the property that the 
vectors are distributed in the buckets according to their ratio, i.e., there is a sequence of ratios
$0 = r_0 \leq  r_1 \leq \cdots \leq \dots r_L$ such that  for each $i=1, \dots L,$ bucket $B_i$ contains only  vectors of ratio between 
$r_{i-1}$ and $r_i.$}

\remove{In addition, the partition computed by the algorithm satisfies the following ordering property: if $B$ is a bucket in
the partition and 
if $ \vvec \in V \setminus B $ then either $ratio(\vvec) \ge \max_{\vvec' \in B} \{ ratio(\vvec') \}$ or $ratio(\vvec) \le \min_{\vvec' \in B} \{ ratio(\vvec') \}$. 
}

Motivated by the previous results  we define 
${\cal A}^{\sc 2C}$ as the algorithm that takes as input a set of vectors $B$ and an integer $b$ and produces 
a partition of $B$ into $b$ buckets by executing the following steps:
(i) every vector $\vvec \in B$ is mapped to
$\chi^{2C}(\vvec)$; (ii) the algorithm given  
by Theorem \ref{teo:opt-2classes} is applied over the transformed
set of vectors to distribute them into $b$ buckets; 
(iii) the partition of $B$ corresponding to the partition produced in step 
(ii) is returned.  

Algorithm ${\cal A}^{\sc 2C}$ is employed as a subroutine of
the algorithms presented in the next section. The 
 following property holds for ${\cal A}^{\sc 2C}$.

\begin{prop} \label{prop:good-projection}
Let $B$ be an $i$-pure set of vectors.  The impurity of  the partition ${\cal P}$ constructed by the algorithm ${\cal A}^{2C}$ 
on input $(B, b)$ is at most an $O(\log k)$ factor from the  minimum possible impurity for a partition of  
set $B$ into $b$ buckets.
\end{prop}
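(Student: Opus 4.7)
The plan is to leverage the three tools discussed in this subsection in a direct chain: the distortion bound of Proposition \ref{prop:may23-10} in both directions, the optimality of the subroutine from Theorem \ref{teo:opt-2classes}, and the superadditivity of $I$ (Lemma \ref{lem:imputity-supperadditivy}). Let ${\cal P}^* = (B_1^*, \ldots, B_b^*)$ denote an optimal $b$-partition of $B$ in the original $k$-dimensional sense, and let ${\cal P} = (B_1, \ldots, B_b)$ be the partition returned by ${\cal A}^{2C}(B,b)$. The crucial observation that makes Proposition \ref{prop:may23-10} applicable everywhere is that since $B$ is $i$-pure, every subset of $B$ is $i$-pure as well; in particular both the $B_j$ and the $B_j^*$ are $i$-pure buckets.

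First I would apply the upper bound in Proposition \ref{prop:may23-10} to each $B_j$ and sum, obtaining
\begin{equation*}
I({\cal P}) \;=\; \sum_{j=1}^b I(B_j) \;\le\; 2 \sum_{j=1}^b I_2(B_j) \;+\; 4 (\log k)\sum_{\wvec \in B} I(\wvec).
\end{equation*}
Next I would bound the first sum using the defining property of ${\cal A}^{2C}$: by step (ii) and Theorem \ref{teo:opt-2classes}, the partition induced by ${\cal A}^{2C}$ on the $\chi^{2C}$-images of $B$ has minimum possible 2-dimensional impurity among all $b$-partitions, hence in particular $\sum_j I_2(B_j) \le \sum_j I_2(B_j^*)$. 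Applying the lower bound of Proposition \ref{prop:may23-10} to each $B_j^*$ yields $I_2(B_j^*) \le 2 I(B_j^*)$, so altogether $\sum_j I_2(B_j) \le 2\,\OPT$, where $\OPT$ is the minimum impurity of a $b$-partition of $B$.

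To handle the additive term $\sum_{\wvec \in B} I(\wvec)$, I would invoke the superadditivity of $I$ from Lemma \ref{lem:imputity-supperadditivy}, which, when iterated across a bucket, gives $I(B_j^*) \ge \sum_{\wvec \in B_j^*} I(\wvec)$; summing over $j$ yields $\sum_{\wvec \in B} I(\wvec) \le I({\cal P}^*) = \OPT$. Substituting both bounds in the display above gives
\begin{equation*}
I({\cal P}) \;\le\; 4\,\OPT \;+\; 4 (\log k)\,\OPT \;=\; O(\log k)\,\OPT,
\end{equation*}
which is the desired conclusion.

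The argument is mostly a matter of bookkeeping; the only real subtlety is that Proposition \ref{prop:may23-10} is only guaranteed to hold for $i$-pure sets, so I must be careful to note at the outset that the $i$-purity of $B$ propagates to every sub-bucket of both ${\cal P}$ and ${\cal P}^*$. This is the single place the hypothesis of the proposition is used, and absent it the whole chain of inequalities collapses.
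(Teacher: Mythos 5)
Your proof is correct and follows essentially the same route as the paper's: the upper bound of Proposition \ref{prop:may23-10} applied to the buckets of ${\cal P}$, the optimality of the $2$-impurity of ${\cal P}$ guaranteed by Theorem \ref{teo:opt-2classes}, the lower bound of Proposition \ref{prop:may23-10} applied to the buckets of ${\cal P}^*$, and superadditivity to absorb the additive term $\sum_{\wvec \in B} I(\wvec)$ into $\OPT$. Your explicit remark that $i$-purity of $B$ is inherited by every sub-bucket is exactly the (implicit) justification the paper relies on, so nothing is missing.
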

\begin{proof}
Let ${\cal P}^*$ be the partition of $B$ into $b$  buckets
with minimum impurity. We have that 
\begin{eqnarray*} I({\cal P}) &\le& 2 I_2({\cal P}) + 4 \log k \sum_{\wvec \in B} I(\wvec)
\le 2 I_2({\cal P}^*) + 4 \log k \sum_{\wvec \in B} I(\wvec) \\
&\le& 
 4  I({\cal P}^*) + 4 (\log k) \sum_{\wvec \in B} I(\wvec) = O(\log k) I({\cal P}^*),
%4 \log k \cdot I({\cal P}^*),
\end{eqnarray*} 
where the first inequality follows from Proposition \ref{prop:may23-10} (applied to each bucket of ${\cal P}$), 
the second one from the optimality of 
${\cal P}$, the third one by Proposition \ref{prop:may23-10} (applied to each bucket of ${\cal P}^*$), and the last one by observing that 
by superadditivity of $I$ we have $I({\cal P}^*) \geq  \sum_{\wvec \in B} I(\wvec).$
%and the last one, again, from Proposition \ref{prop:may23-10}.
\end{proof}

\subsection{Proof of Theorem \ref{teo:main}} 
The proof proceeds in  steps.
Lemma \ref{lem:part1} shows that there exist a partition
 with at most one mixed bucket  whose impurity is $O(\log k)$ factor from $\OPT(V)$.
Next, we explain how to modify this partition in order to obtain a new partition ${\cal P}$
with at most one mixed bucket, impurity limited by $O(\log k) \OPT(V)$
and such that the vectors in its $i$-pure buckets are ordered according to their ratios.
Finally, we show how to modify  ${\cal P}$ so that we obtain a partition 
${\cal P}'$ that satisfies the properties of Theorem \ref{teo:main}.

\begin{lemma}
There exists a partition   with at most one mixed bucket that satisfies: (i) the impurity of the mixed bucket is
at a $O( \log k)$ factor from the optimal  impurity and (ii) the sum of the impurities of the
pure buckets is at most the  optimal  impurity.
\label{lem:part1}
\end{lemma}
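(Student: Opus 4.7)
The plan is to take an optimal partition ${\cal P}^* = (B_1,\ldots,B_L)$ and ``purify'' each of its buckets by moving the vectors whose dominant coordinate disagrees with the bucket's aggregate dominant into one common auxiliary bucket. Concretely, for $i_j = dom(B_j)$, let $B_j^+$ be the subset of $B_j$ consisting of the $i_j$-dominant vectors and $M_j = B_j \setminus B_j^+$ the rest; the candidate partition is $(B_1^+,\ldots,B_L^+,M)$ with $M = \bigcup_j M_j$. By construction each $B_j^+$ is pure, so $M$ is the only potentially mixed bucket.

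Property (ii) is essentially free: since $B_j^+ \subseteq B_j$, superadditivity of $I$ (Lemma~\ref{lem:imputity-supperadditivy}) gives $I(B_j^+) \le I(B_j)$, and summing yields $\sum_j I(B_j^+) \le \sum_j I(B_j) = \OPT(V)$. The hard part is property (i), i.e., showing $I(M) \le O(\log k) \cdot \OPT(V)$. My plan is to separate the $\log k$ factor from the $\OPT$ factor: first establish that the total mass of $M$ satisfies $\|M\|_1 \le 2\,\OPT(V)$, and then apply the elementary bound $I_{Ent}(M) \le \|M\|_1 \log k$ (since a distribution on $k$ outcomes has entropy at most $\log k$); these two immediately give the desired bound.

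The main obstacle is thus the mass bound, and it hinges on the following observation: since every $\vvec \in M_j$ is non $i_j$-dominant, $v_{i_j}$ is strictly less than $v_{dom(\vvec)}$, and because $v_{i_j} + v_{dom(\vvec)} \le \|\vvec\|_1$, one obtains $v_{i_j} < \|\vvec\|_1/2$, i.e., $\|\vvec\|_1 < 2(\|\vvec\|_1 - v_{i_j})$. Summing over $\vvec \in M_j$ and using $\|B_j\|_\infty = \sum_{\vvec \in B_j} v_{i_j}$ (because $B_j$ is $i_j$-dominant) together with the non-negativity of the corresponding sum over $B_j^+$, one gets $\|M_j\|_1 \le 2(\|B_j\|_1 - \|B_j\|_\infty)$. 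Invoking the lower bound $I(B_j) \ge \|B_j\|_1 - \|B_j\|_\infty$ from Corollary~\ref{cor:approx} then gives $I(B_j) \ge \|M_j\|_1/2$, and summing over $j$ finally produces $\OPT(V) \ge \|M\|_1/2$, which closes the argument.
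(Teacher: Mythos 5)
Your construction does not produce a feasible partition: starting from the optimal $L$-partition you split every bucket $B_j$ into $B_j^+$ and $M_j$ and then add the \emph{extra} bucket $M=\bigcup_j M_j$, so the result has $L+1$ groups (whenever at least one optimal bucket is mixed and the $B_j^+$ are nonempty). The lemma is needed for partitions into $L$ buckets -- it feeds into Lemma~\ref{lem:part2}, Theorem~\ref{teo:main} and the DP, all of which manipulate $L$-bucket partitions -- so the bucket count is not a cosmetic detail. The parts of your argument that are correct (superadditivity giving $\sum_j I(B_j^+)\le \OPT(V)$, and the mass bound $\|M_j\|_1\le 2(\|B_j\|_1-\|B_j\|_\infty)\le 2I(B_j)$, hence $\|M\|_1\le 2\,\OPT(V)$) are in fact the same inequalities the paper uses; the difference is that the paper never creates a new bucket: it keeps the mixed bucket $B_1$ of \emph{smallest ratio} intact and moves the impure leftovers $S_2,\dots,S_j$ of the other mixed buckets into it, which preserves the count $L$.

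The gap is not repairable by a one-line patch, because the crude bound $I_{Ent}(M)\le \|M\|_1\log k$ only works because your $M$ excludes all the dominant mass of the original buckets. As soon as you restore feasibility by folding $M$ into some $B_{j_0}^+$ (equivalently, keeping one mixed bucket whole, which is the paper's move), the resulting mixed bucket carries the full mass $s_1=\|B_{j_0}\|_1$, which can be arbitrarily larger than $\OPT(V)$; then Corollary~\ref{cor:approx} only gives $I\le 2c\log(2ks_1/c)$ with $c=O(\OPT(V))$, and the factor $\log(s_1/c)$ is not $O(\log k)$ in general. (Concretely, on $V=\{(s,0),(2,1),(0,1)\}$ with $L=2$, folding $M=\{(0,1)\}$ into the bucket $\{(s,0)\}$ costs about $\log s$ while $\OPT\approx 4$.) The paper kills this extra $\log(s_1/c_1)=\log(ratio(B_1))$ term by choosing $B_1$ as the mixed bucket of smallest ratio and using the lower bounds $I(B_i)\ge \tfrac{s_i}{2}\max\{1,\log ratio(B_i)\}$ together with $ratio(B_1)\le ratio(B_i)$, so the logarithm is absorbed into the optimal impurity. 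Your proposal contains no analogue of this ratio-comparison step, and without it (or some substitute controlling where the leftover vectors are placed) the $O(\log k)$ bound on the mixed bucket of a genuine $L$-partition does not follow.
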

\begin{proof}
Let ${\cal P}^*$ be an optimal partition.
If ${\cal P}^*$ has at most one mixed bucket we are done.
Otherwise, let
$B_1,\ldots,B_j$, with $j \ge 2$, be the mixed buckets in 
${\cal P}^*$. We assume w.l.o.g.\ that 
$B_1$ is the bucket with the smallest ratio among the mixed buckets.

For $i=2,\ldots,j$, let $S_i =\{ \vvec | \vvec \in B_i \mbox{ and } dom(\vvec) \ne dom(B_i) \}$.
Let ${\cal P}$ be a new partition obtained from ${\cal P}^*$  by replacing $B_1$ with $B'_1=B_1 \cup S_2 \ldots \cup S_j$ and
$B_i$ with $B'_i=B_i \setminus S_i$, for $i \ge 2$.
It is clear that $B'_1$ is the unique mixed bucket in  ${\cal P}$.

It follows from subadditivity that  $I(B'_i ) \le I(B_i)$ for $i>1$,
which establishes (ii). Thus, in order to complete the proof  
it is enough to establish an upper bound on $I(B'_1)$.

For $i=2,\ldots,j$, let  $\uvec^{(i)} = \sum_{\vvec \in B_i} \vvec$ and $\wvec^{(i)} = \sum_{\vvec \in S_i} \vvec.$
Moreover, let $s_i = \|\wvec^{(i)}\|_1.$ 
Thus,

%$\frac{s_i}{2} = \|\wvec\|_1/2 \geq w_{dom(B_i)}$, which also implies that

$$\|\uvec^{(i)}\|_1 - \|\uvec^{(i)}\|_{\infty} = \|\uvec^{(i)}\|_1 - \sum_{\vvec \in B_i} v_{dom(B_i)} 
 \geq |\wvec^{(i)}\|_1 - \sum_{\vvec \in S_i} v_{dom(B_i)}
 \geq \|\wvec^{(i)}\|_1/2 = \frac{s_i}{2},$$
where the leftmost inequality holds because for each $\vvec \in S_i$ we have $dom(\vvec) \neq dom(B_i)$, so that $\|\vvec\|_1/2 \geq v_{dom(B_i)}.$

%$$\|\uvec^{(i)}\|_1 - \|\uvec^{(i)}\|_{\infty}  \geq  \|\wvec^{(i)}\|_1 - \|\wvec^{(i)}\|_{\infty} \geq %\|\wvec^{(i)}\|_1/2 = \frac{s_1}{2}.$$

Therefore, it follows from Corollary \ref{cor:approx} that
\begin{equation} I(B_i) \ge (\|\uvec^{(i)}\|_1 - \|\uvec^{(i)}\|_{\infty}) \max \{ 1, \log (ratio(B_i)) \} \ge \frac{s_i}{2} \max \{ 1, \log (ratio(B_i)) \}, 
\label{eq:23may-6}
\end{equation}
for each $i>1$

We assume w.l.o.g. that $B_1$ is $1$-dominant. Let $\uvec^{(1)} = \sum_{\vvec \in B_1} \vvec$ and 
let $s_1 = \|\uvec^{(1)}\|_1$ and $c_1 =   \|\uvec^{(1)}\|_1 -  \|\uvec^{(1)}\|_{\infty}.$%be the number of samples of $B_1$ that is  not from class 1.
Again, from Corollary \ref{cor:approx}, we have  
\begin{equation}
I(B_1) \ge c_1  \max \{ 1, \log (ratio(B_1)) \} 
\label{eq:23may-7}
\end{equation}

For $i=2,\ldots,j$ let $c_i = \|\wvec^{(i)}\|_1 - w^{(i)}_1.$ %be the number of samples of $S_i$ that is not in class 1.
Let $\uvec = \sum_{\vvec \in B'_1} \vvec.$ Then, $\uvec = \uvec^{(1)} + \sum_{i=2}^j \wvec^{(i)},$ hence 
$\|\uvec \|_1 = \sum_{i=1}^j s_i$ and $ \|\uvec \|_1 - u_1 = \sum_{i=1}^j c_i.$

By Corollary  \ref{Entropy-basicbounds} (with $i=1$) we have that
\begin{eqnarray}
I(B'_1) &\leq& 2\left( \sum_{i=1}^j c_i \right ) \cdot  \log  \left ( 2 k \frac{\sum_{i=1}^j s_i }{ \sum_{i=1}^j c_i } \right ) \notag \\
&\le& 2\left ( c_1 + \sum_{i=2}^j s_i \right ) \log  \left ( 2k \frac{s_1 + \sum_{i=2}^j s_i }{c_1 + \sum_{i=2}^j s_i} \right ) \\
&\le& 2\left ( c_1 + \sum_{i=2}^j s_i \right ) \log  \left ( 2k \frac{s_1}{c_1} \right )
=  2 \left ( c_1 + \sum_{i=2}^j s_i \right ) \log  ( 2k \cdot ratio(B_1)),
\end{eqnarray}
where the second inequality follows from Proposition \ref{prop:increasing-new}.

%\centerline{****}

Since $ratio(B_1) \le ratio(B_i)$ for $i>1$ we can conclude, by using
the lower bounds (\ref{eq:23may-6}) and (\ref{eq:23may-7}) that
 $I(B'_1) = O(\log k) \sum_{i=1}^j I(B_i)$.
\end{proof}

\bigskip

%\noindent
%{\bf Guaranteeing that vectors are ordered by ratio in the pure buckets}
%\label{sec:part2}

Using the mapping $\chi^{2C}$ and Proposition \ref{prop:good-projection}, we can derive the following result.

\begin{lemma}
There exists a partition   with the following properties:
(i) it has at most one mixed bucket; (ii) if $B_i$ is a $i$-pure bucket and $\vvec$ is a $i$-dominant
vector that belongs to an $i$-pure bucket different from $B_i$ then either $ratio(\vvec) \ge \max_{\vvec' \in B_i} \{ ratio(\vvec') \}$ or
$ratio(\vvec) \le \min_{\vvec' \in B_i} \{ ratio(\vvec') \}$ and (iii) its  impurity  is 
at a $O( \log k)$ factor from the minimum possible impurity.
\label{lem:part2}
\end{lemma}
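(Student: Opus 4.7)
The plan is to start from the partition ${\cal Q}$ guaranteed by Lemma~\ref{lem:part1}---at most one mixed bucket, mixed-bucket impurity within an $O(\log k)$ factor of $\OPT(V)$, total pure-bucket impurity at most $\OPT(V)$---and to refine ${\cal Q}$ coordinate by coordinate, leaving its mixed bucket untouched and re-partitioning only the $i$-pure buckets via the $2$-dimensional tool ${\cal A}^{\sc 2C}$. The key enabling observation is that the sum of $i$-dominant vectors is again $i$-dominant, so any sub-partition of an $i$-pure set produces $i$-pure buckets and introduces no new mixed buckets.

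Formally, for each $i \in [k]$ let $U_i$ be the union of the $i$-pure buckets of ${\cal Q}$ and let $b_i$ be their number. I would run ${\cal A}^{\sc 2C}$ on input $(U_i, b_i)$ and replace, in ${\cal Q}$, the $i$-pure buckets by the $b_i$ buckets it outputs; call the resulting partition ${\cal P}'$. Property (i) is then immediate. For (iii), the restriction of ${\cal Q}$ to $U_i$ is a feasible partition of $U_i$ into $b_i$ buckets, so Proposition~\ref{prop:good-projection} ensures that the restriction of ${\cal P}'$ to $U_i$ has impurity at most an $O(\log k)$ factor above it. Summing over $i$ and invoking Lemma~\ref{lem:part1}(ii) yields total pure-bucket impurity $O(\log k)\OPT(V)$; combined with the unchanged mixed bucket, which contributes $O(\log k)\OPT(V)$ by Lemma~\ref{lem:part1}(i), this gives $I({\cal P}') = O(\log k)\OPT(V)$.

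Property (ii) is the subtle part and is where I expect the main obstacle to lie. A short computation shows that for any $i$-dominant vector $\vvec$, $ratio(\chi^{2C}(\vvec)) = \max\{2, ratio(\vvec)\}$, a non-decreasing function of the original ratio; hence the interval structure of the buckets produced by Theorem~\ref{teo:opt-2classes} in the $2$-dimensional image translates into an interval structure in the original ratio, automatically for vectors of original ratio strictly above $2$. The delicate case is vectors of original ratio at most $2$: they all collapse to $2$D ratio exactly $2$ and may be scattered across several $i$-pure buckets of ${\cal P}'$. I would handle this by a post-processing step that, inside the union of the $i$-pure buckets for each coordinate $i$, redistributes such low-ratio vectors in increasing order of their original ratio while preserving the per-bucket cardinalities produced by ${\cal A}^{\sc 2C}$. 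Since each such vector $\vvec$ satisfies $\|\vvec\|_1 - \|\vvec\|_{\infty} \ge \|\vvec\|_1/2$, its contribution to the non-dominant mass of any bucket it joins is of the order of $\|\vvec\|_1$, and the bound of Corollary~\ref{cor:approx} together with the super-additivity of $I_{Ent}$ (Lemma~\ref{lem:imputity-supperadditivy}) allow the reshuffle to be absorbed into a constant multiplicative overhead on top of the $O(\log k)$ estimate from the previous paragraph.
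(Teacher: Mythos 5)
Your construction and the treatment of properties (i) and (iii) are exactly the paper's: take the partition of Lemma~\ref{lem:part1}, leave its mixed bucket untouched, and for each $i$ re-partition the union $V_i$ of the $i$-pure buckets into $t(i)$ buckets with ${\cal A}^{2C}$, bounding the loss via Proposition~\ref{prop:good-projection} and Lemma~\ref{lem:part1}(ii). Your observation that $ratio(\chi^{2C}(\vvec))=\max\{2,ratio(\vvec)\}$, so that only vectors of original ratio at most $2$ (all mapped to multiples of $(1,1)$) can violate the original-ratio interval structure, is also correct and is precisely the relevant point.

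The gap is in how you resolve these ties. Your post-processing step reassigns the low-ratio vectors among buckets \emph{preserving cardinalities}, but impurity is governed by masses, not cardinalities, and the claimed ``constant multiplicative overhead'' is not established by the tools you invoke. Concretely, if the reshuffle moves a balanced vector of $\ell_1$-mass $s$ into a bucket whose content is dominated by highly skewed vectors of total mass $M\gg s$, that bucket's impurity (and $2$-impurity) grows by roughly $s\log (M/s)$, whereas the moved vector's own impurity is only $\Theta(s)$; Corollary~\ref{cor:approx} and superadditivity let you charge $\Theta(s)$, not $s\log(M/s)$, and $M/s$ is not bounded by any function of $k$ or of $\OPT(V)$. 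Since $2$-optimality of the pre-reshuffle partition does not prevent such a configuration from arising after a cardinality-preserving, ratio-ordered swap (e.g.\ when two buckets each contain balanced vectors together with equal-ratio skewed vectors of very different masses), the asserted absorption into the $O(\log k)$ bound does not follow. The fix is much simpler and is what the paper does: no post-processing at all. Because of the monotonicity you yourself computed, it suffices to feed the $i$-dominant vectors to the algorithm of Theorem~\ref{teo:opt-2classes} sorted by their \emph{original} ratios; the algorithm returns an optimal $2$-dimensional partition whose buckets are contiguous in the presented order, so ties among vectors with $2$D ratio equal to $2$ are automatically broken consistently with the original ratios, property (ii) holds, and the $2$-impurity (hence the bound for (iii)) is unchanged.
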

\begin{proof}

Let ${\cal P}$ be a partition that satisfies  Lemma \ref{lem:part1}.
Let $V_i$ be the set of $i$-dominant vectors that are not in the mixed bucket. If $V_i \neq \emptyset$
 let $B^1_i,\ldots,B^{t(i)}_i$ be the $i$-pure buckets where they lie.
We replace these $t(i)$ buckets by the $t(i)$ buckets obtained by running algorithm ${\cal A}^{2C}$ for input 
$(V_i,t(i))$. This replacement is applied for every $i$.
It follows from Proposition \ref{prop:good-projection}
 that the total impurity of the pure buckets in the new partition is at most at a $O(\log k)$ factor
from the total impurity of the pure buckets in ${\cal P}$. 

The property (ii) is assured by the structure of  the partition constructed
by Algorithm ${\cal A}^{2C}$. In order to guarantee that the ties are
broken correctly we present the $i$-dominant vector for algorithm ${\cal A}^{2C}$ in the order of their
ratios.
\end{proof}

\remove{Let ${\cal A}'$ an algorithm similar to ${\cal A}$ with the difference
that it employs a transformation $\pi'$ rather than $\pi$ defined as follows:

$$\pi'(\vvec)=\begin{cases} \pi(\vvec) & \mbox{ if } ratio(\vvec) \ge 0.5 \\
\pi(\vvec)+(\epsilon,-\epsilon) & \mbox{ otherwise }
\end{cases},$$
where $\epsilon$ is chosen so as to guarantee that $ratio(\pi'(\vvec))< ratio(\pi'(\vvec'))$ 
if and only if $ratio(\vvec)< ratio(\vvec')$ 
}

Now, we conclude the proof  of Theorem \ref{teo:main}.
Our starting point  is the partition ${\cal P}$ that satisfies items (i)-(iii) of
Lemma \ref{lem:part2}. We show how to obtain a partition ${\cal P'}$ from ${\cal P}$ that satisfies 
the properties of Theorem \ref{teo:main}.

Let $B_{mix}$ be the mixed bucket in ${\cal P}$.  We assume w.l.o.g that $dom(B_{mix})=1$.
Moreover, let $B_i$ be the $i$-pure bucket that contains the  $i$-dominant vectors with the smallest ratios.
In what follows we assume that the vectors in $B_i$ are sorted by increasing order of their ratios so that
by the $j$th first vector in $B_i$ we mean the one with the $j$th smallest ratio. 

Let
$s_{i,p} = \| B_i \|_1$ ($p$ indicates a pure bucket, $i$ indicates the dominance, and $s$ indicates that we 
are considering the total sum of the components of the vectors).  Let $V_{i, mix}$ be the set of $i$-dominant 
vectors in $B_{mix}$, i.e., $V_{i, mix} = \{ \vvec \in B_{mix} \mid dom(\vvec) = i\}.$ 
 
Let $s_{i,mix} = \| V_{i,mix} \|_1$, i.e., $s_{i,mix}$ denotes the total sum of the components of the 
$i$-dominant vectors from bucket $B_{mix}$.

In order to explain the construction of ${\cal P'}$ we need to define $2k$ set of vectors $X_1,Y_1,\ldots,X_k,Y_k$ that will
be moved  among the buckets of ${\cal P}$ to obtain ${\cal P}'$. Those are defined according to the following cases:

\medskip

{\bf case 1.} $s_{i,p} < s_{i,mix}$.

\medskip

{\bf subcase 1.1} $i>1$ ($i$ is not the dominant component of $B_{mix}$).

Let $r_i$ be the largest ratio among the ratios of  the vectors from $B_i$.
In addition,  let  $Y_i=B_i$ and let  $X_i$ be the set of $i$-dominant vectors  from $B_{mix}$ whose  ratios  are larger than $r_i$.

% We swap $X$ and $Y$.

\medskip

{\bf subcase 1.2} $i=1$.
Let $m$ be such that $\|V_{1,mix}(m-1) \|_1  \leq s_{1,p}$ and $\|V_{1,mix}(m) \|_1  > s_{1,p}.$ 
%the number of samples in the $m$ first $1$-dominant vectors of 
%$B_{mix}$ is larger than $S_{i,pr}$.

Moreover, let $r_1$ be the ratio of the $m$-th first
$1$-dominant vector of 
$V_{1,mix}$. 
 Let $X_1  =  V_{1,mix} \setminus V_{1,mix}(m-1)$ (the set containing all the
$1$-dominant vector of  $B_{mix}$ but the first $m-1$ ones) 
 and let $Y_1 = \{\vvec \in B_1 \mid ratio(\vvec) < r_1\}$ (the set  containing  every  vector
in $B_{1}$ with ratio smaller than $r_1$).

\medskip

{\bf case 2.}  $s_{i,p} \geq s_{i,mix}.$ %The number of samples in $B_i$ is not smaller  than $S_{i,mix}$.
\medskip

% the number of samples in the $m-1$ first vectors of 
%$B_i$ is at most $S_{i,mix}$ and the number of samples in the $m$ first vectors of $B_i$ is
%larger than $S_{i,mix}$. 
%If $\| B_i \|_1 = s_{i,mix}$ %are equal 
%we set $m=|B_i|+1$.

In this case, let $m$ be such that $\|B_i(m-1)\|_1 < s_{i,mix}$ and $\|B_i(m)\|_1 \ge s_{i,mix}.$
Moreover, let $r_i$ be the ratio of the $m$-th vector of $B_i$.
We define  $Y_i = B_i(m-1)$ (the set containing the   $m-1$ first vectors of $B_i$) 
 and  $X_i = \{ \vvec \in V_{i, mix} \mid ratio(\vvec) > r_i\}$ (the set  containing  every $i$-dominant vector
in $B_{mix}$ with ratio larger than $r_i$). 

\medskip

Let $X = \bigcup_{i=1}^k X_i$ and let $Y = \bigcup_{i=1}^k Y_i$.
The partition ${\cal P'}$ is obtained from  ${\cal P}$ by replacing the bucket $B_{mix}$ with the bucket
  $B'_{mix}= (B_{mix} \cup Y)  \setminus X$ and the bucket $B_i$, for every $i$,
with $B'_i=(B_i \cup X_i) \setminus Y_i$.

\begin{lemma} 
The partition ${\cal P'}$ satisfies item (i) and (ii) from Theorem \ref{teo:main}.
\end{lemma}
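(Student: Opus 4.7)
The plan is to verify items (i) and (ii) by tracking, for each index $i$, which vectors end up in the new pure bucket $B_i'$ and which $i$-dominant vectors remain in $B_{mix}'$, using only the local definitions of $X_i$ and $Y_i$ given in the three cases of the construction.

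For item (i), the key observation is that in every case the set $X_i$ is by construction a subset of $V_{i,mix}$, and hence consists only of $i$-dominant vectors. Since $B_i$ is $i$-pure in the starting partition ${\cal P}$, the union $B_i \cup X_i$ is $i$-pure, and removing $Y_i \subseteq B_i$ preserves that. Thus each $B_i' = (B_i \cup X_i) \setminus Y_i$ is $i$-pure, so $B_{mix}'$ is the only bucket of ${\cal P}'$ that can be mixed, and in fact no new mixed bucket is created.

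For item (ii), I would first observe that the $i$-dominant vectors in $B_{mix}'$ are exactly $V_{i,mix} \setminus X_i$, because for each $j \neq i$ the set $Y_j \subseteq B_j$ is $j$-pure and therefore contributes no $i$-dominant vector to $B_{mix}'$. Then, for each of the three subcases, I would exhibit the threshold $r_i$ of the construction and verify that $B_i'$ consists of vectors of ratio at least $r_i$ while $V_{i,mix}\setminus X_i$ consists of vectors of ratio at most $r_i$. In subcase 1.1 we have $B_i'=X_i$, whose vectors have ratio strictly larger than $r_i$ (the maximum ratio in $B_i$), and $V_{i,mix}\setminus X_i$ by definition is made of vectors of ratio at most $r_i$. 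In subcase 1.2 the bucket $B_1'$ contains the vectors of $B_1$ with ratio at least $r_1$ together with $X_1$, whose vectors have ratio at least $r_1$; meanwhile $V_{1,mix}\setminus X_1=V_{1,mix}(m-1)$ collects the $m-1$ vectors of $V_{1,mix}$ with smallest ratios, all of ratio less than $r_1$. In case 2 we have $B_i'=(B_i\setminus B_i(m-1))\cup X_i$, where both parts consist of vectors of ratio at least $r_i$, whereas $V_{i,mix}\setminus X_i$ has ratio at most $r_i$. In every case we obtain $ratio(\vvec)\le r_i \le ratio(\vvec')$, which is exactly (ii).

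The argument is essentially careful bookkeeping; the only mild subtlety is ruling out that transfers $Y_j$ with $j\neq i$ might smuggle $i$-dominant vectors into $B_{mix}'$, which is immediate from $B_j$ being $j$-pure in ${\cal P}$. The genuinely hard part of Theorem \ref{teo:main} lies in property (iii): one must use the specific thresholds chosen in the three cases together with the bounds of Corollary \ref{cor:approx} to control the impurity of the enlarged mixed bucket $B_{mix}'$ and to bound the extra impurity paid by each $B_i'$. The present lemma only records that the partition produced by the construction is set up so that (i) and (ii) hold.
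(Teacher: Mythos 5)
Your treatment of item (i) is fine and is essentially the paper's reasoning: each $X_i\subseteq V_{i,mix}$ consists of $i$-dominant vectors and $B_i$ is $i$-pure, so every $B'_i=(B_i\cup X_i)\setminus Y_i$ remains $i$-pure and only $B'_{mix}$ can be mixed. The problem is in the bookkeeping at the heart of your proof of item (ii): the $i$-dominant vectors of $B'_{mix}$ are \emph{not} exactly $V_{i,mix}\setminus X_i$, but $(V_{i,mix}\setminus X_i)\cup Y_i$, because $B'_{mix}=(B_{mix}\cup Y)\setminus X$ and the set $Y_i\subseteq B_i$, which consists entirely of $i$-dominant vectors, is moved \emph{into} the mixed bucket. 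The ``mild subtlety'' you single out (that the sets $Y_j$ with $j\neq i$ cannot smuggle $i$-dominant vectors into $B'_{mix}$) is indeed harmless, but the set you actually must control, $Y_i$ itself, is absent from your case analysis. The gap is repairable, since in every case $Y_i$ only contains vectors of ratio at most $r_i$: in subcase 1.1, $Y_i=B_i$ and $r_i$ is the largest ratio in $B_i$; in subcase 1.2, $Y_1$ is by definition the set of vectors of $B_1$ of ratio smaller than $r_1$; in case 2, $Y_i=B_i(m-1)$ while $r_i$ is the ratio of the $m$-th vector of $B_i$. But as written, your identification of the $i$-dominant part of $B'_{mix}$ is false, so the verification of (ii) does not go through without this additional check. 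The paper's (one-line) proof is precisely the statement you should establish: every $i$-dominant vector placed in $B'_{mix}$ has ratio at most $r_i$, and every $i$-dominant vector of $V\setminus B'_{mix}$ has ratio at least $r_i$.

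A second, smaller gap: item (ii) of Theorem \ref{teo:main} compares an $i$-dominant vector of the mixed bucket with an $i$-dominant vector of \emph{any} $i$-pure bucket, whereas you only compare against $B'_i$. The partition ${\cal P}'$ also contains the $i$-pure buckets of ${\cal P}$ other than $B_i$, which the construction leaves untouched; to cover them you need to argue, as the paper's proof implicitly does, that their vectors also have ratio at least $r_i$, using the ordering property (ii) of Lemma \ref{lem:part2} together with the fact that $B_i$ was chosen as the $i$-pure bucket containing the $i$-dominant vectors of smallest ratio. Your proposal never mentions these buckets, so even after fixing the $Y_i$ issue the argument as stated only yields (ii) relative to $B'_i$, not relative to all $i$-pure buckets of ${\cal P}'$.
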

\begin{proof}
By construction every $i$-dominant vector in $B'_{mix}$ has ratio at most $r_i$ and
every $i$-dominant vector in $V \setminus B'_{mix}$ has ratio at least $r_i$.
\end{proof}

\begin{lemma} 
\label{lem:last-step-01Jul}
The impurity of the partition ${\cal P'}$ is  at most    $O(\log k)$ times larger than that of ${\cal P}$.
\end{lemma}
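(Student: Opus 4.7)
The goal is to bound $I({\cal P}')$ by $O(\log k)$ times $I({\cal P})$. Since ${\cal P}$ and ${\cal P}'$ agree outside of $B_{mix}$ and of the buckets $B_1,\dots,B_k$, it suffices to prove that $I(B'_{mix}) + \sum_{i=1}^k I(B'_i)$ is at most $O(\log k)$ times $I(B_{mix}) + \sum_{i=1}^k I(B_i)$. I would do this by separately bounding the contribution of the pure buckets and of the mixed bucket, leveraging Corollary \ref{cor:approx} for both the upper and lower bounds on the impurities.

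For the pure buckets, each $B'_i$ is still $i$-pure (only $i$-dominant vectors are exchanged), and by construction every vector in $B'_i$ has ratio at least $r_i$, whence $ratio(B'_i) \ge r_i$ by averaging over coordinates. The upper bound in Corollary \ref{cor:approx} gives $I(B'_i) \le 2(\|B'_i\|_1/r_i)\log(2k\cdot ratio(B'_i))$. Splitting $\|B'_i\|_1 \le \|B_i\|_1 + \|X_i\|_1$, I would charge the term $\|B_i\|_1/r_i \cdot \log(k \cdot r_i)$ against $I(B_i)$ using the lower bound $I(B_i)\ge (\|B_i\|_1 - \|B_i\|_\infty)\max\{1,\log ratio(B_i)\}$, and analogously bound the $\|X_i\|_1$ term against the contribution of the vectors in $V_{i,mix}$ to $I(B_{mix})$, exploiting the fact that every vector in $X_i$ is an $i$-dominant vector from $B_{mix}$ of ratio at least $r_i$.

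For the mixed bucket, I would first verify that $dom(B'_{mix})=1$. This follows because the swaps involving coordinate $1$ (subcase~1.2, and case~2 for $i=1$) exchange $1$-dominant mass in a balanced way controlled by $\min\{s_{1,p},s_{1,mix}\}$, whereas the swaps for $i\ge 2$ move $i$-dominant vectors and cannot make any $i\ne 1$ coordinate overtake the first. Setting $c' = \|B'_{mix}\|_1 - (B'_{mix})_1$, Corollary \ref{cor:approx} gives
\begin{equation*}
I(B'_{mix}) \le 2c'\log\!\Bigl(\frac{2k\|B'_{mix}\|_1}{c'}\Bigr).
\end{equation*}
I would then decompose $c'$ as the original non-dominant mass of $B_{mix}$, minus the non-dominant contributions of the vectors in $X$, plus $\sum_{i \ge 2}\|Y_i\|_1$, since every vector in $Y_i$ for $i \ge 2$ is $i$-dominant and hence contributes entirely to the non-dominant mass of $B'_{mix}$. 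For each $i\ge 2$, $\|Y_i\|_1$ equals either $s_{i,p}$ (case~1.1) or is at most $s_{i,mix}$ (case~2), and the product $\|Y_i\|_1 \cdot \log(\cdots)$ would be bounded by an $O(\log k)$ multiple of either $I(B_i)$ or the contribution of $V_{i,mix}$ to $I(B_{mix})$ via the lower bound in Corollary \ref{cor:approx}.

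The main obstacle is absorbing the outer logarithm $\log(2k\|B'_{mix}\|_1/c')$ in the bound for $I(B'_{mix})$ into the lower bounds on the $I(B_i)$'s, since a priori $\|B'_{mix}\|_1/c'$ can be as large as $\max_i ratio(B_i)$, which is not bounded by $k$. The case split in the definition of $(X_i,Y_i)$ is precisely what makes the absorption possible: whenever a term $\|Y_i\|_1$ is a significant contributor to $c'$, the corresponding $ratio(B_i)$ controls the log factor, and the lower bound on $I(B_i)$ already carries a matching $\log ratio(B_i)$ factor via Corollary \ref{cor:approx}. Hence $\log(2k\|B'_{mix}\|_1/c')$ decomposes as $O(\log k) + O(\log \max_i ratio(B_i))$, and summing the contributions yields the claimed $O(\log k)$ overall multiplicative loss.
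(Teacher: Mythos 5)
Your high-level plan (bound each new bucket with Corollary~\ref{cor:approx} and charge against the lower bounds for the old buckets, exploiting the ratio thresholds $r_i$) is the same as the paper's, but the concrete charging you propose for the pure buckets has a genuine gap. You replace the non-dominant mass of $B'_i$ by $\|B'_i\|_1/r_i$ and then want to charge $(\|B_i\|_1/r_i)\log(k r_i)$ to $I(B_i)$. The available lower bound is $I(B_i)\ge c_{i,p}\max\{1,\log ratio(B_i)\}$ with $c_{i,p}=\|B_i\|_1-\|B_i\|_\infty=\sum_{\vvec\in B_i}\|\vvec\|_1/ratio(\vvec)$, and since $r_i$ is essentially the \emph{smallest} ratio retained in $B_i$, the quantity $\|B_i\|_1/r_i$ can exceed $c_{i,p}$ by an arbitrary factor: take $B_i$ consisting of one vector of mass $10$ and ratio $2$ and one vector of mass $M\gg 1$ and ratio $M$, with $s_{i,mix}$ small so that $m=1$ and $r_i=2$; then $\|B_i\|_1/r_i\approx M/2$ while $c_{i,p}\approx 6$ and $I(B_i)=O(\log(kM))$. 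So the product $(\|B_i\|_1/r_i)\log(kr_i)$ is not $O(\log k)\,I(B_i)$, and in fact your intermediate upper bound on $I(B'_i)$ is already too weak, because bounding the non-dominant mass by ``total mass over the minimum ratio'' throws away the structure. The paper's argument avoids this by bounding the \emph{incoming} non-dominant mass directly: it pairs the mass inequality $s^H_{i,mix}\le s_{i,mix}\le \|Y_i\|_1+s$ with the ratio inequality $s^H_{i,mix}/d^H_{i,mix}>r_i\ge(\|Y_i\|_1+s)/(c^L_{i,p}+c)$ to get $d^H_{i,mix}\le c_{i,p}$, so the new non-dominant mass is at most $2c_{i,p}$ and Proposition~\ref{prop:increasing-new} gives $I(B'_i)\le 4c_{i,p}\log(k s_{i,p}/c_{i,p})=O(\log k)I(B_i)$; moreover, in the case $s_{i,p}<s_{i,mix}$, $i>1$, it does not use Corollary~\ref{cor:approx} at all but simply notes $B'_i=X_i\subseteq V_{i,mix}$ and invokes superadditivity, $I(X_i)\le I(V_{i,mix})$, summing to at most $I(B_{mix})$.

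Two further points. For the mixed bucket, the absorption of the outer logarithm is asserted rather than proved: you claim $\log\bigl(2k\|B'_{mix}\|_1/c'\bigr)$ splits as $O(\log k)+O(\log\max_i ratio(B_i))$ and that the lower bounds ``carry a matching factor,'' but the mass being multiplied by that logarithm consists of pieces of $c'$ coming from the mixed bucket (e.g.\ $c_{1,mix}$ and $\sum_{i\ge2}s_{i,mix}$), and there is nothing in $I(B_{mix})$ or in the particular $I(B_{i^*})$ attaining the max that can absorb ``mixed-bucket mass times $\log ratio(B_{i^*})$.'' The paper instead applies Proposition~\ref{prop:increasing-new} to re-pair each piece of $c'$ with its own logarithm, splits on $c_{mix}\gtrless s_{mix}/e$, and treats the $t^L_{1,p}$ contribution separately against $I(B_1)$ using $s^L_{1,mix}\le s_{1,p}$ and $t^L_{1,p}\le c_{1,p}$; this pairing is what makes the absorption legitimate. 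Finally, verifying $dom(B'_{mix})=1$ is unnecessary (and your justification for it is shaky): the upper bound of Corollary~\ref{cor:approx} holds with the mass outside \emph{any} fixed coordinate $i$, so one simply applies it with $i=1$ regardless of which component dominates $B'_{mix}$, exactly as the paper does.
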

\begin{proof}
See the Appendix.
\end{proof}

\subsection{The approximation algorithm}
We first present a pseudo-polynomial time algorithm that provides an $O(\log^2 k)$ approximation and then we
show how to convert it into a polynomial time algorithm with the same approximation.
The key idea is to look among the partitions that satisfy the properties of Theorem \ref{teo:main} for one that (roughly speaking)  
minimizes the impurity of its mixed bucket plus the sum of the 2-impurity of its pure buckets. 

\medskip

\noindent{ \bf A special case: no mixed bucket.}
Theorem \ref{teo:main} establishes the existence of a partition ${\cal P}^*$ 
whose impurity is an $O(\log^2 k)$ approximation of the optimum and 
has at most one  mixed bucket. For a better understanding of the strategy at the basis of our algorithm, let us first discuss
how one can efficiently construct a good partition for the case where the partition ${\cal P}^*$, achieving 
the $O(\log^2 k)$ approximation,  has no mixed buckets. 

In this case, 
we can employ algorithm ${\cal A}^{2C}$ to obtain a partition with minimum 2-impurity among those that only have pure buckets.
By Proposition \ref{prop:may23-10} it follows that the impurity of a partition made only of pure buckets, is upper bounded by its 
$2$-impurity plus  $O(\log k)$ times a lower bound on the optimal impurity. 
Proceeding like in the proof of Proposition \ref{prop:good-projection} 
then we can show that the impurity of the partition of minimum 2-impurity is upper bounded by the same upper bound 
on the impurity of ${\cal P}^*$. 
 
The partition with minimum $2$-impurity made only of pure buckets can be obtained by means of dynamic programming. 

To see this,  for each $j = 1, \dots, k$ 
let 
\begin{equation} \label{eq:defVi-Si}
V_j=\{ \vvec | dom(\vvec)=j\} \quad \mbox{ and } \quad S_j=\{ \vvec | dom(\vvec)=j' \mbox{ for some } j' \le j\}
\end{equation}
Moreover, for each $b = 1, \dots, L$ 
let $Q^*(S_j,b)$ be a partition of the vectors of $S_j$ into $b$ pure buckets such that
its 2-impurity, denoted by $\OPT_2(j,b)$, is minimum. 
It is not hard to see that the following recurrence holds:

\begin{equation}
\OPT_2(j,b) =\begin{cases} I_2({\cal A}^{2C}(V_j,b)) & \mbox{ if } j=1 \\
  \min_{ 1 \le  b'  < b-j } \{ I_2({\cal A}^{2C}(V_j,b')) +  \OPT_2(j-1,b-b')\} & \mbox{ if } j>1
	\end{cases}
\label{eq:19may-1}
\end{equation}
where ${\cal A}^{2C}(V_j,b)$ is the  partition of $V_j$ into $b$
buckets obtained by the Algorithm ${\cal A}^{2C}$ discussed in the previous section. 

Thus, if there exists a partition ${\cal P}^*,$ without mixed buckets, for which $I({\cal P^*}) = O(\log^2 k) 
\OPT(V)$, then the impurity of the  partition ${\cal Q}^*(S_k,L)$ constructed by a DP algorithm based on the equation (\ref{eq:19may-1}) satisfies

\begin{eqnarray*} 
I({\cal Q}^*(S_k,L)) &\leq&  I_2({\cal Q}^*(S_k,L)) + O(\log k) \sum_{\vvec \in V} I(\vvec) \\ 
&\leq& I_2({\cal P}^*) + O(\log k) \OPT(V) \leq 2 I({\cal P}^*) + O(\log k) \OPT(V) \leq O(\log^2 k) OPT(V),
\end{eqnarray*}
where the first inequality  in the first line follows from  Proposition \ref{prop:may23-10}, 
 the first inequality in the second line is due to the minimality of the $2$-impurity of ${\cal Q}^*(S_k,L)$ and 
the superadditivity of $I$ impying that  $\sum_{\vvec \in V} I(\vvec)$ is a lower bound on $\OPT(V).$
% at $O(\log ^3)$ factor from the optimal one. 

\medskip

\noindent
{\bf A pseudopolytime algorithm for the general case.} Now, we turn to the case where there exists at most  one mixed bucket in the
 partition given by Theorem \ref{teo:main}. 
Given an instance $(V, I, L)$ of \textsc{PMWIP}, let $C = \sum_{\vvec \in V} \|\vvec\|_1$ and for each $i=1,\dots, k,$ let $V_i$ and $S_i$  be as in  
(\ref{eq:defVi-Si}). For fixed   
$w, i \in [k], \ell \in [|V|], c \in [C], b \in [L]$ let us denote by 
${\cal Q}^*(w,\ell,S_i,b,c)$  a partition
of $S_i$ into $b$ buckets that satisfies the following properties:

%\footnote{Note that, strictly speaking, ${\cal P}(w, \ell, S_i, b, c)$ denotes the family of partitions 
%satisfying properties (a)-(e). We will abuse notation and write ${\cal P} = {\cal P}(w, \ell, S_i, b, c)$ to %indicate that ${\cal P}$ is a 
%particular partition in the family ${\cal P}(w, \ell, S_i, b, c).$ We will also talk about partition ${\cal P}(w, %\ell, S_i, b, c)$ to mean 
%some particular representative of the family.}

\begin{itemize}
\item[a] it has one bucket, denoted by $B^{{\cal Q}^*}$,  that contains exactly $\ell$ vectors that are $w$-dominant;
\item[b] it contains at most one mixed bucket. This mixed bucket, if it exists, is the bucket $B^{{\cal Q}^*}$. 
\item[c]  For every $i$, if $\vvec$ and $\vvec'$ are, respectively,  $i$-dominant vectors in $B^{{\cal Q}^*}$ and  $V \setminus B^{{\cal Q}^*}$;
then $ratio(\vvec) \le ratio(\vvec')$;
\item[d]  the total sum of all but the $w$-component of vectors in $B^{{\cal Q}^*}$ is equal to $c$, 
i.e., $c=\|B^{{\cal Q}^*} \|_1 - (\sum_{\vvec \in B^{{\cal Q}^*}} v_w ) $;
\item[e] the sum of the 2-impurities of the buckets in ${\cal Q}^*(w,\ell,S_j,b,c) \setminus B^{{\cal Q}^*}$ 
is minimum among the partitions for $S_j$ into $b$ buckets  that satisfy the previous items.
\end{itemize}

The algorithm  builds  partitions 
${\cal Q}^* = {\cal Q}^*(w,\ell,S_k,L,c)$ for all  possible combinations of 
$w,\ell$ and $c$ and, then, returns the one  with minimum impurity.

This approach is motivated by the following: Let 
 ${\cal P}^*$ be a partition that contains one mixed bucket, denoted by $B^*_{mix},$ 
and  satisfies the properties of Theorem \ref{teo:main}. For such a partition, let $w^*=dom(B^*_{mix})$,  
$\ell^*$ be the number of $w^*$-dominant vectors in $B^*_{mix}$
and  $c^*  = \|B^*_{mix}\|_1 - \sum_{\vvec \in B^*_{mix}} v_{w^*}$ (the sum of all but the 
$w^*$ component of the vectors in $B^*_{mix}.)$ %number of samples in $B^*_{mix}$ that do not belong to class $w^*$.
Then, it is possible to prove that the impurity of a partition ${\cal Q}^* = {\cal Q}^*(w^*,\ell^*,S_k,L,c^*)$ is
at an $O(\log k)$ factor from that of ${\cal P}^*$ (see the proof of Theorem \ref{theo:algopseudo} below).
The key observations are: (i)
the impurity of the bucket $B^{{\cal Q}^*}$ of ${\cal Q}^*$ is
at an $O(\log k)$ factor from  that of $B^*_{mix}$ since $\|B^{{\cal Q}^*}\|_1$  
is at most twice  $\|B^*_{mix}\|_1$ and 
$\|B^{{\cal Q}^*}\|_1 - \sum_{\vvec \in B^{{\cal Q}^*}} v_{w^*} = \|B^*_{mix}\|_1 - \sum_{\vvec \in B^*_{mix}} v_{w^*} = c^*$; 
(ii)  the sum of the 2-impurity of the buckets in ${\cal Q}^*\setminus B^{{\cal Q}^*}$ is
at most the sum of the 2-impurity of the buckets ${\cal P}^* \setminus B^*_{mix}$ so that 
their standard impurities differ by not more than a logarithmic factor.

\medskip

\noindent{\bf Building the partitions ${\cal Q}^*(w,\ell,S_i,b,c)$.}
To simplify our discussion let us assume w.l.o.g. that $w=1$.

%[Property (i) of Theorem \ref{teo:main}]
%  [property (ii) of Theorem \ref{teo:main}];
%Let ${\cal Q}^* = {\cal Q}^*(w,\ell,S_i,b,c)$  be a partition that satisfies properties (a)-(d) above. Recall %that $B^{\cal P}$ is 
%the only special bucket of ${\cal P}$ which can be mixed or pure.

Let ${\cal Q}^* = {\cal Q}^*(w,\ell,S_i,b,c)$  be a partition that satisfies properties (a)-(e) above and
let $I_2^{pure}({\cal Q}^* ) = 
I_2({\cal Q}^* \setminus B^{{\cal Q}^*} )$ be the total $2$-impurity of the buckets of  ${\cal Q}$ which are surely  pure. 
Moreover, let $V_i(j)$ be the set of the $j$ vectors of $V_i$ of smallest ratio, and 
let $c_i(j)  =  \|V_i(j)\|_1 -  \sum_{\vvec \in V_i(j)} v_1,$ i.e., the total sum of all components but the first of the vectors in 
$V_i(j).$

For $i=1$ we have
\begin{equation}
I^{pure}_2({\cal Q}^*(1, \ell, S_i,b,c)) = \begin{cases}I_2( {\cal A}^{2C}( V_1 \setminus V_1(\ell) ,b-1)), & \mbox{if } c=c_1(\ell) 
 \\ \infty & \mbox{ otherwise} \end{cases}
\label{eq:may27-1}
\end{equation}

%Let $B(j,b)=\{ (j',b') | 0 \le j' \le |V_j|  \mbox{ and } 0 \le b' < b \}$.

For $i>1$ we have 

\begin{equation} I^{pure}_2({\cal Q}^*(1,\ell,S_i,b,c))= \min_{ 0 \le j \le |V_i| \atop 0 \le b' < b } \{ I_2( {\cal A}^{2C}(V_i \setminus V_i(j) ,b')) 
+ I^{pure}_2({\cal Q}^*(1,\ell,S_{i-1},b-b',c- c_i(j))) \} 
\label{eq:may27-2}
\end{equation}

%Let us also assume, w.l.o.g.,

Algorithm  \ref{alg:greedy} relies on  equations (\ref{eq:may27-1}) and (\ref{eq:may27-2}).  
First, at line \ref{line:preprocess}, it preprocesses the partitions generated by algorithm ${\cal A}^{2C}$ 
that are used by these equations.
Next, it runs over the possible combinations  $(w,\ell)$ and, for
each of them, the procedure ${\cal M}$ is called to search for a partition with  impurity smaller than
those found so far.

%For each $(i,b,c)$ a partition ${\cal P}(S_i,b,c)$ with minimum $I^{pure}_2({\cal P}(S_i,b,c))$ among 
%those generated is kept in the list $U_i.$

For a fixed pair $(w,\ell)$, procedure ${\cal M}$ constructs 
 partitions ${\cal Q}^*(w,\ell,S_i,b,c)$ for all the possible combinations of $i$ and $b$ and all the possible  
corresponding  $c$. Thus,  to simplify we use  ${\cal Q}^*(S_i,b,c)$ to refer to ${\cal Q}^*(w,\ell,S_i,b,c)$.
The first step of procedure ${\cal M}$, where component $w$ is relabeled to $1$ is only meant to keep a direct 
correspondence with the assumption $w=1$ in equations (\ref{eq:may27-1}) and (\ref{eq:may27-2}).
Equation (\ref{eq:may27-1}) is implemented in lines \ref{line:U1-beg}-\ref{line:U1-end} to  build the list $U_1$
 that contains all  the partitions ${\cal Q}^*(1,b,c)$ for which $I^{pure}_2({\cal Q}^*(1,b,c)) \ne \infty$.
The loop of lines \ref{line:GenUi-begin}-\ref{line:GenUi-end} calls procedure {\tt GenerateNewList},
that employs Equation (\ref{eq:may27-2}), to build a list $U_i$, from list $U_{i-1}$, containing all partitions  ${\cal Q}^*(i,b,c)$ with $I^{pure}_2() \ne \infty$. 
We note that at line \ref{line:b*} the special bucket $B'$ of the new partition under construction, 
 is obtained as an extension of  the  bucket $B^{{\cal Q}^*}$ of the partition ${\cal Q}^*(i-1, b,)$ in $U_{i-1}$, which 
  includes the vectors in $V_1(\ell)$.
%them in  a matrix $M$. The entry $M[i,b,c]$ stores the one  partition ${\cal P}$
%whose sum of $2$-impurities of the buckets different from $B^{\cal P}$,  $I^{pure}_2({\cal P}),$ is minimum, 
%among those found processing the set of vectors  $S_i,$
%and that have $b$ buckets and  $c$ samples in  bucket $B^{\cal P}$ that do not belong to class 1.
%In the pseudocode we assume that entry $M[i,b,c]$ is a pair $(B, \hat{\cal P})$ where $B$ is the special bucket of ${\cal P}$ 
%which can be mixed or pure and $\hat{\cal P}$ is the set of the remaining buckets in the partition ${\cal P}$, which are pure by definition.

At the end of the procedure ${\cal M}$ the partition of minimum impurity in $U_k$ is returned. This is the 
partition of minimum impurity among the partition ${\cal Q}^*(w, \ell, S_k, b, c)$ stored in list $U_k$ for some $b$ and $c$.
Hence, for $w = w^*$ and $\ell = \ell^*$, in particular, it is a partition that has impurity not larger than  the partition 
${\cal Q}^*(w^*,\ell^*, S_k, L, c^*)$  which we already observed to be an $O(\log k)$ approximation of the minimum impurity 
partition satisfying Theorem \ref{teo:main}.

%It builds 

%The first two loops 
%are used to test all the $k$ possibilities for the dominant class in the  mixed bucket 
%as well as the number  of vectors from such a class that do not lie in this bucket.
%The relabeling in line 3 aims to make the description of algorithm easier because, after this relabeling, the %dominant
%class of the mixed bucket has always label 1. 

Since $c \leq C = \sum_{\vvec \in V} \|\vvec\|_1$ and the lists $U_i$ cannot grow larger than $k L C$  
it is easy to see that  the proposed algorithm runs in polynomial time on $n=|V|$ and $ C=\sum_{\vvec \in V} \|\vvec\|_1$.

%  $O( n^2 L^2 \sum_{\vvec \in V} \|\vvec\|_1 \times query(U))$ time, where $n = |V|$  is the total number of %vectors of the instance 
% $(V, L)$ and $query(U)$ is the time required . 

The following theorem gives a formal proof of the approximation guarantee for the solution 
returned by Algorithm \ref{alg:greedy}

%We have assumed  that $w$ and $\ell$ have been fixed. Thus, we need to
% call the procedure ${\cal M}$ for every $w \in \{1,\ldots,k\}$ and for each
%$\ell \in \{1,\ldots,|V_w|\}$, which incurs an extra $n$ factor in the running time. The partition with the %smallest impurity over%
%all the executions is returned.

\begin{algorithm}[ht]
\caption{($V$: set of $k$-dimensional vectors; $L$: integer )}
   \label{alg:greedy}

\begin{algorithmic}[1]
{\small
\State Preprocess ${\cal A}^{2C} (V_j \setminus V_j(j'),b)$ for $j=1,\ldots,k$, $j'=1,\ldots,|V_j|$ and $b=1,\ldots,L$ \label{line:preprocess}
\State ${\cal Q}_{Best} \leftarrow $ arbitrarily chosen  partition of $S_k$ into $L$ buckets
\For{$w=1,\ldots,k$ and $\ell=1,\ldots,|V_{w}|$} 
\If {$I({\cal M}(w,\ell)) < I({\cal Q}_{Best})$ }
\State Update ${\cal Q}_{Best}$ to ${\cal M}(w,\ell)$
\EndIf
\EndFor 
\Procedure{${\cal M}$}{$w$:class,$\ell$:integer}
\State Relabel the components of the vectors so that label of component $w$ becomes 1.
%\State $B \leftarrow $ $V_1(\ell)$ \label{line:U1-beg} 
%\State $c' \leftarrow  \|B\|_1 - \sum_{\vvec \in B} v_1$   
\For{$b'=1,\ldots,L$} \label{line:U1-beg}
\State ${\cal Q}  \leftarrow \{V_1(\ell) \} \cup  {\cal A}^{2C}(V_1 \setminus V_1(\ell),b'-1)$ %\Comment{This is $= {\cal Q}(1,b',c')$} 
\State Add ${\cal Q}$ to $U_1$.  \label{line:U1-end}
\EndFor
\For{$i=2,\ldots,k$}  \label{line:GenUi-begin}
\State $U_i \leftarrow$  {\tt GenerateNewList}$(U_{i-1})$ \label{line:GenUi-end}
%\State $U_i \leftarrow Filter(U_i,t)$   \label{line:filter}
\EndFor
\State Return the partition  with minimum impurity in $U_k$

\EndProcedure

\medskip
\Function{{\tt GenerateNewList}}{$U,i$}
  
\For{every partition ${\cal Q}$ in the list $U$} 
\State Let $(i, b, c)$ be the values s.t. ${\cal Q} = {\cal Q}(i, b, c)$
\If{$b < L$}
\For{$b'=1,\ldots,L-b$}
\For{$j=0,\ldots,|V_i|$}
\State $B' \leftarrow B^{\cal Q} \cup V_i(j)$  \label{line:b*}
%\State $\hat{\cal Q} \leftarrow {\cal Q} \setminus B^{\cal Q} $
\State  ${\cal Q}' \leftarrow \{B'\} \cup  ({\cal Q} \setminus B^{\cal Q} )\cup {\cal A}^{2C}(V_i \setminus V_i(j),b')$.
\State{Add ${\cal Q}'$ to $U$} 
\State  $c' \leftarrow \|B'\|_1 - \sum_{\vvec \in B'} v_1$  \label{line:jun1} 
\If{ $U$ contains another ${\cal Q}''$ with parameters $(i,b+b',c')$} 
\If{$I^{pure}_{2}({\cal Q''}) > I^{pure}_{2}({\cal Q}')$} \label{line:imp'}
 \State remove ${\cal Q''}$ from $U$
\Else 
  \State remove ${\cal Q}'$ from $U$  
\EndIf
\EndIf
\EndFor
\EndFor
\EndIf
\EndFor
\Return $U$
\EndFunction
}
\end{algorithmic}	
\end{algorithm}

 \begin{theorem} \label{theo:algopseudo}
For instances with vectors of dimension $k \le L$, there exists a pseudo-polynomial time  $O( \log^2 k)$-approximation algorithm 
for \textsc{PMWIP}.
\label{teo:approx-pseudo}
\end{theorem}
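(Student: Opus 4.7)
My plan is to show that, for a suitably chosen outer-loop parameter pair $(w,\ell)$, the partition ${\cal Q}^*$ produced by Algorithm \ref{alg:greedy} is within an $O(\log k)$ factor of a reference partition supplied by Theorem \ref{teo:main}, and then to verify that the algorithm runs in pseudo-polynomial time.

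The first step is to invoke Theorem \ref{teo:main} to obtain a partition ${\cal P}^*$ with at most one mixed bucket $B^*_{mix}$ and impurity $O(\log^2 k)\,\OPT(V)$, in which, for every $i$, the $i$-dominant vectors of $B^*_{mix}$ are exactly those of smallest ratio. Writing $w^* = dom(B^*_{mix})$, letting $\ell^*$ be the number of $w^*$-dominant vectors in $B^*_{mix}$, and $c^* = \|B^*_{mix}\|_1 - \sum_{\vvec \in B^*_{mix}} v_{w^*}$, I would argue by induction on $i$ that when the outer loop of Algorithm \ref{alg:greedy} reaches $(w,\ell) = (w^*,\ell^*)$, the final list $U_k$ produced by procedure ${\cal M}$ contains a partition ${\cal Q}^* = {\cal Q}^*(w^*,\ell^*,S_k,L,c^*)$ satisfying all five properties (a)--(e) in the definition. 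The induction uses the recurrences (\ref{eq:may27-1})--(\ref{eq:may27-2}) and the pruning rule at line \ref{line:imp'}.

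The bulk of the argument is the bound $I({\cal Q}^*) = O(\log^2 k)\,\OPT(V)$, split between the mixed bucket and the pure ones. For the mixed bucket, properties (c) and (ii) jointly force $B^{{\cal Q}^*}$ and $B^*_{mix}$ to contain the same $\ell^*$ $w^*$-dominant vectors (those of smallest ratio); by (d) both have the same non-$w^*$ coordinate sum $c^*$; and since any non-$w^*$-dominant vector $\vvec$ satisfies $v_{w^*} \le \|\vvec\|_1/2$, one obtains $\|B^{{\cal Q}^*}\|_1 = O(\|B^*_{mix}\|_1)$. Combining the upper and lower bounds of Corollary \ref{cor:approx} applied with $i=w^*$ then yields $I(B^{{\cal Q}^*}) = O(\log k)\,I(B^*_{mix})$. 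For the pure buckets, property (e) gives $I_2(\mathit{pure}({\cal Q}^*)) \le I_2(\mathit{pure}({\cal P}^*))$, and applying Proposition \ref{prop:may23-10} on both sides---while using superadditivity to bound $\sum_\vvec I(\vvec) \le \OPT(V)$---yields $I(\mathit{pure}({\cal Q}^*)) \le 4\,I(\mathit{pure}({\cal P}^*)) + O(\log k)\,\OPT(V)$. Summing the two bounds and using that the mixed bucket of ${\cal P}^*$ inherits the $O(\log k)\,\OPT(V)$ guarantee from Lemma \ref{lem:part1}, while $I({\cal P}^*) = O(\log^2 k)\,\OPT(V)$, completes the proof that $I({\cal Q}^*) = O(\log^2 k)\,\OPT(V)$; since the algorithm returns the best partition across all $(w,\ell)$, the same bound holds for its output.

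The trickiest part is the tight comparison of the two mixed buckets: property (ii) of Theorem \ref{teo:main} is essential here, since it is what lets us identify the $w^*$-dominant vectors of $B^*_{mix}$ with the initial segment $V_{w^*}(\ell^*)$ that the DP naturally considers, and hence match the $\ell_1$ norms of $B^{{\cal Q}^*}$ and $B^*_{mix}$ up to a constant. For the running time, the DP maintains $O(k L C)$ entries with $C = \sum_\vvec \|\vvec\|_1$, each extension step is polynomial, and the outer enumeration over $(w,\ell)$ adds an $O(k n)$ factor, so the total complexity is polynomial in $n, k, L, C$---pseudo-polynomial, as claimed.
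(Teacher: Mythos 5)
Your overall strategy matches the paper's (enumerate $(w^*,\ell^*,c^*)$ taken from the Theorem~\ref{teo:main} partition, bound the special bucket via Corollary~\ref{cor:approx} using $\|B^{{\cal Q}^*}\|_1\le 2\|B^*_{mix}\|_1$ and equality of the non-$w^*$ mass, and bound the pure buckets via property (e) and Proposition~\ref{prop:may23-10}), but the accounting for the mixed bucket has a genuine gap. You bound $I(B^{{\cal Q}^*})=O(\log k)\,I(B^*_{mix})$ and then claim that ``the mixed bucket of ${\cal P}^*$ inherits the $O(\log k)\,\OPT(V)$ guarantee from Lemma~\ref{lem:part1}.'' That claim is not available: Lemma~\ref{lem:part1} bounds the mixed bucket of an \emph{intermediate} partition, and the mixed bucket of the final partition ${\cal P}'$ of Theorem~\ref{teo:main} is a different set, obtained after the ${\cal A}^{2C}$ reorganization of Lemma~\ref{lem:part2} (which degrades the pure buckets to $O(\log k)\,\OPT$) and the $X_i/Y_i$ exchange; the proof of Lemma~\ref{lem:last-step-01Jul} only gives $I(B'_{mix})=O(\log k)\bigl(I(B_{mix})+I(B_1)\bigr)=O(\log^2 k)\,\OPT(V)$. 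With the correct bound $I(B^*_{mix})=O(\log^2 k)\,\OPT(V)$, your multiplicative comparison yields only $I(B^{{\cal Q}^*})=O(\log^3 k)\,\OPT(V)$, one log factor short of the theorem; and the $O(\log k)$ loss in the comparison is not avoidable by your route, since $c^*$ (the non-$w^*$ mass of a mixed bucket) is in general not bounded by $\sum_{\vvec}(\|\vvec\|_1-\|\vvec\|_{\infty})\le \OPT(V)$. The paper sidesteps this precisely by never comparing $I(B^{{\cal Q}^*})$ against $I(B'_{mix})$: it bounds $I(B^{{\cal Q}^*})$ by (essentially) the same explicit expression $2c\log(2ks'/c)$ of (\ref{eq:17may-1}) that the proof of Lemma~\ref{lem:last-step-01Jul} had already shown to be $O(\log k)\bigl(I(B_{mix})+I(B_1)\bigr)=O(\log^2 k)\,\OPT(V)$ in terms of the \emph{pre-exchange} partition (this is inequality (\ref{newbound:1})). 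To fix your proof you would need either this detour through the intermediate partition or a separate argument that the mixed bucket of ${\cal P}^*$ alone is $O(\log k)\,\OPT(V)$, which the paper does not provide.

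A secondary omission: you never treat the case in which the Theorem~\ref{teo:main} partition has no mixed bucket, where $w^*,\ell^*,c^*$ are undefined and your outer-loop argument (which requires $\ell\ge 1$ and the ratio-ordering property (c) for the special bucket) does not directly apply. The paper handles this by returning the better of Algorithm~\ref{alg:greedy} and the separate dynamic program based on (\ref{eq:19may-1}) for partitions consisting only of pure buckets, and argues that case separately; some such branch is needed in your proof as well. Your induction that the lists $U_i$ contain the partitions ${\cal Q}^*(w,\ell,S_i,b,c)$, the doubling bound on $\|B^{{\cal Q}^*}\|_1$, the pure-bucket comparison, and the pseudo-polynomial running-time analysis are all fine and coincide with the paper's.
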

\begin{proof}
Let ${\cal Q}$ be the partition with smallest impurity between the one returned by 
Algorithm \ref{alg:greedy} and the one returned by the DP based algorithm that implements
Equation (\ref{eq:19may-1}).
In addition, 
let ${\cal P}^*$ be a partition that  satisfies the conditions  of Theorem \ref{teo:main}. In particular, we have 
$I({\cal P}^*) \leq O(\log^2 k) \OPT(V).$ 

To show that   
$I({\cal Q})$ is $O(\log^2 k ) \OPT(V)$ we 
compare  $I({\cal Q})$  with $I({\cal P}^*) $.
We argue according to whether ${\cal P}^*$ has a mixed bucket
or not. 

\medskip

\noindent
{\em Case 1.}  ${\cal P}^*$ has a mixed bucket. We can assume that 
${\cal P}^*$ coincides with the partition ${\cal P}'$ of Lemma \ref{lem:last-step-01Jul}.

Let $B'_{mix}$ be the mixed bucket of ${\cal P}'$ and assume w.l.o.g. that  $w'$ is the dominant component in  
$B'_{mix}$. 
Let  $s' = \|B'_{mix}\|_1$, $c' = s' - \| B'_{mix} \|_{\infty}$ and $c = s' - \sum_{\vvec \in B'_{mix}} v_1$ 
(recall that in the proof of Lemma \ref{lem:last-step-01Jul} component $1$ is 
the dominant component of the bucket $B_{mix}$ from  the partition ${\cal P}$ that is used as a basis to obtain ${\cal P'}$; 
note that it is possible to have $1 \neq w'$).
From Proposition \ref{prop:increasing-new}, since $c' \le c$, and the  proof of Lemma \ref{lem:last-step-01Jul} 
we have that
\begin{equation} \label{newbound:1}
%2c' \log \frac{2ks'}{c'} \leq 
2c' \log \frac{2k s'}{c'} \le  2c \log \frac{2k s'}{c} \le O(\log^2 k) \OPT(V).
\end{equation}
In particular, the second  inequality in (\ref{newbound:1}) is proved in 
Appendix D, {\bf Bounds on the mixed buckets $B'_{mix}$}---note that 
with our present definition of $s'$ and $c$ the middle term of  (\ref{newbound:1}) coincides with the right hand side of 
(\ref{eq:17may-1}).

Let $\ell'$ be the number of $w'$-dominant vectors from ${\cal P}'$ that  lie in $B'_{mix}$. 
We know that the impurity of the output partition ${\cal Q}$  
 is not larger than that of   ${\cal Q}^*(w',\ell',S_k,L,c')$, one of the partitions built by Algorithm \ref{alg:greedy}.
Thus, it is enough to show that the impurity of  ${\cal Q}^*(w',\ell',S_k,L,c')$
is at a $O(\log^2 k)$ factor from the optimum. For this we will show that 
$I({\cal Q}^*)$ is $O( I({\cal P}') + \OPT(V) \log k)$. 
In what follows
we use ${\cal Q}^*$ to refer to ${\cal Q}^*(w',\ell',S_k,L,c')$, and as before, 
$B^{{\cal Q}^*}$ denotes the special bucket in ${\cal Q}^*$.

Let  $$s_1 = \| \sum_{\vvec \in B'_{mix} \atop dom(\vvec) = w'} \vvec \|_1 
\,\,\,\,\mbox { and  } \,\,\,\,  c_1 = s_1 -  \sum_{\vvec \in B'_{mix} \atop dom(\vvec) = w'} v_w.$$ 
By Corollary \ref{cor:approx}, with $i=w'$, we have that 
\begin{equation} 
I(B^{{\cal Q}^*}) \le    
2 c' \log \left (  \frac{  2k \cdot (2(c'-c_1) + s_1) }{c'} \right )
\le  2 c' \log \frac{4ks'}{c'}  \leq  O(\log^2 k) \OPT(V)
\label{eq:july8-1}
\end{equation}
%}
where 
\begin{itemize}
\item for the first inequality, we are also using the fact that $\| B^{{\cal Q}^*} \|_1 \le 2(c'-c_1)+s_1$. 
To see that the last relation holds we note that
$$\|\sum_{\substack{\vvec \in B^{{\cal Q}^*} \\ dom(\vvec) \neq w'}} \vvec \|_1 
- \sum_{\substack{\vvec \in B^{{\cal Q}^*} \\ dom(\vvec) \neq w'}} v_{w'}  = c' - c_1,$$
%the total mass of the non $1$-dominant vectors  
%in $B^{\cal P}$ the total  number of sample not from class $1$ is $c-c_1$. 
hence $2(c'-c_1)$ 
is an  upper bound on the total mass of the  vectors in $B^{{\cal Q}^*}$ which are not $w'$-dominant.
Therefore, we have the upper bound $2(c'-c_1)+s_1$ used in the first inequality for $\|B^{{\cal Q}^*}\|_1$.
\item for the second inequality we are using $s' \geq c' - c_1 + s_1.$
\item the last inequality follows from (\ref{newbound:1})
\end{itemize}

\bigskip

\remove{
\begin{eqnarray} I({\cal Q}^*)  &=&  I(B^{{\cal Q}^*}) + \sum_{B' \in ( {\cal Q}^* \setminus B^{{\cal Q}^*} )} I(B') 
 \le 2 c \log \left (  \frac{ k \cdot (2(c-c_1) + s_1) }{c} \right )+   \sum_{B' \in ({\cal Q}^* \setminus B^{{\cal Q}^*} )}   I(B') \nonumber \\ 
&\le& 2 c \log \left ( \frac{ k \cdot (2(c-c_1) + s_1) }{c} \right ) + \log k \cdot \left ( \sum_{ B' \in ( {\cal Q}^* \setminus B^{{\cal Q}^*} )}  I_2(B') \right ),  
\label{eq:may24-1}
\end{eqnarray}
}

\bigskip

We now focus on the buckets of ${\cal Q}^*$ different from $B^{{\cal Q}^*}$---which are surely pure. 
From the  proof of Lemma \ref{lem:last-step-01Jul} ( Appendix D, {\bf Bounds on the $i$-pure buckets} )  we have that
the total impurity of the buckets in ${\cal P}'$ different from $B'_{mix}$ satisfies
\begin{equation} \label{newbound:2}
\sum_{B \in {\cal P}' \setminus B'_{mix}} I(B) = O(\log^2 k) \OPT(V).
\end{equation}
 
In addition, we have
\begin{eqnarray}
\sum_{B \in {\cal Q}^* \setminus B^{{\cal Q}^*}} I(B) &\leq& 
\sum_{B \in {\cal Q}^* \setminus B^{{\cal Q}^*}} \big( 2 I_2(B) + 4  (\log k) \sum_{\wvec \in B} I(\wvec) \big) \label{eqnar-jul8:1} \\
&=& 2\sum_{B \in {\cal Q}^* \setminus B^{{\cal Q}^*}} I_2(B) +    4 (\log k)  \sum_{\wvec \in V} I(\wvec)
 \label{eqnar-jul8:2}\\
&\leq& 2\sum_{B \in {\cal P}' \setminus B'_{mix}} I_2(B) + 
4 (\log k) \sum_{\wvec \in V} I(\wvec)
% 4 \sum_{\vvec \in V} (\|\vvec\|_{1} - \|\vvec\|_{\infty}) \log k 
\label{eqnar-jul8:3}\\ 
&\leq& 4 \sum_{B \in {\cal P}' \setminus B'_{mix}} I(B) +  
4 (\log k) \sum_{\wvec \in V} I(\wvec)
%4 \sum_{\vvec \in V} (\|\vvec\|_{1} - \|\vvec\|_{\infty}) \log k 
\label{eqnar-jul8:4} \\ 
%&\leq& 4 \sum_{B \in {\cal P}^* \setminus B'_{mix}} I(B) + 4  (\sum_{\vvec \in V} I(\vvec)) \log k \label{eqnar-jul8:4'} \\ 
&\leq& O(\log^2) \OPT(V) +   \OPT(V) \log k, \label{eqnar-jul8:5}
\end{eqnarray}
where the inequality in (\ref{eqnar-jul8:1}) follows from Proposition \ref{prop:may23-10};
 (\ref{eqnar-jul8:3}) follows from (\ref{eqnar-jul8:2}) by the property (e); (\ref{eqnar-jul8:4}) follows from (\ref{eqnar-jul8:3}) by Proposition \ref{prop:may23-10}
and, finally, to obtain (\ref{eqnar-jul8:5}) we use (\ref{newbound:2}) for the left term and 
superadditivity for the right term;

From (\ref{eq:july8-1}) and (\ref{eqnar-jul8:1})-(\ref{eqnar-jul8:4}) we have
$$I({\cal Q}) \leq I({\cal Q}^*) = I(B^{{\cal Q}^*}) + \sum_{B \in {\cal Q}^* \setminus B^{{\cal Q}^*}} I(B) = O(\log^2 k) \OPT(V)$$
and the proof for {\em Case 1} is complete. 

\bigskip

\noindent
{\em Case 2.}  ${\cal P}^*$ does not have a mixed bucket.
In this case, let ${\cal Q}'$ be the partition  built according to the recurrence in (\ref{eq:19may-1}).
It was argued right after this inequality that 
$I({\cal Q}')$ is  $O(\log^2 k)OPT(V)$. Thus, $I({\cal Q})$ is  also $O(\log^2 k)OPT(V)$.
 \end{proof}

%\bigskip 
%
%\centerline{\bf \Large *****} 

\bigskip

\noindent
{\bf The polynomial time algorithm.}
Let ${\cal P}^*$ be a partition that satisfies the conditions  of Theorem \ref{teo:main}.
If ${\cal P}^*$ does not have a mixed bucket then the DP based on Equation (\ref{eq:19may-1}) is a polynomial time
algorithm that builds a partition whose impurity is at most $O(\log k)$ times larger than that of
${\cal P}^*$. Thus, we  just need to focus in the case where ${\cal P}^*$ has a mixed bucket.

Let {\sc Algo-Prune} be the variant of Algorithm \ref{alg:greedy} that together with the instance takes 
as input an   extra integer parameter $t$ and uses the following additional conditions regarding the way the lists $U_i's$ are handled: 
(i)  only partitions for which the fifth parameter $c$ is at most $ t$ are added to $U_i$; 
(ii)  after creating the list $U_i$ in line \ref{line:GenUi-end} and before proceeding to list $U_{i+1}$ the following pruning is performed: 
the interval $[0,t]$ is split into $4k$ subintervals of length $t/4k$ and while there exist two partitions ${\cal Q}(w, \ell, S_i, b,c )$ 
and ${\cal Q}'(w, \ell, S_i, b, c' )$ in $U_i$ with  
both $c'$ and $c$ lying in the same subinterval,  the one for which the $I^{pure}_2()$  is  larger is removed. 
This step guarantees that  a polynomial number of partitions are kept in $U_i$.
%A second  difference with respect to Algorithm \ref{alg:greedy} is that only partitions 
%with $c' \le t$ (line \ref{line:jun1})  are added to the list $U$. 

Let us consider the algorithm ${\cal A}_{poly}$ that executes  {\sc Algo-Prune}  
$e = \lceil \log ( \sum_{\vvec \in V} \|\vvec\|_1) \rceil$ times. In the $j$th execution {\sc Algo-Prune} is called with  $t=2^j$. 
After execution $j$ the partition with the minimum impurity found in $U_k$ is kept as ${\cal Q}^{(j)}.$ 
After all the $e$ executions have been performed, the partition with minimum impurity in $\{{\cal Q}^{(1)}, \dots, {\cal Q}^{(e)} \}$
is returned.

From the above observation that in each call of {\sc Algo-Prune} the number of partitions kept in the lists is polynomial in 
size of the instance and the fact that the number of calls to {\sc Algo-Prune} is also polynomial in the size of the input, we have that
${\cal A}_{poly}$ is a polynomial time algorithm for our problem.

It remains to show that ${\cal A}_{poly}$ is also an $O(\log^3 k)$-approximation algorithm. For this, let us consider again the partitions  ${\cal P}^*$ and  
${\cal Q}^*(1,\ell^*,S_k, L, c)$ defined in the case 2 of the proof of Theorem \ref{teo:approx-pseudo}.
We can show that there is a  partition ${\cal Q}$ among 
those constructed by ${\cal A}_{poly}$ such that  $I^{pure}_2({\cal Q}) \le I^{pure}_2( {\cal Q}^*(1, \ell^*, S_k, L, c))$ and
such that the  special bucket $B^{\cal Q}$ of ${\cal Q}$ has $\ell^*$ vectors that are
$1$-dominant and satisfies $\|B^{\cal Q} \|_1 
- \sum_{\vvec \in B^{\cal Q} } v_1 \leq 2 ( \| B_{mix} \|_1 - \| B_{mix} \|_{\infty} ) =2c$. 

Note that these properties are enough to obtain our claim since, with them,  
proceeding as in  the proof of Theorem \ref{teo:approx-pseudo}
one can show that the impurity of  ${\cal Q}$ is at most an $O(\log^3 k)$ factor larger than the optimal impurity.

For the definition of ${\cal Q}$ we need some additional notation. As in Theorem \ref{teo:approx-pseudo}, let  us 
denote with ${\cal Q}^*$ the partition ${\cal Q}^*(1,\ell^*,S_k,L,c).$ Then $B^{{\cal Q}^*}$ denotes the special bucket of this
partition.

%and the remaining buckets are denoted by $\hat{\cal Q}.$

For $i=1,\ldots,k$ let $b_i$ be the number of $i$-pure buckets in  ${\cal Q}^* \setminus B^{{\cal Q}^*} $ and 
let $n_i$ be the number of  $i$-dominant vectors that lie in the bucket $B^{{\cal Q}^*}$. 
Moreover, let $c_i  = \|V_i(n_i)\|_1 - \sum_{\vvec \in V_i(n_i)} v_1$. %  be the number of  samples  in $V_i(n_i)$ that  do not belong to class $1$.  
With this, we have that $\sum_{i=1}^k c_i=c=\| B_{mix} \|_1 - \| B_{mix} \|_{\infty}$.

The partition ${\cal Q}$ is defined as the last  partition of
the sequence ${\cal Q}_1,\ldots,{\cal Q}_k$, where
\begin{itemize}
 \item ${\cal Q}_1$ is the partition ${\cal Q}^*(1,\ell^*,S_1,b_1,c_1)$ constructed in the 
 $\lceil \log c \rceil$-th 
 execution of {\sc Algo-Prune}, i.e., with $t=2^{\lceil \log c \rceil } > c.$
 \item  For $i > 1$, let ${\cal Q}'_i$ be the partition obtained by  extending  ${\cal Q}_{i-1}$  with
the $b_i$ buckets from the partition ${\cal A}^{2C}(V_i \setminus V_i(n_i), b_i)$ and
replacing the bucket $B^{{\cal Q}_{i-1}}$, from ${\cal Q}_{i-1}$, with  
$B^{{\cal Q}_{i-1}} \cup V_i(n_i)$. Note that such a partition is added to $U_i$ before the pruning step (ii) is executed.
Then,  ${\cal Q}_i$ is defined as the partition that survives (after the pruning step (ii)) in the subinterval where ${\cal Q}'_i$ lies.
\end{itemize}

Let $c'_i = \| B^{{\cal Q}_i} \|_1 - \sum_{\vvec \in B^{{\cal Q}_i}} v_1$ (the total mass of vectors in 
the special bucket $B^{{\cal Q}_i}$ of ${\cal Q}_i$, minus the mass of such vectors in the component $1$).

We can prove by induction  that 
$$\left | c'_i  - \sum_{j=1}^i c_j \right | \le \frac{ i \cdot t}{4k} .$$
For $i=1$ the result holds since $c_1=c'_1$.
It follows from the induction  that 
$$c'_{i-1} -   \sum_{j=1}^{i-1} c_j \le \frac{ (i-1) \cdot t}{4k} $$
The result for $i$ is established by observing that the pruning step (ii) above, ensures that
$$\left | c'_i - ( c'_{i-1} + c_i) \right | \le \frac{t}{4k}$$

Let ${\cal Q}^*_i$ be the subpartition of  
${\cal Q}^*$ that contains bucket $B^{{\cal Q}^*}$ and all $i'$-pure bucket for each  $i' \le i$.
We can also prove by induction that  $I^{pure}_2({\cal Q}_i) \le I^{pure}_2({\cal Q}^*_i).$
For $i=1$ the result holds since ${\cal Q}_1={\cal Q}^*_1$.
For a general $i$ we have that  
\begin{eqnarray*}
I^{pure}_2({\cal Q}_i) &\le& I^{pure}_2({\cal Q}_{i-1}) + {\cal A}^{2C}(V_i \setminus V(n_i), b_i)\\
&\le& I^{pure}_2({\cal Q}^*_{i-1}) + {\cal A}^{2C}(V_i \setminus V(n_i), b_i) = 
I^{pure}_2({\cal Q}^*_{i}).
\end{eqnarray*}

%In addition, let ${\cal P}_i$ be subpartition of  ${\cal P}$ that contains the bucket $B^*$
%bucket and the $j$-pure buckets for every$ j \le i$.
%We also have that $Imp'_2({\cal P}) \le Imp'_2( {\cal P}(1,\ell^*,S_k,L,c)) $.
%In fact, by construction, for $i=1,\ldots,k$ we  have  ${\cal A}(V_i - V_i(j) ) \le {\cal A}(V_i - V_i(j') ) $

Thus, by using the same arguments employed in the proof of Theorem 
\ref{teo:approx-pseudo} on can show that the impurity of  ${\cal Q}$ is at an $O( \log^2 k)$ factor from the optimal one.
We can now state the main theorem of the paper.

\begin{theorem}
There  is a polynomial time $O(\log^2 (\min\{k, L\}))$ approximation algorithm for \textsc{PMWIP}.
\end{theorem}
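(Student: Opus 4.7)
The plan is to combine the polynomial-time algorithm $\mathcal{A}_{poly}$ developed in Section 5 (which handles instances with $k \le L$ and achieves approximation $O(\log^2 k)$) with the dimensionality reduction machinery of Section 3. I would split the argument into the two natural regimes according to which of $k$ and $L$ is smaller.

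\medskip

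\textbf{Case $k \le L$.} Here $\min\{k,L\}=k$, so nothing more needs to be done: simply run $\mathcal{A}_{poly}$ on $(V,I_{Ent},L)$. As argued at the end of Section 5, this algorithm runs in time polynomial in $n$ (no dependence on $\sum_{\vvec}\|\vvec\|_1$ thanks to the geometric bucketing of the $c$-values used by \textsc{Algo-Prune}) and returns a partition of impurity $O(\log^2 k)\cdot\OPT(V)$.

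\medskip

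\textbf{Case $k>L$.} Here $\min\{k,L\}=L$, and the idea is to reduce to the first case using the framework of Section 3. Reorder the coordinates so that the entries of $\uvec=\sum_{\vvec\in V}\vvec$ are non-increasing, and take the set of directions
\[
D=\{\evec_1,\ldots,\evec_{L-1},\;\onevec-\textstyle\sum_{j=1}^{L-1}\evec_j\}\in\mathcal{D}.
\]
Apply $\mathcal{A}_{poly}$ to the collapsed instance $(collapse_D(V),I_{Ent},L)$, whose vectors have dimension exactly $L$; this produces a partition of $collapse_D(V)$ whose impurity is within $O(\log^2 L)$ of $\OPT(collapse_D(V))$. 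Lift this back to a partition $(V^{(1)},\ldots,V^{(L)})$ of $V$ by assigning each $\vvec\in V$ to the group containing its image $collapse_D(\vvec)$. By inequality (\ref{eq:BoundGeneralApproach}),
\[
\frac{I(\mathcal{A})}{\OPT(V)}\;\le\;
\frac{\sum_{i=1}^L I\bigl((\uvec^{(i)}\cdot\dvec^{(1)},\ldots,\uvec^{(i)}\cdot\dvec^{(L)})\bigr)}{\OPT(collapse_D(V))}
\;+\;
\frac{\sum_{\dvec\in D}I(\uvec\circ\dvec)}{\min_{D'\in\mathcal{D}}\sum_{\dvec'\in D'}I(\uvec\circ\dvec')}.
\]
The first summand is precisely the approximation attained by $\mathcal{A}_{poly}$ on the $L$-dimensional instance, which is $O(\log^2 L)$. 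For the specific choice of $D$ above, the second summand is bounded by $O(\log L)$ thanks to Lemma \ref{Entropy-directions}. Thus the overall approximation ratio is $O(\log^2 L)=O(\log^2\min\{k,L\})$.

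\medskip

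Combining the two cases gives the stated $O(\log^2(\min\{k,L\}))$ guarantee in polynomial time. I do not foresee any substantive obstacle here: all the heavy lifting is already done. The structural theorem (Theorem \ref{teo:main}), the two-dimensional projection $\chi^{2C}$ with Proposition \ref{prop:may23-10}, the optimal two-class algorithm (Theorem \ref{teo:opt-2classes}), and the filtering analysis producing $\mathcal{A}_{poly}$ together handle the low-dimensional regime; the subsystem property and Lemma \ref{Entropy-directions} together handle the collapse step. The only point that deserves care is verifying that the sort-and-collapse preprocessing, plus the call to $\mathcal{A}_{poly}$ on vectors of dimension $L\le k$, remains polynomial in the input size---which it does, since sorting is $O(k\log k)$ and $\mathcal{A}_{poly}$ runs in time polynomial in $n$ and $L$.
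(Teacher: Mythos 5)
Your proposal is correct and follows essentially the same route as the paper: run $\mathcal{A}_{poly}$ directly when $k\le L$, and when $k>L$ collapse via the sorted direction set $D$, invoke inequality (\ref{eq:BoundGeneralApproach}), bound the first term by the $O(\log^2 L)$ guarantee of $\mathcal{A}_{poly}$ on the $L$-dimensional instance and the second by Lemma \ref{Entropy-directions}. The only minor imprecision is the parenthetical claim that $\mathcal{A}_{poly}$ has ``no dependence on $\sum_{\vvec}\|\vvec\|_1$'': its running time does depend on $\log(\sum_{\vvec\in V}\|\vvec\|_1)$ through the number of calls to \textsc{Algo-Prune}, which is still polynomial in the input size, so the conclusion stands.
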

\begin{proof}
By the above argument we have that Algorithm ${\cal A}_{poly}$ is 
a polynomial time $O(\log^2 k)$ approximation algorithm for \textsc{PMWIP} with $k \leq L.$ 
For $k > L$, applying  Lemma \ref{Entropy-directions} and 
the approach of Section \ref{sec:general} (see, in particular equation (\ref{eq:BoundGeneralApproach})), 
we have an $O(\log^2 L)$-approximation algorithm. Putting together the two cases we have the claim.  
\end{proof}

\section{Strong Hardness of \textsc{PMWIP} for the Entropy Impurity measure}
In this section we show that \textsc{PMWIP} is {\em strongly} NP-hard when $I$ is the Entropy measure. 
This rules out an $FPTAS$ for 
the problem under the standard complexity assumptions. 
For this we show a reduction from  \textsc{3-PARTITION}.

\begin{theorem}
The \textsc{PMWIP}  for the Entropy impurity measure is strongly NP-Hard.
\end{theorem}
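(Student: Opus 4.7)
The plan is to polynomially reduce from the strongly NP-hard problem \textsc{3-Partition}: given $3m$ positive integers $a_1,\ldots,a_{3m}$ with $B/4<a_i<B/2$ and $\sum_i a_i=mB$, decide whether the $a_i$'s can be split into $m$ triples each summing to $B$. Because \textsc{3-Partition} remains NP-hard when the $a_i$'s are polynomially bounded, a polynomial-size reduction will establish strong NP-hardness of \textsc{PMWIP} with $I_{Ent}$. I would construct an instance $(V,I_{Ent},L)$ with $L=m$ and dimension $k$ polynomial in $m$, together with a target impurity $T^\star$, so that $V$ admits an $L$-partition of impurity $\le T^\star$ iff the \textsc{3-Partition} instance is a YES-instance.

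The construction I have in mind contains two kinds of vectors. First, \emph{anchor} vectors whose pairwise impurity cost (via Lemma~\ref{lem:imputity-supperadditivy}) is so high that any partition of impurity $\le T^\star$ must place each anchor in its own cluster; this pins the ``skeleton'' of any near-optimal partition and identifies $m$ distinguished clusters. Second, for each $a_i$ a family of \emph{value} vectors of total $\ell_1$-mass $a_i$, whose cluster-assignment contributes to the impurity an expression that depends on the vector $(T_1,\ldots,T_m)$ of value-mass totals in each of the $m$ anchored clusters (subject to $\sum_j T_j = mB$). The value vectors will be designed so that the induced impurity contribution is strictly minimized precisely at the balanced assignment $T_j=B$ for all $j$, which by the tight constraint $B/4<a_i<B/2$ can be achieved by an integer assignment of the $a_i$'s if and only if a 3-partition exists. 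Given such a design, the proof would split into: (1) an explicit description of the construction and computation of $T^\star$; (2) completeness, verifying by direct calculation that any valid 3-partition yields a partition of $V$ with impurity exactly $T^\star$; and (3) soundness, showing that any partition of $V$ with impurity $\le T^\star$ respects the anchor skeleton and balances the value masses, enabling one to read off a 3-partition.

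The main obstacle is the design and analysis required for soundness, specifically ensuring that deviations from the balanced assignment cost a strictly positive, inverse-polynomial amount of impurity. The subtlety is that entropy is concave and superadditive, so the naive construction (heavy anchors $H\evec_j$ plus uniformly-shaped value vectors $a_i\onevec$ or $a_i\evec_{k}$) makes the balanced assignment a local maximum rather than a minimum of the total impurity, as can be seen by expanding $F(T_1,\ldots,T_m)=\sum_j g(H+mT_j)-\sum_j g(H+T_j)-(m-1)\sum_j g(T_j)$ (with $g(x)=x\log x$) to second order around $T_j=B$ and checking the sign. The remedy is to modify the value vectors so that any imbalance necessarily forces ``type mixing''~---~i.e., placing into an anchored cluster a vector whose dominant coordinate does not match the anchor's~---~which by Corollary~\ref{cor:approx} introduces an entropy penalty scaling logarithmically with the anchor weight. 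Concretely, I would attach to each $a_i$ a complementary pair (or tuple) of anchored value vectors whose joint $\ell_1$-mass can be absorbed by a single anchored cluster with zero additional entropy only when the cluster's value-mass equals $B$, and quantify the resulting gap via Taylor expansion and Proposition~\ref{prop:basic-inequality}. Tuning the anchor weight $H$ and the dimension $k$ to simultaneously guarantee (i) that the anchor skeleton is forced and (ii) that the imbalance penalty is at least inverse-polynomial, is the bulk of the technical work; the remaining completeness/soundness arguments then follow by superadditivity and a direct computation of $T^\star$.
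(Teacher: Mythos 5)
Your proposal is a plan rather than a proof: the step you yourself describe as ``the bulk of the technical work''---designing the value vectors so that any imbalance of the cluster masses incurs a strictly positive, inverse-polynomial impurity penalty, and then proving soundness from that gap---is precisely the part that is missing. You correctly diagnose that the naive anchor-plus-value construction fails because, by concavity of the entropy, the balanced assignment can be a local \emph{maximum} of the total impurity; but the proposed remedy (complementary anchored tuples forcing ``type mixing'' penalized via Corollary~\ref{cor:approx}) is never instantiated, no target $T^\star$ is computed, and no quantitative gap between the YES and NO cases is established. Without an explicit construction and gap analysis neither completeness nor soundness can be verified, so the reduction is not established; as written, the argument does not prove the theorem.

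The paper's proof avoids all of these difficulties with a much lighter construction that your anchor machinery obscures. From a \textsc{3-Partition} instance $u_1,\dots,u_k$ (with $k=3L$) it takes $V=\{u_i\evec_i : i\in[k]\}$, i.e.\ each integer sits alone on its own coordinate, and invokes the subsystem property (Lemma~\ref{lemma:subsystem}), which holds \emph{with equality} for $I_{Ent}$: for any partition with indicator directions $\dvec^{(1)},\dots,\dvec^{(L)}$ the total impurity equals
\begin{equation*}
I(\uvec)-I\bigl((\uvec\cdot\dvec^{(1)},\dots,\uvec\cdot\dvec^{(L)})\bigr),
\end{equation*}
so minimizing the impurity is \emph{exactly} equivalent to maximizing the entropy of the vector of cluster masses, i.e.\ to balancing them, and impurity $I(\uvec)-\|\uvec\|_1\log L$ is attainable if and only if the \textsc{3-Partition} instance is a YES-instance. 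No anchors, no local-maximum issue, no Taylor expansion or penalty tuning are needed, and the equivalence is exact rather than gap-based. If you want to salvage your route, the first move is to drop the anchors and place each $a_i$ on a private coordinate: the grouping (chain-rule) identity for entropy then does the entire job.
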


\begin{proof}
Consider an instance ${\cal I}^{3-Par}$ of \textsc{3-Partition} given by a multiset $U = \{u_1, \ldots, u_k\}$ of $k = 3 L$ integers such that 
for each $i=1, \dots k$ it holds that $T/4 < u_i < T/3 $ where $T = (\sum_{i=1}^k u_1)/L.$
The \textsc{3-Partition} problem consists of deciding whether there exists a partition of $U$ into $L$ parts $A^1, \dots, A^L$
such that the sum of the 
elements in each part is equal to $T.$

From  ${\cal I}^{3-Par}$ we can create in polynomial time an instance $(V,I_{Ent}, L)$ of \textsc{PMWIP} as follows:  
for each number $u_i \in U$ add the scaled canonical vector $\vvec_i = u_i \mathbf{e}_i$ to $V$. 
	
Let $(D^1, D^2, \dots, D^L)$ be a partition of $V$ and let $\uvec= \sum_{\vvec \in V} \vvec =(u_1,\ldots,u_k)$.
Let $\dvec^{(i)} \in \{0,1\}^k$ be defined by 
$d^{(i)}_j=1$ iff $\vvec_j \in D^i$ then the impurity of $(D^1, D^2, \dots, D^L)$ is given by

$$\sum_{i=1}^L I_{Ent}( \uvec \circ \dvec^{(i)} ).$$

By the Subsystem Property---which holds with equality for $I_{Ent}$ (see Lemma \ref{lemma:subsystem}---we have
$$\sum_{i=1}^L I_{Ent}( \uvec \circ \dvec^{(i)} ) = I(\uvec) - I\left((\uvec \cdot \dvec^{(1)}, \uvec \cdot \dvec^{(1)}, \dots, 
\uvec \cdot \dvec^{(L)}) \right),$$
 hence, the right hand side is minimized when $I\left( (\uvec \cdot \dvec^{(1)}, \uvec \cdot \dvec^{(1)}, \dots, 
\uvec \cdot \dvec^{(L)}) \right)$ is maximum, i.e., the vector $(\uvec \cdot \dvec^{(1)}, \uvec \cdot \dvec^{(1)}, \dots, 
\uvec \cdot \dvec^{(L)}) $ is as balanced as possible.
  
%Therefore, a partition $A^1, \dots, A^L$ such that $\sum_{u_j \in A^i} = (\sum_{i=1}^k u_1)/L$ correspond to a
%partition of $V$ into sets $D^1, \dots, D^L$ such that 
%$(\uvec \cdot \dvec^{(1)} = (\uvec \cdot \dvec^{(2)} = \cdots = (\uvec \cdot \dvec^{(L)}) = \|\uvec\|_1 / L.$

By the well known properties of the entropy function, we have that 
deciding in polynomial time whether there is a partition of $V$ such that the resulting impurity is 
at most $I(\uvec) - \|\uvec\| \log L$ is equivalent to decide whether there exists a partition of $V$ into 
sets $D^1, \dots, D^L$ such that 
$\uvec \cdot \dvec^{(1)} = \uvec \cdot \dvec^{(2)} = \cdots = \uvec \cdot \dvec^{(L)} = \|\uvec\|_1 / L,$
which is the same as deciding whether there is a partition $A^1, \dots, A^L$ of $U$ 
such that $\sum_{u \in A^i} u = (\sum_{i=1}^k u_1)/L,$ i.e., solving the instance ${\cal I}^{3-Par}$ of 
\textsc{3-Partition}.
This concludes the reduction. 

Thus,  the strong hardness of \textsc{3-Partition} implies the strong hardness of 
\textsc{PMWIP}  for the Entropy impurity measure.
\end{proof}

\bibliographystyle{abbrv}
%\bibliography{icml}

\newpage

\appendix
{\small

\section{The proof of Fact \ref{fact:gini-entropy}}

\noindent
{\bf Fact} \ref{fact:gini-entropy}.
{\em The Gini impurity measure defined by $I_{Gini}(\uvec) = \|\uvec\|_1 \sum_{i=1}^k \frac{u_i}{\|\uvec\|_1} (1- \frac{u_i}{\|\uvec\|_1})$ 
and the Entropy impurity measure defined by $I_{Ent} =  \|\uvec\|_1 \sum_{i=1}^k \frac{u_i}{\|\uvec\|_1}  \log (\frac{\|\uvec\|_1}{u_i})$ 
belong to $\calC$.
For $f_{Entr}$, a simple inspection shows that (P3)  holds at equality.} 
\par
 \begin{proof}
	The measure $I_{Gini}$ is obtained using the function $f_{Gini}(x) = x (1-x)$, 
	and $I_{Ent}$ is obtained using the function $f_{Entr}(x) = x \log \frac{1}{x}$. 
	Clearly both functions satisfy property (P1), and it is known they also satisfy (P2)~\cite{journals/datamine/CoppersmithHH99}. 
	So it remains to be shown that they satisfy property (P3).
	
	For $f_{Gini}$, (P3) becomes
	\begin{align*}
		p (1-p) \le p(1-q) + p\left(1-\frac{p}{q}\right) ~~~~\forall q \in [p, 1]
	\end{align*}
	which after canceling the $p$'s out and rearranging, is equivalent to $p \ge q + \frac{p}{q} - 1$ for all $q \in [p,1]$, or $p \ge \max_{q \in [p,1]} (q + \frac{p}{q} - 1)$. But the function in the $\max$ is convex in $q$, and hence its maximum is attained at one of the endpoints $q = p$ and $q = 1$; for these endpoints the inequality holds at equality, which then proves the desired property.
	
\medskip	
	
For the function $f_{Entr}(x) = -x \log x$ we have that for any $0 < x \leq y < 1$ it holds that
$$-\frac{x}{y} y \log(y) - y \frac{x}{y} \log(\frac{x}{y}) = -x \log(y) - x \log(x) + x \log(y) = -x \log(x),$$  
showing that $f_{Entr}(x) = -x \log x$ satisfies (P3) with equality. 
\end{proof}

\section{The proof of the Claim in Lemma \ref{Gini-directions}}

\noindent
{\em Claim.} Fix $\uvec \in \R^k$ such that $u_i \geq u_{i+1}$ for each $i=1, \dots, k-1.$ Let ${\bf z}^{(1)}$ and ${\bf z}^{(2)}$ two orthogonal vectors
from $\{0,1\}^k\setminus\{\nullvec\}.$ Let $i^* = \min\{i \mid \max\{z^{(1)}_i, z^{(2)}_i\} = 1\}$ and 
$\vvec^{(1)} = \evec_{i^*} $ and $\vvec^{(2)} = {\bf z}^{(1)}+ {\bf z}^{(2)} - \evec_{i^*}.$ Then 
$$I(\uvec \circ \vvec^{(1)}) + I(\uvec \circ \vvec^{(2)}) \leq I(\uvec \circ {\bf z}^{(1)}) + I(\uvec \circ {\bf z}^{(1)}).$$

\begin{proof}
For the sake of simplifying the notation, let us assume that $i^* =1$. Since $\vvec^{(1)} + \vvec^{(2)} = {\bf z}^{(1)} + {\bf z}^{(2)},$ and the only significant
components are the non-zero components of ${\bf z}^{(1)} + {\bf z}^{(2)}$, for the analysis, we assume without loss of generality that ${\bf z}^{(2)} = \onevec -
{\bf z}^{(1)}.$ Setting $\dvec = {\bf z}^{(1)},$ we 
have to prove that 
$$ I_{Gini}(\uvec \circ \evec_1 ) + I_{Gini}(\uvec \circ (\onevec - \evec_1))
\le  I_{Gini}(\uvec \circ \dvec ) + I_{Gini}(\uvec \circ (\onevec - \dvec)) ,$$
for every $\dvec \in \{0,1\}^k \setminus \nullvec$. 
%For $\dvec = \nullvec$ and $\dvec = \onevec$ the
%inequality follows from Lemma \ref{lem:imputity-supperadditivy}. Thus, we assume that both $\dvec \ne  \nullvec $
%and   $\dvec \ne  \onevec $.

It follows from the definition of $I_{Gini}(\cdot)$ that
$$ I_{Gini}(\uvec \circ \dvec ) + I_{Gini}(\uvec \circ (\onevec - \dvec)) =
 (\uvec \cdot \dvec) \left ( \frac{ (\uvec \cdot \dvec)^2 - \sum_{i |d_i=1} (u_i)^2  }{ (\uvec \dvec)^2} \right ) +
(\uvec (\onevec- \dvec)) \left ( \frac{ (\uvec (\onevec- \dvec))^2 - \sum_{i | d_i=0} (u_i)^2   }{ (\uvec (\onevec- \dvec))^2} \right )= $$
$$
\| \uvec \|_1
 - \left ( \frac{ \sum_{i |d_i=1} (u_i)^2  }{ \uvec \cdot \dvec } \right ) -
 \left ( \frac{ \sum_{i |d_i=0} (u_i)^2   }{ \uvec (\onevec- \dvec)} \right ) $$

Define  $g(\dvec)$ as the sum of two last terms of the above expression, that is,

$$g(\dvec) = 
\left ( \frac{ \sum_{i|\dvec_i=1} (u_i)^2  }{ \uvec \cdot \dvec } \right ) +
 \left ( \frac{ \sum_{i|\dvec_i=0} (u_i)^2   }{ \uvec (\onevec- \dvec)} \right ) $$

It is enough to prove that  $g(\evec_1) \ge g(\dvec)$
for an arbitrary $\dvec$. For that,
we   assume w.l.o.g. that $d_1=1$ due to the symmetry of $g(\dvec)$ with respect to $\dvec$.

Let 
$$\alpha= \frac{ \sum_{i>1|d_i=1} (u_i)^2  }{  \sum_{i>1|d_i=1 } u_i }
\,\,\,\,\,\,\, \mbox{ and } \,\,\,\,\,\,\,
\beta= \frac{ \sum_{i|d_i=0 } (u_i)^2  }{ \sum_{i|d_i=0 } u_i }$$
Thus,
$$g(\dvec)= \frac{(u_1)^2 + \alpha (\uvec \cdot \dvec -u_1) }{u_1+(\uvec \cdot \dvec-u_1)} + \beta$$ 
Moreover,  we can write $g(\evec_1)$ as a function of $\dvec$ 
$$g(\evec_1)= u_1 + \frac{\alpha ( \uvec \cdot \dvec-u_1) + \beta \uvec (\onevec- \dvec) }{(\uvec \cdot \dvec-u_1) +\uvec (\onevec- \dvec)} $$
The following inequalities will be useful: $\alpha, \beta \le u_1$ since
$u_1 \ge u_i$ for all $i$, $(\uvec \cdot \dvec -u_1)  \ge \alpha$ and $ \uvec (\onevec- \dvec) \ge \beta$.

We need to prove that
$$ g(\evec_1)= u_1 + \frac{\alpha (\uvec \cdot \dvec -u_1) + \beta (\uvec (\onevec- \dvec))}{(\uvec \cdot \dvec-u_1)+\uvec (\onevec- \dvec)} \ge \frac{(u_1)^2 + \alpha (\uvec \cdot \dvec -u_1)}{u_1+(\uvec \cdot \dvec -u_1)} + \beta= g(\dvec),$$
or equivalently,
$$ \frac{ u_1 ( \uvec \cdot \dvec -u_1 )}{u_1+(\uvec \cdot \dvec -u_1)}
 -  \frac{ \alpha( \uvec \cdot \dvec -u_1 ) }{u_1+(\uvec \cdot \dvec -u_1)}
 \ge
 \frac{\beta ( \uvec \cdot \dvec -u_1) }{\uvec \cdot (\onevec -\dvec) +(\uvec \cdot \dvec -u_1)} 
-  \frac{ \alpha (\uvec \cdot \dvec -u_1) }{\uvec \cdot (\onevec -\dvec) +(\uvec \cdot \dvec -u_1)} 
 $$

%If $\beta < \alpha$ the result holds because $u_1 \ge \alpha$.
%Thus, we assume $\beta \ge \alpha$.

Simplifying  the terms we need to prove
$$ ( \beta - \alpha)[(\uvec \cdot \dvec -u_1)+u_1 ] \le (u_1 - \alpha)[(\uvec \cdot \dvec-u_1)+\uvec \cdot (\onevec- \dvec)] $$
which is equivalent to
\begin{equation}
 \beta u_1 - \alpha u_1 \le (u_1 - \beta) (\uvec \cdot \dvec -u_1) + (u_1 - \alpha) \uvec \cdot (\onevec- \dvec) 
\label{eq:aux14Dec}, 
\end{equation}
However, because $\alpha, \beta \le u_1$, $(\uvec \cdot \dvec -u_1)  \ge \alpha$ and $ \uvec \cdot (\onevec- \dvec) \ge \beta$, we have
$$(u_1 - \beta) \alpha + (u_1 - \alpha) \beta \le (u_1 - \beta) (\uvec \cdot \dvec -u_1) + (u_1 - \alpha) \uvec \cdot (\onevec- \dvec). $$
Thus, to establish inequality (\ref{eq:aux14Dec}), it is enough to prove that
$$ \beta u_1 - \alpha u_1 \le (u_1 - \beta) \alpha + (u_1 - \alpha) \beta, $$
or, equivalently,
$$ \alpha \beta \le \alpha u_1.$$
The last inequality  holds because $u_1 \ge \beta$.
\end{proof}

}

\section{The proof of Proposition \ref{prop:may23-10}}

\noindent
{\bf Proposition \ref{prop:may23-10}.}
{\em 
Fix $i \in [k]$ and let $B$ be a set of vector in $\R^k$ such that for each $\vvec \in B,$ it holds that 
$\|\vvec\|_{\infty} = v_i,$ i.e., $B$ is $i$-pure. It holds that 
$$\frac{1}{2} I_2(B)\le I(B) \le 2 I_2(B) +  4 (\log k) \sum_{\wvec \in B} I(\wvec).$$
}
%
%4 (\|B\|_1 - \|B\|_{\infty}) \log k  \leq 2 \log k \cdot  I_2(B).$$}
\par

\begin{proof}
Let us assume w.l.o.g. that $B$ is 1-pure. Let $\vvec$ be the
vector corresponding to $B$, that is, 
$\vvec=\sum_{\vvec' \in B} \vvec'$.
 Moreover, let
$$ \uvec=\sum_{\vvec' \in B} \chi^{2C}(\vvec')$$,
$$ \uvec^L=\sum_{\vvec' \in B : \| \vvec' \|_{\infty} < \|\vvec\|_1/2 } \chi^{2C}(\vvec') $$
and  
$$\uvec^H=\sum_{\vvec' \in B : \| \vvec \|_{\infty} \geq \|\vvec\|_1/2 } \chi^{2C}(\vvec')$$

Note that $\uvec^L$  corresponds to the set of vectors 
for which the dominant component is  affected by transformation 
$\chi^{2C}$. It shall be clear that  $\| \vvec\|_1=\| \uvec \|_1$
and 
$$\| \vvec\|_{\infty} \le  \|\uvec^L \|_{\infty} + \|\uvec^H \|_{\infty} =
\frac{\|\uvec^L \|_1}{2} + \|\uvec^H \|_{\infty} = \|\uvec\|_{\infty}$$

From  Lemma \ref{Entropy-basicbounds} and Corollary  \ref{corollary:Entropy-basicbounds} we have that
\begin{align}
(\|\vvec\|_1 - \|\vvec\|_{\infty} ) \max \left \{ 1, \log \left (  \frac{\|\vvec\|_1}{\|\vvec\|_1 - \|\vvec\|_{\infty} }  \right )  \right \} \le  I(\vvec) \le  2(\|\vvec\|_1 - \|\vvec\|_{\infty} ) \log \left ( \frac{2k\|\vvec\|_1}{\|\vvec\|_1 - \|\vvec\|_{\infty}}  \right ) 
\label{eq:may23}
\end{align}

Let $\alpha = \|\uvec \|_1 - \|\uvec^H\|_{\infty} - \frac{\|\uvec ^L\|_1}{2}$. Then, we have   

\begin{equation}
I(\uvec) = \alpha  \log \left ( \frac{\|\uvec\|_1}{\alpha} \right ) +
\left (\| \uvec \|_1 - \alpha \right)  \log \left ( \frac{\|\uvec\|_1}{ \| \uvec \|_1 - \alpha} \right )
\label{eq:may23-2}
`\end{equation}

Since 
$\|\uvec^H\|_{\infty} +\frac{\|\uvec^L\|_1}{2} \ge \frac{\|\uvec\|_1}{2}$ then $\alpha \leq \frac{\|\uvec\|_1}{2},$  from 
Proposition \ref{prop:basic-inequality} we have $\frac{\|\uvec\|_1 - \alpha}{\|\uvec\|_1} \log \frac{\|\uvec\|_1}{\|\uvec\|_1 - \alpha} \leq
 \frac{\alpha}{\|\uvec\|_1} \log \frac{\|\uvec\|_1}{\alpha}.$ This, together with (\ref{eq:may23-2}) implies that
%it follows from the previous expression
%and Proposition \ref{prop:basic-inequality}  that

\begin{equation}
  \alpha \log  \frac{\|\uvec\|_1}{\alpha}  \le I(\uvec) \le 
 2 \alpha \log  \frac{\|\uvec\|_1}{\alpha}.
\label{eq:may23-2-2}
\end{equation}
Now %we argue  that $\frac{Imp(\uvec)}{ 2} \le Imp(\vvec)$.
we note that $\|\uvec^H\|_{\infty} + \frac{\|\uvec^L\|_1}{2} > \|\vvec\|_{\infty},$ hence

\begin{equation}
\alpha = \|\uvec\|_1- \|\uvec^H\|_{\infty} - \frac{\|\uvec^L\|_1}{2} \leq 
\|\vvec\|_1 -   \|\vvec\|_{\infty}.
\label{eq:may23-4}
\end{equation}
We first focus on the proof of the left bound $\frac{1}{2} I_2(B)\le I(B)$. We split the analysis into two cases

\medskip

\noindent
{\em Case 1.} $\|\vvec\|_1 - \|\vvec\|_{\infty} \leq \frac{\|\vvec\|_1}{e}.$
Then
\begin{equation} \label{prop3-lb-1}
I(\uvec) \leq 2\alpha \log \frac{\|\vvec\|_1}{\alpha} \leq 2(\|\vvec\|_1 - \|\vvec\|_{\infty}) \log \frac{\|\vvec\|_1}{\|\vvec\|_1 - \|\vvec\|_{\infty}} 
\leq 2I(\vvec)
\end{equation}
where the first inequality is from (\ref{eq:may23-2-2}), the second inequality is from Proposition \ref{prop:increasing-new} and the last inequality is 
from (\ref{eq:may23}) (using the hypothesis at the basis of this case).

\medskip

\noindent
{\em Case 2.} $\|\vvec\|_1 - \|\vvec\|_{\infty} > \frac{\|\vvec\|_1}{e}.$ Then,  
\begin{equation}  \label{prop3-lb-2}
I(\vvec) \geq \left(\|\vvec\|_1 - \|\vvec\|_{\infty} \right) \log e \geq \|\vvec\|_1 =  \|\uvec\|_1 \geq I(\uvec)
\end{equation}
where the first inequality is from (\ref{eq:may23}) and the last inequality is because by definition of $I = I_{Ent}$, 
for a $2$ dimensional vector $\uvec$
we have $I(\uvec) \leq \|\uvec\|_1$.

\medskip

From (\ref{prop3-lb-1}) and (\ref{prop3-lb-1}) it immediately follows that 
$\frac{1}{2} I_2(B) = \frac{1}{2} I(\uvec) \le I(\vvec) =  I(B).$

\bigskip

We now focus on the right inequality and show that $I(\vvec) \leq 2 I(\uvec) + 4 (\|\vvec\|_1 - \|\vvec\|_{\infty}) \log k$, from which also the 
last inequality in the statement of the proposition immediately follows. 

%
%Thus,   it follows from the  the second inequality of 
%\ref{eq:may23-2} and the first one from (\ref{eq:may23})
%that 
%
%\begin{align} Imp(\uvec) \le 
% 2 \left (\|\uvec\|_1 - \|\uvec^H\|_{\infty} - \frac{\|\uvec^L\|_1}{2} \right ) \log \left ( \frac{\|\uvec\|_1}{\|\uvec\|_1 - \|\uvec^H\|_{\infty} - \frac{\|\uvec^L\|_1}{2}} \right ) \le \notag \\
%2 (\|\vvec\|_1 - \|\vvec\|_{\infty} ) \max \left \{ 1, \log \left (  \frac{\|\vvec\|_1}{\|\vvec\|_1 - \|\vvec\|_{\infty} }  \right )  \right \} \le  2 Imp(\vvec) 
%\end{align}
%The second inequality above holds in the case that the $\max$ expression is larger than $1$   due to inequality \ref{eq:may23-4} and Proposition \ref{prop:increasing-new}. In the case that 
%the $\max$ expression it holds because 
%the second expression is at most $\|\uvec\|_1$ and the third one is at
%least $\|\vvec\|_1$ since $\|\vvec\|_1 - \|\vvec\|_{\infty} \ge \|\vvec\|_1/2$.
%
%
%It remains to argue  that 
%$Imp(\vvec) \le \log k \cdot Imp( \uvec)$.

First, we observe that 
\begin{eqnarray}
2 \left (\|\uvec\|_1 - \|\uvec^H\|_{\infty} - \frac{\|\uvec^L\|_1}{2} \right ) &=&
\|\uvec^L\|_1 + 2( \|\uvec^H\|_1- \|\uvec^H\|_{\infty} )  \ge  (\|\uvec^L\|_1 + ( \|\uvec^H\|_1- \|\uvec^H\|_{\infty} ) 
\label{prop3-ub-0} \\
&=&
\|\uvec\|_1 - \|\uvec^H\|_{\infty} \ge  \|\vvec\|_1 - \|\vvec\|_{\infty} \label{prop3-ub-1}
\end{eqnarray}

Therefore, usign (\ref{prop3-ub-0})(\ref{prop3-ub-1})  we have 
$2 \left (\|\uvec\|_1 - \|\uvec^H\|_{\infty} - \frac{\|\uvec^L\|_1}{2} \right ) \leq \frac{2k\|\uvec\|_1}{e}$, hence
from the right inequality of (\ref{eq:may23}) we get 
\begin{eqnarray}
I(\vvec) &\le& 2 \left (\|\vvec\|_1 - \|\vvec\|_{\infty} \right) \log \left ( \frac{ 2 k \|\vvec\|_1}{\|\vvec\|_1 - \|\vvec\|_{\infty}} \right) \\
&\le& 4 \left (\|\uvec\|_1 - \|\uvec^H\|_{\infty} - \frac{\|\uvec^L\|_1}{2} \right ) 
\log \frac{2 k \|\vvec\|_1}{2 \left (\|\uvec\|_1 - \|\uvec^H\|_{\infty} - \frac{\|\uvec^L\|_1}{2} \right )} \\
&=& 4 \left (\|\uvec\|_1 - \|\uvec^H\|_{\infty} - \frac{\|\uvec^L\|_1}{2} \right ) 
\left(\log \frac{ \|\vvec\|_1}{ \left (\|\uvec\|_1 - \|\uvec^H\|_{\infty} - \frac{\|\uvec^L\|_1}{2} \right )} 
+ \log k \right) \\
&\leq& 2 I(\uvec) +  4 \left (\|\vvec\|_1 - \|\vvec\|_{\infty} \right)  \log k
\end{eqnarray}
where the last inequality follows from the left hand side of  (\ref{eq:may23-2}) together with the definition of $\alpha$ (for the first term)
and from $\|\uvec^H\|_{\infty} - \frac{\|\uvec^L\|_1}{2} \geq  \| \vvec \|_{\infty}$ and $\|\vvec\|_1 = \|\uvec\|_1$ (for the second term).

Since $B$ is $i$-pure, we have that $\|v\|_1 = \sum_{\wvec \in B} \|\wvec\|_1$ and
$\|v\|_{\infty} = \sum_{\wvec \in B} \|\wvec\|_{\infty}$. Then, we have
$$I(\vvec) \leq 2 I(\uvec) + 4 (\log k) \sum_{\wvec \in B} (\|\wvec\|_1 - \|\wvec\|_{\infty}) \leq 
2 I(\uvec) + 4 (\log k) \sum_{\wvec \in B} I(\wvec)$$
where the last inequality follows by Corollary \ref{corollary:Entropy-basicbounds}.

We have then shown  the right inequalities of the statement. The proof of the Proposition is complete.

%\begin{align} \|\vvec\|_1 - \|\vvec\|_{\infty} \le \|\uvec^H\|_1+ \|\uvec^L\|_1 - \|\uvec^H\|_{\infty} \le 2 ( 
%\|\uvec^H\|_1- \|\uvec^H\|_{\infty} +  (\|\uvec^L\|_1 )/2) = \notag \\ 
%2 (\|\uvec^H\|_1- \|\uvec^H\|_{\infty}+ \|\uvec^L\|_1 - \|\uvec^L\|_1/2) = 2(\|\uvec\|_1- \|\uvec^H\|_{\infty} - %\|\uvec^L\|_1/2)
%\end{align}

%The previous inequality together with 
%the second inequality of (\ref{eq:may23}) and the first one
%from \ref{eq:may23-2} allows to conclude that
% $$ Imp(\vvec) \le \log k \cdot Imp ( \uvec)$$

\end{proof}

\remove{

\begin{proof}
Let us assume w.l.o.g. that $B$ is 1-pure. Let $\vvec$ be the
vector corresponding to $B$, that is, 
$\vvec=\sum_{\vvec' \in B} \vvec'$.
 Moreover, let
$$ \uvec=\sum_{\vvec' \in B} \chi^{2C}(\vvec')$$,
$$ \uvec^L=\sum_{\vvec' \in B : \| \vvec' \|_{\infty} < \|\vvec\|_1/2 } \chi^{2C}(\vvec') $$
and  
$$\uvec^H=\sum_{\vvec' \in B : \| \vvec \|_{\infty} \geq \|\vvec\|_1/2 } \chi^{2C}(\vvec')$$

Note that $\uvec^L$  corresponds to the set of vectors 
for which the dominant component is  affected by transformation 
$\chi^{2C}$. It shall be clear that  $\| \vvec_1 \|=\| \uvec_1 \|$
and 
$$\| \vvec_{\infty}\| \le  \|\uvec^L \|_{\infty} + \|\uvec^H \|_{\infty} =
\frac{\|\uvec^L \|_1}{2} + \|\uvec^H \|_{\infty}$$

It follows from  Lemma \ref{Entropy-basicbounds} and Corollary  \ref{corollary:Entropy-basicbounds} that
\begin{align}
(\|\vvec\|_1 - \|\vvec\|_{\infty} ) \max \left \{ 1, \log \left (  \frac{\|\vvec\|_1}{\|\vvec\|_1 - \|\vvec\|_{\infty} }  \right )  \right \} \le  I(\vvec) \le  2(\|\vvec\|_1 - \|\vvec\|_{\infty} ) \log \left ( \frac{2k\|\vvec\|_1}{\|\vvec\|_1 - \|\vvec\|_{\infty}}  \right ) 
\label{eq:may23}
\end{align}
In addition,

\begin{align}
Imp(\uvec) = \left (\|\uvec\|_1 - \|\uvec^H\|_{\infty} - \frac{\|\uvec^L\|_1}{2} \right ) 
\log \left ( \frac{\|\uvec\|_1}{\|\uvec\|_1 - \|\uvec^H\|_{\infty} - \frac{\|\uvec^L\|_1}{2}} \right ) + \notag \\
\left (\|\uvec^H\|_{\infty} +\frac{\|\uvec^L\|_1}{2} \right ) \log \left ( \frac{\|\uvec\|_1}{ \|\uvec^H\|_{\infty} + \frac{\|\uvec^L\|_1}{2}} \right )
\label{eq:may23-2}
`\end{align}

Since
$\|\uvec^H\|_{\infty} +\frac{\|\uvec^L\|_1}{2} \ge \|\uvec\|_1/2$ it follows from the previous expression
and Proposition \ref{prop:basic-inequality}  that

\begin{align}
 \left (\|\uvec\|_1 - \|\uvec^H\|_{\infty} - \frac{\|\uvec^L\|_1}{2} \right ) \log \left ( \frac{\|\uvec\|_1}{\|\uvec\|_1 - \|\uvec^H\|_{\infty} - \frac{\|\uvec^L\|_1}{2}} \right ) \le Imp(\uvec) \le \notag \\
 2 \left (\|\uvec\|_1 - \|\uvec^H\|_{\infty} - \frac{\|\uvec^L\|_1}{2} \right ) \log \left ( \frac{\|\uvec\|_1}{\|\uvec\|_1 - \|\uvec^H\|_{\infty} - \frac{\|\uvec^L\|_1}{2}} \right ),
\label{eq:may23-2}
\end{align}

Now we argue  that $\frac{Imp(\uvec)}{ 2} \le Imp(\vvec)$.
First, we note that

\begin{equation}
\|\uvec\|_1- \|\uvec^H\|_{\infty} - \frac{\|\uvec^L\|_1}{2} =
\|\vvec\|_1- \|\uvec^H\|_{\infty} - \frac{\|\uvec^L\|_1}{2}   \le \|\vvec\|_1 - \|\vvec\|_{\infty}.
\label{eq:may23-4}
\end{equation}

Thus,   it follows from the  the second inequality of 
\ref{eq:may23-2} and the first one from (\ref{eq:may23})
that 

\begin{align} Imp(\uvec) \le 
 2 \left (\|\uvec\|_1 - \|\uvec^H\|_{\infty} - \frac{\|\uvec^L\|_1}{2} \right ) \log \left ( \frac{\|\uvec\|_1}{\|\uvec\|_1 - \|\uvec^H\|_{\infty} - \frac{\|\uvec^L\|_1}{2}} \right ) \le \notag \\
2 (\|\vvec\|_1 - \|\vvec\|_{\infty} ) \max \left \{ 1, \log \left (  \frac{\|\vvec\|_1}{\|\vvec\|_1 - \|\vvec\|_{\infty} }  \right )  \right \} \le  2 Imp(\vvec) 
\end{align}
The second inequality above holds in the case that the $\max$ expression is larger than $1$   due to inequality \ref{eq:may23-4} and Proposition \ref{prop:increasing-new}. In the case that 
the $\max$ expression it holds because 
the second expression is at most $\|\uvec\|_1$ and the third one is at
least $\|\vvec\|_1$ since $\|\vvec\|_1 - \|\vvec\|_{\infty} \ge \|\vvec\|_1/2$.

It remains to argue  that 
$Imp(\vvec) \le \log k \cdot Imp( \uvec)$.
For that we note  that

\begin{align}
2 \left (\|\uvec\|_1 - \|\uvec^H\|_{\infty} - \frac{\|\uvec^L\|_1}{2} \right ) =
\|\uvec^L\|_1 + 2( \|\uvec^H\|_1- \|\uvec^H\|_{\infty} ) \ge \notag \\
(\|\uvec^L\|_1 + ( \|\uvec^H\|_1- \|\uvec^H\|_{\infty} ) =
\|\uvec\|_1 - \|\uvec^H\|_{\infty} \ge  \|\vvec\|_1 - \|\vvec\|_{\infty}
\end{align}

Thus, 
$$Imp(\vvec) \le 2 \left (\|\uvec\|_1 - \|\uvec^H\|_{\infty} - \frac{\|\uvec^L\|_1}{2} \right ) \log \left ( \frac{k\|\uvec\|_1}{2 \left (\|\uvec\|_1 - \|\uvec^H\|_{\infty} - \frac{\|\uvec^L\|_1}{2} \right )}  \right ) \le
\log k \cdot Imp(\uvec) $$

where the first inequality from  (\ref{eq:may23})
and Proposition \ref{prop:increasing-new} and the
second inequality follows from (\ref{eq:may23-2}).
\end{proof}
}

%Now we prove that the impurity of the new partition is at most $O(\log k)$ times larger than that 
% of ${\cal P}$.
%We need to introduce some notation to account the impact
%of moving vectors in the transformation.

\section{Proof of Lemma \ref{lem:last-step-01Jul}}

\begin{proof}
Recall that  $s_{i,mix}$ denotes the total sum of the components of the 
$i$-dominant vectors from bucket $B_{mix}$ and that we assumed $dom(B_{mix})=1$.
We let $c_{i,mix} = s_{i,mix} - \sum_{\vvec \in V_{i,mix}} v_1$, i.e., the total sum of all but the first component of the 
$i$-dominant vectors in $B_{mix}$.
Moreover,  let
$s_{mix}= \sum_{i=1}^k s_{i,mix}$ and $c_{mix}= \sum_{i=1}^k c_{i,mix}$.
Furthermore, let $c_{i,p} = s_{i,p} - \sum_{\vvec \in B_i} v_i$, i.e., the total sum of the non-$i$ components 
of vectors in $B_i$.

It follows from Corollary \ref{cor:approx} that
\begin{equation} I(B_{mix})   \ge 
 c_{mix} \max \left \{ 1 , \log \left (  \frac{ s_{mix} }{c_{mix} } \right ) \right \} 
\label{eq:12may-1}
\end{equation}

%It is possible to prove that the righthand side of the inequality above is at at least
%$$ c_{1,m} + \sum_{i=2}^k S_{i,m}/2  \max \left \{ 1 , \log \left (  \frac{ S_{mix} }{c_{c_{1,m} + \sum_{i=2}^k %S_{i,m}/2 } } \right ) \right \} $$

Note that for $i>1$, we have $c_{i,mix} \ge (s_{i,mix}) /2$, for otherwise $i$ would not be the dominant component in $V_{i,mix}$. Thus,
we also have that
\begin{equation} I(B_{mix})  \ge c_{mix} \ge c_{1,mix} + \sum_{i=2}^k s_{i,mix}/2 
\label{eq:16may-1}
\end{equation}
Moreover,  if $c_{mix} < s_{mix} /e$, from (\ref{eq:12may-1}) we have that
\begin{equation} I(B_{mix})  \ge 
 c_{mix} \log \left ( \frac{ s_{mix} }{ c_{mix} } \right ) \ge
\left ( c_{1,mix} + \sum_{i=2}^k s_{i,mix}/2 \right ) \log \left ( \frac{ s_{mix} }{ c_{1,mix} + \sum_{i=2}^k s_{i,mix}/2 } \right ), 
\label{eq:16may-2}
\end{equation}
where the last inequality follows  from  (\ref{eq:16may-1}) and  Proposition \ref{prop:increasing-new}.

From Corollary  \ref{cor:approx} we have that 

\begin{equation} I(B_{i})  \ge c_{i,p} 
 \max \left \{ 1, \log \left (  \frac{ s_{i,p} }{c_{i,p }}  \right ) \right \} 
\label{eq:may17-2}
\end{equation}
for every $i$-pure bucket $B_i$

Now we derive upper bounds on $B'_{mix},B'_1,\ldots,B'_k$ and compare them
with the lower bounds given by the previous equations.

\medskip
\noindent  {\bf Bound on the mixed bucket $B'_{mix}$.}

Let $s_{i,mix}^L = \| V_{i,mix} \cap B'_{mix} \|_1,$ this is the total sum of the components of the 
$i$-dominant vectors in $B_{mix} \cap B'_{mix}.$   

Moreover, let $$c_{i,mix}^L = s_{i,mix}^L  - \sum_{\vvec \in V_{i,mix} \cap B'_{mix}} v_1,$$ i.e.,  
the total sum of all but the first components in the $i$-dominant vectors in $B_{mix} \cap B'_{mix}$.

Recall that $Y_i$ is the set of $i$-dominant vectors moved from $B_i$ to $B_{mix}$ in order to obtain
partition ${\cal P}'$.
Let $s_{i,p}^L = \|Y_i\|_1$ and let  $$t_{i,p}^L = s_{i,p}^L - \sum_{\vvec \in Y_i} v_1,$$ i.e., 
the total sum of all but the first components of vectors in $Y_i$.

In these notations, the superscript $L$ is used to remind the reader that these quantities refer to vectors with 'low' ratio.

From Corollary \ref{cor:approx} (with $i=1$) we have
 that
\begin{align} I (B'_{mix}) \le 2 \left (   \sum_{i=1}^k c_{i,mix}^L+t_{i,p}^L  \right ) 
\log  \left (  
\frac{2k  \left ( \sum_{i=1}^k s_{i,mix}^L+s_{i,p}^L \right ) }{ \sum_{i=1}^k c_{i,mix}^L+t_{i,p}^L  }   \right )  
\label{eq:17may-1}
\end{align}

Moreover, we have
\begin{equation}
\sum_{i=1}^k (c_{i,mix}^L+t_{i,p}^L ) \le c_{1,mix}^L+t_{1,p}^L +  \sum_{i=2}^k (s^L_{i,mix}+s^L_{i,p}) 
\label{eq:1}
\end{equation}

Thus, we have that 
\begin{eqnarray}
I(B'_{mix})  &\le& 2 \left ( c_{1,mix}^L+t_{1,p}^L +  \sum_{i=2}^k (s^L_{i,mix}+s^L_{i,p})    \right ) 
\log  \left (  \frac{2k  \left ( \sum_{i=1}^k s_{i,mix}^L+s_{i,p}^L \right ) }{ c_{1,mix}^L+t_{1,p}^L + 
\sum_{i=2}^k (s^L_{i,mix}+s^L_{i,p})   } \right )  \notag \\
&\le&   2\left ( c_{1,mix}^L+ \sum_{i=2}^k (s^L_{i,mix}+s^L_{i,p})    \right ) 
\log  \left (  \frac{4 k s_{mix} }{ c_{1,mix}^L +   \sum_{i=2}^k (s^L_{i,mix}+s^L_{i,p})  } \right ) \notag \\
& & ~~~~~~ +   2 t_{1,p}^L  \log  \left (  \frac{2k \sum_{i=1}^k (s^L_{i,mix}+s^L_{i,p})   }{ t_{1,p}^L +   \sum_{i=2}^k (s^L_{i,mix}+s^L_{i,p})  } \right )    \notag \\
&\le&  2 \left ( c_{1,mix}^L+ 2 \sum_{i=2}^k s_{i,mix}   \right ) 
\log  \left (  \frac{4 k s_{mix} }{ c_{1,mix}^L +  2 \sum_{i=2}^k s_{i,mix}  } \right )  \label{eq:b'-1} \\
& & ~~~~~~ +  2t_{1,p}^L  \log  \left (  \frac{ 2k \sum_{i=1}^k (s^L_{i,mix}+s^L_{i,p})   }{ t_{1,p}^L +   \sum_{i=2}^k (s^L_{i,mix}+s^L_{i,p})  } \right )
\label{eq:2}, 
\end{eqnarray}
where the first inequality follows from inequality (\ref{eq:17may-1}), Proposition \ref{prop:increasing-new} and inequality (\ref{eq:1}); the second inequality follows from $s_{i,p}^L \le s_{i,mix}$ and the 
third inequality from $s_{i,p}^L \le s_{i,mix}$ together with  Proposition \ref{prop:increasing-new}.

We prove that the expression in (\ref{eq:b'-1})-(\ref{eq:2}) is at most an $O( \log k)$ factor of $I(B_{mix})+I(B_1)$.
First, we consider the  term in  (\ref{eq:b'-1}) that  we denote  by $\alpha$. %\todo{Na: to be double checked}

If $c_{mix} \ge s_{mix} /e$ then 
\begin{eqnarray}
\alpha &=& \left ( c_{1,mix}^L+ 2 \sum_{i=2}^k s_{i,mix}   \right ) 
\log  \left (  \frac{2 k s_{mix} }{ c_{1,mix}^L +  2 \sum_{i=2}^k s_{i,mix}  } \right )  \notag \\
&\le& \left ( 2 c_{1,mix} + 2 \sum_{i=2}^k s_{i,mix}   \right ) 
\log  \left (  \frac{2 k s_{mix} }{ 2 c_{1,mix} +  2 \sum_{i=2}^k s_{i,mix}  } \right ) \notag\\
&\le&
\left ( 2 c_{1,mix} + 2 \sum_{i=2}^k s_{i,mix}   \right ) 
\log  \left (  \frac{2 k s_{mix} }{ 2 c_{1,mix} +  2 \sum_{i=2}^k c_{i,mix}  } \right ) \notag\\
&=& \left ( 2 c_{1,mix} + 2 \sum_{i=2}^k s_{i,mix}   \right ) 
\log  \left (  \frac{2 k s_{mix} }{ 2 c_{mix}} \right ) \notag\\
&\le&
 \left ( 2 c_{1,mix}  + 2\sum_{i=2}^k s_{i,mix}  \right )  \log (k e) \leq 4  \left ( c_{1,mix}  + \sum_{i=2}^k \frac{s_{i,mix}}{2}  \right )  \log (k e)
\end{eqnarray}
which is at a $O(\log k)$ factor from the lower bound on $I(B_{mix})$
given by inequality (\ref{eq:16may-1}).

On the other hand, if 
$c_{mix} < s_{mix}/e$, then the first term of (\ref{eq:2})
is at $O( \log k)$ factor from lower bound given by  inequality
(\ref{eq:16may-2}), in fact we have
\begin{eqnarray}
\alpha &=& \left ( c_{1,mix}^L+ 2 \sum_{i=2}^k s_{i,mix}   \right ) 
\log  \left (  \frac{2 k s_{mix} }{ c_{1,mix}^L +  2 \sum_{i=2}^k s_{i,mix}  } \right )  \notag \\
&\le& \left ( c_{1,mix} + 2 \sum_{i=2}^k s_{i,mix}   \right ) 
\log  \left (  \frac{2 k s_{mix} }{ c_{1,mix} +  2 \sum_{i=2}^k s_{i,mix}  } \right )  \notag \\
%&\le& 4\left( c_{1,mix} +  \sum_{i=2}^k \frac{s_{i,mix}}{2}   \right ) 
%\log  \left (  \frac{2 k s_{mix} }{ c_{1,mix} +  2 \sum_{i=2}^k s_{i,mix}  } \right )  \notag \\
&\le& 4\left( c_{1,mix} +  \sum_{i=2}^k \frac{s_{i,mix}}{2}   \right ) 
\log  \left (  \frac{2 k s_{mix} }{ c_{1,mix} +   \sum_{i=2}^k \frac{s_{i,mix}}{2}  } \right )  \notag,
\end{eqnarray}
where the first inequality follows from Proposition \ref{prop:increasing-new}.

% and the second and third inequalities
%are due to first increasing the quantity in the outside parenthesis and then by decreasing the quantity at the %denominator of the 
%expression inside the $\log$ term.  
%  we conclude that 
%it is at a $\log k$ factor from $Imp (B_{mix}).$

\medskip

Now, we turn to the second term of (\ref{eq:2}), which we will denote here by $\beta$.
We have that 
\begin{eqnarray*}
\beta &=& t_{1,p}^L  \log  \left (  \frac{2k (s_{1,mix}^L+s_{1,p}^L ) + 2k ( \sum_{i=2}^k s^L_{i,mix}+s^L_{i,p}) }
{ t_{1,p}^L +  \sum_{i=2}^k (s^L_{i,mix}+s^L_{i,p})  } \right ) \\
&\le&  t_{1,p}^L  \log  \left ( \max  \left \{ \frac{2k (s_{1,mix}^L+s_{1,p}^L ) }{t_{1,p}^L},2k \right  \} \right ) \\
&\le&  t_{1,p}^L  \log  \left ( \max  \left \{ \frac{4k \cdot s_{1,p} }{t_{1,p}^L},2k \right  \} \right ) \\
&\le& 
c_{1,p}  \log  \left ( \max  \left \{ \frac{4k \cdot s_{1,p}  }{c_{1,p}},2k \right  \} \right ),
\end{eqnarray*}
where the second inequality holds because  $s^L_{1,mix} \le   s_{1,p}$. 
Moreover, since $t_{1,p}^L \le c_{1,p}$ the last inequality holds due to Proposition \ref{prop:increasing-new}.
It is now easy to see that the quantity in
the righthand side of  the last inequality 
is at a $O(\log k)$ factor from the lower bound on the impurity of $B_1$ given by inequality \ref{eq:may17-2}.

We have completed the proof that $I(B'_{mix}) = O(\log k) (I(B_{mix}) + I(B_1))$ as desired.

\bigskip

\noindent  {\bf Bound on $i$-pure buckets.}
Recall that $X_i$ is the set of vectors moved from $B_{mix}$ to $B_i$ in order to obtain partition
${\cal P}'$.
Let $s^H_{i,mix}$ be the total sum of the components of all vectors in 
 $X_i,$ i.e., $s^H_{i,mix} = \| \sum_{\vvec \in X_i} \vvec \|_1$. Let 
$d^H_{i,mix}$ be the total sum of all but the $i$th components over all vectors in $X_i$, i.e., 
$d^H_{i,mix} = s^H_{i,mix} - \sum_{\vvec \in X_i} v_i.$ 
In addition, let $s^H_{i,p}$ be the total sum of the components of the vectors in the set $B_i \setminus Y_i,$
i.e., $s^H_{i,p} = \|\sum_{\vvec \in B_i \setminus Y_i} \vvec \|_1$. Finally, let  
$c^H_{i,p}$ be the total sum of all but the $i$th component over all the vectors in  $B_i \setminus Y_i$ that is 
$c^H_{i,p} = s^H_{i,p} - \sum_{\vvec \in B_i \setminus Y_i} v_i$ and
let $c^L_{i,p} $ be the total sum of all but the $i$th component over all the vectors in  $Y_i$ that is 
$c^L_{i,p}  = s^H_{i,p} - \| Y_i \|_{\infty}$.

\medskip

\noindent
{\bf Case 1.)} $s_{i,p} \ge s_{i,mix}$.
From Lemma \ref{Entropy-basicbounds} we have  that

\begin{equation} \label{eq:12may-0}
 I(B'_i) \le 2(c^H_{i,p}+ d^H_{i,mix} ) \log \left (   \frac{k \cdot ( s^H_{i,mix}  + s^H_{i,p} )}{c^H_{i,p}+ d^H_{i,mix} } \right )  \le 
2 (c^H_{i,p}+ d^H_{i,mix} ) \log \left (   \frac{2k \cdot s_{i,p}}{c^H_{i,p}+ d^H_{i,mix}} \right ) 
\end{equation}

%where the first inequality holds because $C^H_{i,mx}$ is an upper bound
%on the number of samples from $X_i$ that are not in class $i$. 

Let  $\vvec$ be the first vector of bucket $B_i$  that is not moved to $B_{mix}$ and 
let $s=\|\vvec\|_1$. In particular,  $\vvec$ is the vector with the smallest ratio in $B_{i}$ among those that are not moved to $B_{mix}$.
Let $c = s - v_i$.
Recall definition of $r_i$ in the construction of ${\cal P}'$. We have  that
 \begin{equation} \label{eq:pure-case1-1} 
 \frac{s^H_{i,mix} }{d^H_{i,mix} } > r_i \ge  \frac{s^L_{i,p} + s }{c^L_{i,p} + c} .
 \end{equation}
and  
\begin{equation} \label{eq:pure-case1-2}
s^H_{i,mix} \le s_{i,mix} \le s^L_{i,p} + s,
\end{equation}
where the second inequality follows by the definition of $X_i$ and $Y_i$ 
under the standing assumption 
$s_{i,p} \ge s_{i,mix}.$  

Thus, from (\ref{eq:pure-case1-1}) and (\ref{eq:pure-case1-2})  we conclude that  $d^H_{i,mix} \le c^L_{i,p} + c \le c_{i,p},$ hence, 
$c^H_{i,p}+ d^H_{i,mix} \leq 2 c_{i,p}.$

Therefore, from (\ref{eq:12may-0}) and Proposition \ref{prop:increasing-new} we have 
\begin{equation}
\label{eq:proof-main-1}
 I(B'_i) \le 2(c^H_{i,p}+ d^H_{i,mix} ) \log \left (   \frac{2k \cdot s_{i,p}}{c^H_{i,p}+ d^H_{i,mix}} \right )
\le 4c_{i,p} \log \left (   \frac{k \cdot s_{i,p}}{c_{i,p}} \right ).
\end{equation}

%This together with the lower bound on $I(B_i)$ in (\ref{eq:may17-2}) implies that 
% $I(B'_i) = O( \log k) I(B_i).$

\bigskip

\noindent
{\bf Case 2.)} $s_{i,p} < s_{i,mix}$.

\medskip

{\bf subcase 2.1)} $i=1$.
From Lemma \ref{Entropy-basicbounds},  we have
$$I(B'_1) \le  2(c_{1,p}^H + c_{1,mix}^H) 
\log \left ( \frac{k(s_{1,p}^H  + s_{1,mix}^H)}{c_{1,p}^H  + c_{1,mix}^H} \right )
\le   2(c_{1,p}^H + c_{1,mix}^H) 
\log \left ( \frac{2k \cdot s_{mix}}{c_{1,p}^H  + c_{1,mix}^H} \right )$$  

Let $\vvec$ be the first vector of $B_{mix}$  that is moved to $B_1$.
Let $s= \|\vvec\|_1$ and let  $c = s - v_1.$
By construction we have that
$ s^H_{1,p} \le s_{1,p} \le s^L_{1,mix} + s $.

Moreover, 
$$ \frac{s_{1,p}^H}{c_{1,p}^H} \ge r_1 \ge \frac{s^L_{1,mix} + s}{c^L_{1,mix} + c},$$
hence $c_{1,p}^H \le c^L_{1,mix} + c $.

Therefore, $c_{1,p}^H + c_{1,mix}^H \leq 2 c_{1,mix}$. Thus, 
by the above inequality on $I(B'_1)$ and Proposition \ref{prop:increasing-new} we have

\begin{equation}
\label{eq:proof-main-2}
I(B'_1) \le  4 c_{1,mix} 
\log \left ( \frac{2k \cdot s_{mix}}{ c_{1,mix}} \right ) \le 
 4 c_{mix} 
\log \left ( \frac{2k \cdot s_{mix}}{ c_{mix}} \right ).
\end{equation}
%  \todo{Double Check}

%By (\ref{eq:16may-1}) and (\ref{eq:16may-2}), the last inequality implies that 
%$I(B'_1) = O(\log k) I(B_{mix}).$

%In particular, if $c_{1, mix} + \sum_{j=2}^k \frac{s_{i,mix}}{2} \leq \frac{k s_{mix}}{e}$
%by Proposition \ref{prop:increasing-new} we have 
%$I(B'_1) \le 4 (c_{1, mix} + \sum_{j=2}^k \frac{s_{i,mix}}{2}) \log \frac{k s_{mix}}{c_{1, mix} + \sum_{j=2}^k \frac{s_{i,mix}}{2}} = 
%O(\log k) I(B_{mix})$ where the last inequality follows from 

\bigskip

{\bf subcase 2.2} $i>1$. 

In this case the bucket $B'_i$ is  exactly the set $X_i$.
Thus,  it follows from the subadditivity of $I$ that
\begin{equation}
\label{eq:proof-main-3}
I(B'_i)=I(X_i) \le I(V_{i,mix})
\end{equation}

\remove{by Corollary \ref{corollary:Entropy-basicbounds} we have 
\begin{equation}
\label{eq:proof-main-3}
 I(B'_i) \le 2d^H_{i,mix} 
\log \left (   \frac{2 k \cdot s^H_{i,mix}}{d^H_{i,mix}} \right ) 
\le  2 s^H_{i,mix} 
\log \left (   \frac{2k \cdot s^H_{i,mix}}{s^H_{i,mix}} \right ) = s^H_{i,mix} \log 2k \le   s_{i,mix} \log 2k 
\end{equation}}

Thus, by aggregating the upper bounds given by
Equations 
(\ref{eq:proof-main-1}), (\ref{eq:proof-main-2}) and (\ref{eq:proof-main-3}),
we get that
$$ \sum_{i=1}^k I(B'_i) \le \sum_{i=1}^k 4c_{i,p} \log \left (   \frac{k \cdot s_{i,p}}{c_{i,p}} \right )   +  4 c_{mix} \log \left ( \frac{2k \cdot s_{mix}}{ c_{mix}} \right ) +  \sum_{i=2}^k
    I(V_{i,mix}) $$
The first term is at most $O(\log k) \sum_{i=1}^k I(B_i)$ due to the lower bound
in (\ref{eq:may17-2}). The second term is $O(\log k) I(B_{mix})$  due to the  lower bounds
in (\ref{eq:16may-1}) and (\ref{eq:16may-2}). Finally, the last term is 
at most $I(B_{mix})$ due to the subadditivity of $I$.

\end{proof}

\end{document}